\newtheorem{definition}{Definition} 
\newtheorem{theorem}{Theorem}
    \newtheorem{prop}{Proposition}[section]
    \newtheorem{lemma}{Lemma}[section]
    \numberwithin{equation}{section}
\begin{document}

\title{Foundational aspects of spinor structures and exotic spinors} 

\author{J. M. Hoff da Silva} 
\email{julio.hoff@unesp.br}
\affiliation{Departamento de F\'isica, Universidade
Estadual Paulista, UNESP, Av. Dr. Ariberto Pereira da Cunha, 333, Guaratinguet\'a, SP,
Brazil.}

\begin{abstract}
Spinors are mathematical objects susceptible to the spacetime characteristics upon which they are defined. Not all spacetimes admit spinor structure; when it does, it may have more than one spinor structure, depending on topological properties. When more than one nonequivalent spinor structure is allowed in a given spacetime, the spinors resulting from the extra structures are called exotic. In this review, we revisit the topological conditions driving the discussion about the spacetime characteristics leading to the existence and (non)uniqueness of spinor structures in a relatively comprehensive manner, accounting for step-to-step demonstrations. In the sequel, we delve into the topologically corrected Dirac operator, explicitly obtaining it and exploring the physical consequences encoded in the exotic spinor dynamics. Finally, we overview early and recent achievements in the area, pointing out possible directions within this research program.    

\vspace{.35cm}
\begin{flushright}{\it In memory of Victor Hugo Assis Hoff Brait}\end{flushright}

\normalsize{\tableofcontents}
\end{abstract}		

\maketitle

\newpage

\section{Introduction}

Since the early work of Dirac \cite{dirac}, spinors (as the core construction of fermions) are frequently used to describe matter fields. The current understanding links spinors as objects carrying irreducible representations of the Spin group and entering as expansion coefficients of the respective quantum field \cite{wwei}. From the classical point of view, spinors are quite rich mathematical structures whose very existence is amalgamated with the underlying manifold structure. Spinors are sections of spinor bundles, structures that may or may not exist over a given manifold. When they exists, it can be non-unique, giving rise to nonequivalent structures that, in turn, have nonequivalent sections, the so-called exotic spinors. The construction of spinor structures via special lifts regards obstruction theory, more specifically, the Stiefel-Whitney classes. Such behavior of spinors attaches them to the manifold topology so that the spin connection is changed depending on the chosen section dynamics to be described.  

The above-mentioned aspect of spinor fields is more profound than it may sound at first sight. In fact, it is possible to implement nontrivial aspects of the base manifold in tensor field dynamics using appropriate dynamical equations of motion boundary conditions. Nevertheless, the situation of spinorial fields is much more involved. It reveals a genuinely umbilical relation with the manifold topology, disclosing prominent physical possibilities whose study sheds light on the topology (and vice versa).                

When regarded as an additional ingredient in the analysis of a physical concept, topology sometimes points to the necessity of its triviality, but from time to time, what calls attention to a given physical investigation are precisely the topological nontrivial aspects. Without any pretension of exhausting the theme in physics, it is pertinent to recall that the shape of the universe on a large scale is a long-standing issue \cite{haa}. Since the Hubble horizon is bound to the cosmological data, different topological setups cannot be ruled out by the present data. Moreover, a statistical treatment indicates that the cosmic microwave background temperature gradient can be better explained under the auspices of a multiply connected spacetime setup \cite{muni}. On a different level, nontrivial topology is also the target of study and analysis in the realm of condensed matter physics (see, for instance, \cite{mui}). 

Given the spinor theory richness, especially when considering base manifold peculiarities as nontrivial topology, it is desirable to give a complete account of the issue from the mathematical perspective, after which the physical consequences can be derived and appreciated. This is the purpose of this review, namely to give a step-to-step exposition of the spinorial mathematical theory underlining the definition and possibilities for different spinor fields, with an emphasis on their topological features, realize a complete derivation of exotic spinors as a consequence of the previous development, and evince some relevant points from the physical point of view. The idea is to focus mathematical preliminaries in the foundational part, presenting it in a structured manner, where the details of proofs (along with the chosen subtopics) rest on the exposition's idiosyncratic aspects. As well known, spinor theory may be presented in quite a complete mathematical fashion with an explicit relation with Clifford algebra via K\"ahler-Atiyah bundle (see Ref. \cite{dgf} for an excellent and comprehensive presentation of this perspective). We shall take a topological inclined exposition, which still requires some familiarity with fibre bundle theory, at the level of Ref. \cite{nak}. The impact of diffeomorphisms on spinors and spinorial structures is also somewhat analyzed. The physical consequences are then delineated as a consequence of the derived complete Dirac operator with a topological correction. In this part of the review, some new results are presented, such as the possibility of a quasi-particle behavior (in this context), a dispersion relation deviation, and additional topological correction in the effective vertex function.     

This paper is planned as follows: aimed at presenting a self-contained mathematical fundamental part (already starting from the fibre bundle concept), in Section II, we explore in detail the conditions under which spinor structures can be defined. The main line of exposition is to evince a necessary and sufficient condition for a principal bundle to support the spinor structure and then subvert it, in some precise sense, arriving at a clear topological obstruction. Going further, we focus on the uniqueness problem in a similar way, i.e., developing explicit criteria for uniqueness and further abandoning it from the topological point of view. This way constructed, the presentation aims to clarify the deep relation between spinor structures and the base manifold accommodating them. Section III brings a detailed account of the behavior of spinors under diffeomorphisms. We start reviewing in what precise sense it can be called trivial. In the sequel, we investigate the obstruction to this triviality in the same spirit as the previous section of the presentation. After an extensive analysis, the conclusion appears that in multiply-connected base manifolds, diffeomorphisms that are not connected to the identity permutes among spins structures. All the necessary demonstrations are presented in detail in this mathematically inclined part of the paper. Section IV is devoted to properly obtaining exotic spinors initially. We start by studying local sections defined as open sets belonging to the manifold covering. After some manipulation, a relation between usual spinors and their exotic counterpart is obtained at the same open set, allowing for a more general investigation. In dealing with the spinor dynamics, we get the complete Dirac operator explicit form without skipping the subtleties in deriving the topological (part of the) spin connection. Moving forward, we make explicit several relevant consequences of the topological correction to the exotic dynamics. Among them, we explore the exotic spinor dispersion relation and the spinorial current Gordon decomposition as physical consequences.      

We have postponed an overview of the field to the last section, correlating it here and there with the main text presentation and pointing out some directions for future research.

\section{Foundations - Part I: General mathematical structure}

This mathematically inclined section shall be split into two parts of study: the existence and (non)uniqueness of spinor structures. These aspects underline the consequences of further physical analysis.    

\subsection{Existence of spinor structures}

Consider $P_G$ a principal bundle with structure group given by (the topological group) $G$ and similarly for $P_\Gamma$. The spinor bundle and its relation with the base manifold topology $\mathcal{M}$ is established through the consideration of a continuous homomorphism $\rho:\Gamma\rightarrow G$ with kernel given by a discrete group $D\subset C(\Gamma)$, where $C(\Gamma)$ stands for the center of $\Gamma$. Additionally, consider a fibre map $\eta:P_\Gamma\rightarrow P_G$ commuting with (right) action of the structure group, that is, for $z_\Gamma\in P_\Gamma$ and $\gamma \in \Gamma$, we have 
\begin{eqnarray}    
\eta&:&\left. P_\Gamma\rightarrow P_G\right.\nonumber\\&&
\left. z_\Gamma\cdot\gamma\mapsto \eta(z_\Gamma\cdot\gamma)=\eta(z_\Gamma)\cdot \rho(\gamma).\right.
\end{eqnarray} The bundle $P_\Gamma$ endowed with $\eta$ as defined above is called a $\Gamma-$structure in $P_G$. Details apart, the reader could well prompt recognize $G$ as $SO(3)$, $\Gamma$ as $Spin(1,3)$ and $D=\mathbb{Z}_2$, and report directly the analysis to one-half representations of Lorentz group. It is convenient, however, to keep the study in general terms. In any case, we aim to explicitly and detailed review necessary and sufficient conditions under which a $\Gamma-$structure exists and how many nonequivalent $\Gamma-$structures are allowed.  

Let $\{U_\alpha\}$ be a simple open covering of $\mathcal{M}$ and take a local section system $\sigma^G_\alpha:U_\alpha\rightarrow P_G$. Intuitively, a path in the base manifold engenders a path in the bundle itself, so that one faces the necessity of the so-called transition functions $g_{\alpha\beta}:U_\alpha\cap U_\beta\rightarrow G$ defined, for $x\in U_\alpha\cap U_\beta$, by $\sigma^G_\beta(x)=\sigma^G_\alpha(x)\cdot g_{\alpha\beta}(x)$. Moreover, transition functions obey the consistency relation $g_{\alpha\beta}(x)g_{\beta\mu}(x)=g_{\alpha\mu}(x)$ for $x\in U_\alpha\cap U_\beta\cap U_\mu$, in terms of which one has
\begin{equation}
\sigma^G_\mu=\sigma^G_\beta g_{\beta\mu}\;\; \therefore \;\; \sigma^G_\mu=\sigma^G_\alpha g_{\alpha\beta}g_{\beta\mu},
\end{equation} for $x\in U_\alpha\cap U_\beta$, implying $\sigma^G_\mu=\sigma^G_\alpha g_{\alpha\mu}$, by employing the consistency relation, without which the bundle local pieces cannot be properly glued together. 

Let us elaborate on the necessity of the so-called lifting functions (hereafter denoted by $\gamma_{\alpha\beta}$). 

\begin{lemma}\label{adm}  
$P_G$ admits a $\Gamma-$structure if, and only if, there are continuous maps $\gamma_{\alpha\beta}: U_\alpha\cap U_\beta\rightarrow \Gamma$ such that i) $\gamma_{\alpha\beta}(x)\gamma_{\beta\mu}(x)=\gamma_{\alpha\mu}(x)$ for $x\in U_\alpha\cap U_\beta\cap U_\mu$ and ii) $\rho\cdot\gamma_{\alpha\beta}=g_{\alpha\beta}$. 
\end{lemma}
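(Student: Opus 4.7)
The plan is to prove the two directions of the biconditional using the standard dictionary between principal bundles and their \v{C}ech cocycles. In the forward direction, I will extract the $\gamma_{\alpha\beta}$ as transition functions of $P_\Gamma$ relative to carefully chosen local sections; in the backward direction, I will use the cocycle $\gamma_{\alpha\beta}$ to explicitly reconstruct both $P_\Gamma$ and the fibre map $\eta$ by a gluing procedure. Conditions (i) and (ii) will then play two complementary roles --- associativity of the gluing on overlaps, and compatibility of $\eta$ across those overlaps.

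\textbf{Forward direction.} Assuming a $\Gamma$-structure $(P_\Gamma,\eta)$ is given, I will first lift each $\sigma^G_\alpha:U_\alpha\to P_G$ through $\eta$ to a section $\sigma^\Gamma_\alpha:U_\alpha\to P_\Gamma$ satisfying $\eta\circ\sigma^\Gamma_\alpha=\sigma^G_\alpha$. The key point to exploit is that the fibre of $\eta$ over any $z_G\in P_G$ is the $D$-orbit of any preimage (because $\ker\rho=D$), so on a trivialising patch $\eta$ is a local homeomorphism with discrete fibre $D$; the simple-covering hypothesis forces each $U_\alpha$ to be simply connected and classical covering-space theory supplies the desired lift. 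Defining $\gamma_{\alpha\beta}$ by $\sigma^\Gamma_\beta(x)=\sigma^\Gamma_\alpha(x)\cdot\gamma_{\alpha\beta}(x)$, condition (i) is automatic from the freeness and transitivity of the $\Gamma$-action on fibres, while equivariance of $\eta$ produces
\[
\sigma^G_\beta(x)\;=\;\eta\bigl(\sigma^\Gamma_\alpha(x)\bigr)\cdot\rho(\gamma_{\alpha\beta}(x))\;=\;\sigma^G_\alpha(x)\cdot\rho(\gamma_{\alpha\beta}(x)),
\]
which, compared with the defining relation for $g_{\alpha\beta}$, yields (ii).

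\textbf{Backward direction.} Given the $\gamma_{\alpha\beta}$ satisfying (i) and (ii), I will construct $P_\Gamma$ as the cocycle quotient $\bigsqcup_\alpha(U_\alpha\times\Gamma)/{\sim}$, with identifications across patches implemented by the $\gamma_{\alpha\beta}$; condition (i) is precisely what is needed for $\sim$ to be transitive, and therefore for the quotient to carry the structure of a principal $\Gamma$-bundle over $\mathcal{M}$. I then define $\eta([(x,\gamma)]_\alpha):=\sigma^G_\alpha(x)\cdot\rho(\gamma)$. Well-definedness across overlaps will be exactly the content of (ii): changing the patch representative from $\alpha$ to $\beta$ multiplies $\gamma$ by $\gamma_{\beta\alpha}(x)$ and multiplies $\sigma^G_\alpha(x)$ by $g_{\alpha\beta}(x)$, and the two effects cancel because $\rho\circ\gamma_{\alpha\beta}=g_{\alpha\beta}$. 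Equivariance of $\eta$ with respect to $\rho$ then follows immediately from its very definition.

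The principal obstacle sits in the forward direction and is point-set topological: one must verify that $\eta$ restricted to a trivialising patch is indeed a covering in a manner that permits the lift of an arbitrary continuous section of $P_G$, and this is precisely where the simple-covering assumption becomes indispensable. The remaining work is essentially bookkeeping with cocycle conventions, where some care is needed to keep the placement of $\gamma_{\alpha\beta}$ versus $\gamma_{\beta\alpha}$ consistent in the gluing identifications so that the diagram relating $\eta$ on overlapping representatives actually commutes.
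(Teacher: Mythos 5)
Your proof is correct and, in both directions, diverges usefully from the paper's own argument. In the forward direction the paper goes \emph{downhill}: it starts from local sections $\sigma^\Gamma_\alpha$ of $P_\Gamma$, defines $\sigma^G_\alpha:=\eta\circ\sigma^\Gamma_\alpha$, and reads $g_{\alpha\beta}$ off as $\rho(\gamma_{\alpha\beta})$. You go \emph{uphill}: you keep the pre-given $\sigma^G_\alpha$ and $g_{\alpha\beta}$ fixed and lift $\sigma^G_\alpha$ through $\eta$ using the fact that, over a trivialising patch, $\eta$ is a $D$-covering and the simple-covering hypothesis supplies simple connectivity. Your route is slightly heavier topologically (you genuinely invoke covering-space lifting, whereas the paper uses only functoriality of $\eta$), but it buys faithfulness to the stated $g_{\alpha\beta}$, whereas the paper's tacitly replaces the given trivialisation of $P_G$ by a new one pushed down from $P_\Gamma$; the two are cohomologous so the conclusion is the same, but your version matches the lemma's wording more literally. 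In the backward direction the paper writes $\sigma^\Gamma_\beta=\sigma^\Gamma_\alpha\gamma_{\alpha\beta}$ before having produced $P_\Gamma$, then postulates a map $\kappa$ on sections and verifies the equivariance only against elements of the form $\gamma_{\alpha\beta}$; it is really a gesture toward the cocycle reconstruction. You carry that reconstruction out explicitly: the disjoint-union quotient $\bigsqcup_\alpha (U_\alpha\times\Gamma)/\!\sim$ with (i) providing transitivity of $\sim$, and the formula $\eta([(x,\gamma)]_\alpha)=\sigma^G_\alpha(x)\cdot\rho(\gamma)$ with (ii) providing well-definedness across overlaps, after which full $\Gamma$-equivariance is immediate from the definition. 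This is the cleaner and more complete version of the same idea.
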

\begin{proof}
($\Rightarrow$) If $P_G$ admits a $\Gamma-$structure, then there is a map $\eta:P_\Gamma\rightarrow P_G$ (as previously mentioned). Take $\sigma_\alpha^\Gamma$ as a local section system for $P^\Gamma$ and implement the $\eta$ map by $\sigma_\alpha^G=\eta\cdot \sigma_\alpha^\Gamma$. Besides, consider $\gamma_{\alpha\beta}$ as the transition functions in $P_\Gamma$. Hence, for $x\in U_\alpha\cap U_\beta$, we have 
\begin{eqnarray} 
\sigma^G_\beta(x)=\eta(\sigma^{\Gamma}_\beta(x))\underbrace{=}_{i)}\eta(\sigma^\Gamma_\alpha(x)\gamma_{\alpha\beta}(x))=\eta(\sigma^\Gamma_\alpha(x))\rho(\gamma_{\alpha\beta}(x))\underbrace{=}_{ii)}\sigma^G_\alpha(x)g_{\alpha\beta}(x). 
\end{eqnarray}
($\Leftarrow$) Assuming the validity of $i)$ and $ii)$, we have $\sigma^G_\beta=\sigma^G_{\alpha}g_{\alpha\beta}=\sigma^G_\alpha\rho(\gamma_{\alpha\beta})$. On the other hand $\sigma^\Gamma_\beta=\sigma^\Gamma_\alpha\gamma_{\alpha\beta}$. Now set $\kappa:P^\Gamma\rightarrow P^G$ in such a way that $\kappa(\sigma^\Gamma_\beta)=\sigma^G_\beta$. Therefore $\kappa(\sigma^\Gamma_\beta)=\kappa(\sigma^\Gamma_\alpha\gamma_{\alpha\beta})=\sigma^G_\alpha\rho(\gamma_{\alpha\beta})$ and  then $\kappa(\sigma^\Gamma_\alpha\gamma_{\alpha\beta})=\kappa(\sigma^\Gamma_\alpha)\rho(\gamma_{\alpha\beta})$, rendering $\kappa$ a $\eta$-like map.  
\end{proof}

In a broad brush, given a map $f:X\rightarrow Y$ and another one $g:Z\rightarrow Y$, a lifting from $X$ to $Z$ is a map $h:X\rightarrow Z$ such that $f=g\circ h$. In the present case we say that $g_{\alpha\beta}$ are lifted to $\gamma_{\alpha\beta}$ by means of the following diagram:
\begin{equation}
\xymatrix{
	  &\Gamma \ar[rd]^\rho \\  
	U_\alpha\cap U_\beta \ar[ru]^{\gamma_{\alpha\beta}} \ar[rr]_{g_{\alpha\beta}} & & G \nonumber 
}
\end{equation} 

Now consider the following construction. Take $x\in U_\alpha\cap U_\beta\cap U_\mu\neq\emptyset$ and set $p_{\alpha\beta\mu}(x)=\gamma_{\beta\mu}(x)\gamma_{\alpha\mu}^{-1}(x)\gamma_{\alpha\beta}(x)$. Obviously, if a $\Gamma-$structure is valid, this element is the identity in $\Gamma$. Our point here is to envisage topological obstructions to the existence of a $\Gamma-$structure. It is straightforward to see that $\rho (p_{\alpha\beta\mu}(x))=\rho(\gamma_{\beta\mu}(x)\gamma_{\alpha\mu}^{-1}(x)\gamma_{\alpha\beta}(x))=g_{\beta\mu}(x)g_{\alpha\mu}^{-1}(x)g_{\alpha\beta}(x)$ and, recalling that $(\sigma^G_\alpha)^{-1}\sigma^G_\beta=g_{\alpha\beta}$ (and therefore $g_{\beta\alpha}=g_{\alpha\beta}^{-1}$), $\rho(p_{\alpha\beta\mu}(x))=g_{\beta\mu}(x)g_{\mu\alpha}(x)g_{\alpha\beta}(x)=e$ via consistency conditions. Thus $p_{\alpha\beta\mu}(x)\in \ker(\rho)=D$. Since $D$ is a discrete group, $p_{\alpha\beta\mu}$ is constant, defining a 2-cochain $p$:
\begin{eqnarray}
p&:&\left. U_\alpha\cap U_\beta\cap U_\mu\rightarrow D\right. \nonumber\\&&
\left. x\mapsto p(\alpha,\beta,\mu)=p_{\alpha\beta\mu}(x).\right.
\end{eqnarray}

\begin{lemma}
The 2-cochain $p$ is a cocycle. 
\end{lemma}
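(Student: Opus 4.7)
The goal is to verify the Čech cocycle condition on every non-empty quadruple intersection $U_\alpha\cap U_\beta\cap U_\mu\cap U_\nu$, namely that $(\delta p)(\alpha,\beta,\mu,\nu)=p(\beta,\mu,\nu)\,p(\alpha,\mu,\nu)^{-1}\,p(\alpha,\beta,\nu)\,p(\alpha,\beta,\mu)^{-1}$ is the identity of $D$. Because $D\subset C(\Gamma)$ is abelian, the order of factors above is immaterial, and the whole question reduces to a single multiplicative identity in $D$. The structural fact that I would exploit repeatedly is centrality: each $p_{\alpha\beta\mu}\in D\subset C(\Gamma)$ commutes with every $\gamma_{\rho\sigma}$ and, more generally, with every element of $\Gamma$.

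My plan would be first to rewrite the defining relation $p_{\alpha\beta\mu}=\gamma_{\beta\mu}\gamma_{\alpha\mu}^{-1}\gamma_{\alpha\beta}$ in the more convenient form $\gamma_{\alpha\beta}\,\gamma_{\beta\mu}=p_{\alpha\beta\mu}\,\gamma_{\alpha\mu}$, obtained by a short rearrangement that moves $p_{\alpha\beta\mu}$ through a single $\gamma$ via centrality. This formula exhibits $p$ as the explicit $D$-valued obstruction to the lifts $\gamma_{\alpha\beta}$ satisfying the transition-function consistency relation. I would then compute the triple product $\gamma_{\alpha\beta}\gamma_{\beta\mu}\gamma_{\mu\nu}$ in two ways, using associativity in $\Gamma$. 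Bracketing to the left yields $(\gamma_{\alpha\beta}\gamma_{\beta\mu})\gamma_{\mu\nu}=p_{\alpha\beta\mu}\,\gamma_{\alpha\mu}\gamma_{\mu\nu}=p_{\alpha\beta\mu}\,p_{\alpha\mu\nu}\,\gamma_{\alpha\nu}$, while bracketing to the right, and using centrality to slide the central element $p_{\beta\mu\nu}$ past $\gamma_{\alpha\beta}$, gives $\gamma_{\alpha\beta}(\gamma_{\beta\mu}\gamma_{\mu\nu})=\gamma_{\alpha\beta}\,p_{\beta\mu\nu}\,\gamma_{\beta\nu}=p_{\beta\mu\nu}\,p_{\alpha\beta\nu}\,\gamma_{\alpha\nu}$. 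Equating the two expressions and cancelling the common $\gamma_{\alpha\nu}$ on the right produces $p_{\alpha\beta\mu}\,p_{\alpha\mu\nu}=p_{\beta\mu\nu}\,p_{\alpha\beta\nu}$ in $D$, which is precisely $(\delta p)(\alpha,\beta,\mu,\nu)=e$.

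The delicate part is more bookkeeping than genuine mathematics: one must keep explicit control of each invocation of centrality — once to produce the rewriting $\gamma_{\alpha\beta}\gamma_{\beta\mu}=p_{\alpha\beta\mu}\gamma_{\alpha\mu}$, and once to commute $p_{\beta\mu\nu}$ through $\gamma_{\alpha\beta}$ in the right-bracketed computation — so that every manipulation is manifestly legal inside the generally non-abelian group $\Gamma$. Beyond that, the argument reduces to associativity and right cancellation by $\gamma_{\alpha\nu}$, and well-definedness on the quadruple intersection is automatic because the constancy of the $p_{\alpha\beta\mu}$ (on connected components of the covering) has already been established in the paragraph preceding the lemma.
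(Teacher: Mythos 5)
Your argument is correct, and it follows a genuinely different route from the paper's. The paper proves the lemma by direct substitution: it writes out $(\partial p)(\alpha,\beta,\mu,\nu)$ as a product of four factors $p_{\cdot\cdot\cdot}^{\pm1}$, replaces each by the defining formula $p_{\alpha\beta\mu}=\gamma_{\beta\mu}\gamma_{\alpha\mu}^{-1}\gamma_{\alpha\beta}$, and then uses centrality repeatedly to shuffle factors past one another until everything cancels to $e_\Gamma$. That is a careful but somewhat opaque bookkeeping exercise. You instead invert the definition into the multiplicative form $\gamma_{\alpha\beta}\gamma_{\beta\mu}=p_{\alpha\beta\mu}\gamma_{\alpha\mu}$ (legitimate via a single conjugation by $\gamma_{\alpha\beta}$, which is trivial on the central $p$), thereby recognizing $p$ as the factor system of a central extension, and then you deduce the cocycle identity from associativity of the triple product $\gamma_{\alpha\beta}\gamma_{\beta\mu}\gamma_{\mu\nu}$, computed with left and right bracketing. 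Both computations I checked line by line: left bracketing yields $p_{\alpha\beta\mu}p_{\alpha\mu\nu}\gamma_{\alpha\nu}$, right bracketing yields $p_{\beta\mu\nu}p_{\alpha\beta\nu}\gamma_{\alpha\nu}$ after sliding the central $p_{\beta\mu\nu}$ past $\gamma_{\alpha\beta}$, and cancellation of $\gamma_{\alpha\nu}$ gives exactly $p_{\alpha\beta\mu}p_{\alpha\mu\nu}=p_{\beta\mu\nu}p_{\alpha\beta\nu}$, which is $(\partial p)=e$ in abelian $D$. The payoff of your approach is conceptual clarity: it makes manifest that the cocycle property is nothing but associativity seen through the factor system, which is the standard mechanism in the theory of central extensions, and it uses centrality exactly twice and visibly. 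The paper's approach has the modest advantage of staying entirely inside the expression $(\partial p)$ without introducing the auxiliary triple product, but at the cost of a longer and less structured cancellation. Either proof is complete; yours is, if anything, the cleaner of the two.
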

\begin{proof}
Let $\partial$ be the coboundary operator. Then (omitting the base manifold point coordinate)
\begin{eqnarray}
(\partial p)(\alpha,\beta,\mu,\nu)=p(\beta,\mu,\nu)p^{-1}(\alpha,\mu,\nu)p(\alpha,\beta,\nu)p^{-1}(\alpha,\beta,\mu)=p_{\beta\mu\nu}p^{-1}_{\alpha\mu\nu}p_{\alpha\beta\nu}p^{-1}_{\alpha\beta\mu}.
\end{eqnarray} Recall that $D$ belongs to $C(\Gamma)$ being, thus, abelian. Hence $p(\beta,\mu,\nu)p^{-1}(\alpha,\mu,\nu)=p^{-1}(\alpha,\mu,\nu)p(\beta,\mu,\nu)$ and taking into account that $p_{\alpha\beta\mu}=\gamma_{\beta\mu}\gamma_{\alpha\mu}^{-1}\gamma_{\alpha\beta}$, we are left with 
\begin{eqnarray}
(\partial p)(\alpha,\beta,\mu,\nu)=(\gamma_{\mu\nu}\gamma^{-1}_{\alpha\nu}\gamma_{\alpha\mu})^{-1}(\gamma_{\mu\nu}\gamma^{-1}_{\beta\nu}\gamma_{\beta\mu})(\gamma_{\beta\nu}\gamma^{-1}_{\alpha\nu}\gamma_{\alpha\beta})(\gamma_{\beta\mu}\gamma^{-1}_{\alpha\mu}\gamma_{\alpha\beta})^{-1},
\end{eqnarray} or, equivalently,
\begin{equation}
(\partial p)(\alpha,\beta,\mu,\nu)=\gamma^{-1}_{\alpha\mu}\gamma_{\alpha\nu}\underbrace{\gamma^{-1}_{\mu\nu}\gamma_{\mu\nu}}_{e_{\Gamma}}\gamma^{-1}_{\beta\nu}\gamma_{\beta\mu}(\gamma_{\beta\nu}\gamma^{-1}_{\alpha\nu}\gamma_{\alpha\beta})(\gamma_{\beta\mu}\gamma^{-1}_{\alpha\mu}\gamma_{\alpha\beta})^{-1}.
\end{equation} Taking again into account that $p\in D\subset C(\Gamma)$, the $\gamma_{\beta\mu}$ commutes with the last two factors in parenthesis. So, the above expression reads
\begin{equation}
(\partial p)(\alpha,\beta,\mu,\nu)=\gamma^{-1}_{\alpha\mu}\gamma_{\alpha\nu}\gamma^{-1}_{\beta\nu}(\gamma_{\beta\nu}\gamma^{-1}_{\alpha\nu}\gamma_{\alpha\beta})(\gamma_{\beta\mu}\gamma^{-1}_{\alpha\mu}\gamma_{\alpha\beta})^{-1}\gamma_{\beta\mu}
\end{equation} which is $(\partial p)(\alpha,\beta,\mu,\nu)=e_{\Gamma}$. 
\end{proof}
Just as a remark, notice that for most practical situations (in which the $\Gamma-$structure is the $SO(1,3)$ lifting to $Spin(1,3)$) $p\in \mathbb{Z}_2$ and therefore a negative $p$ is also in order. This can be accomplished by chosen $\tilde{\gamma}_{\alpha\beta}=-\gamma_{\alpha\beta}$. Hence, calling $f(\alpha,\beta)=\pm 1$ the sign of $\tilde{\gamma}_{\alpha\beta}$ we can write $\tilde{p}(\alpha,\beta,\mu)=\tilde{\gamma}_{\beta\mu}\tilde{\gamma}^{-1}_{\alpha\mu}\tilde{\gamma}_{\alpha\beta}=p(\alpha,\beta,\mu)f(\beta,\mu)f^{-1}(\alpha,\mu)f(\alpha,\beta)$, which amounts out to be $\tilde{p}=p\partial f(\alpha,\beta)$, that is $p$ changes by a coboundary. Therefore, $p$ and $\tilde{p}$ are elements of the same class. This construction proves the independence of $[p]$ concerning the lift.   

The above reasoning allows the recognition of $[p]$ as an element of $\check{H}^2(\mathcal{M},D)$. The obstruction character of $[p]$ is somewhat evident: if, and only if, $[p]$ is trivial, the lifting functions satisfy the consistency condition, allowing a well-behaved relation between the base manifold covered by open sets and the corresponding local sections. In any case, this result is formally stated through the following theorem.
\begin{theorem}
A given principal bundle admits a $\Gamma-$structure if, and only if, $[p]$ is trivial. 
\end{theorem}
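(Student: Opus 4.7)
The plan is to apply Lemma \ref{adm} in both directions, reading the 2-cocycle $p$ as the precise obstruction to upgrading the chosen local lifts $\gamma_{\alpha\beta}$ to a consistent cocycle.

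For the necessary direction I would take a $\Gamma$-structure as given and invoke Lemma \ref{adm} to extract lifts satisfying the strict cocycle identity $\gamma_{\alpha\beta}\gamma_{\beta\mu}=\gamma_{\alpha\mu}$. Substituting $\gamma_{\alpha\mu}^{-1}=\gamma_{\beta\mu}^{-1}\gamma_{\alpha\beta}^{-1}$ into the definition $p_{\alpha\beta\mu}=\gamma_{\beta\mu}\gamma_{\alpha\mu}^{-1}\gamma_{\alpha\beta}$ collapses it to $e_\Gamma$ pointwise, so $p$ is the trivial 2-cocycle and $[p]$ vanishes in $\check{H}^2(\mathcal{M},D)$. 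This half is essentially a one-line verification.

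For the sufficient direction I would start from the local lifts $\gamma_{\alpha\beta}$ already used to construct $p$ and correct them by a trivializing 1-cochain. Concretely, given $[p]=0$, fix $f: U_\alpha\cap U_\beta \to D$ with $\partial f = p$ and set $\tilde{\gamma}_{\alpha\beta}=f(\alpha,\beta)^{-1}\gamma_{\alpha\beta}$. Then the two conditions of Lemma \ref{adm} must be checked for the corrected lifts: condition ii) is immediate since $f$ takes values in $\ker\rho=D$, while condition i) follows from exactly the manipulation already carried out in the remark after the preceding lemma — the centrality $D\subset C(\Gamma)$ allows the $f$-factors to be collected past every $\gamma$, yielding $\tilde{p}=(\partial f)^{-1}p=e_\Gamma$ and hence $\tilde{\gamma}_{\alpha\beta}\tilde{\gamma}_{\beta\mu}=\tilde{\gamma}_{\alpha\mu}$. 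Lemma \ref{adm}, applied to these new $\tilde{\gamma}$'s, then furnishes the desired $\Gamma$-structure.

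The main difficulty is conceptual rather than computational: three structural ingredients must conspire for the sufficient direction to go through, namely the centrality of $D$ (so that modifications by $f$-values pass freely through noncommutative $\gamma$-products), the kernel property $D=\ker\rho$ (so the correction leaves condition ii) intact), and the independence of $[p]$ from the initial choice of lift (so that the hypothesis $[p]=0$ is an intrinsic statement about $P_G$ rather than an artifact of auxiliary data). All three have already been installed in the preceding material, so once the corrected lifts $\tilde{\gamma}_{\alpha\beta}$ are identified, the theorem is close to a formal repackaging of Lemma \ref{adm} and the cocycle lemma in cohomological language.
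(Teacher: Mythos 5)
Your proposal is correct and follows essentially the same route as the paper: both directions reduce to Lemma \ref{adm}, and the sufficient direction is handled by correcting the lifts by a trivializing 1-cochain (your $f$ is the paper's $q$, and your $\tilde{\gamma}_{\alpha\beta}=f(\alpha,\beta)^{-1}\gamma_{\alpha\beta}$ coincides with the paper's $m_{\alpha\beta}=\gamma_{\alpha\beta}q_{\alpha\beta}^{-1}$ once centrality of $D$ is used). The only cosmetic difference is that you shortcut the verification of the corrected consistency condition by reusing the coboundary identity $\tilde{p}=(\partial f)^{-1}p$ from the remark on lift-independence, whereas the paper rederives it by direct manipulation of the $m_{\alpha\beta}$'s.
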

\begin{proof}
($\Rightarrow$) Assuming the existence of a $\Gamma-$structure, by Lemma \ref{adm}, there are lifting functions satisfying $\gamma_{\alpha\beta}(x)\gamma_{\beta\mu}(x)=\gamma_{\alpha\mu}(x)$. Therefore, applying $\gamma^{-1}_{\alpha\mu}\gamma_{\alpha\beta}$ from the right one has $\gamma_{\alpha\beta}p_{\alpha\beta\mu}=\gamma_{\alpha\beta}$. Hence, trivially, $[p]=e$. 

($\Leftarrow$) Assuming $[p]=e$, it is possible to assert the existence of a 1-cochain, say $q_{\alpha\beta}$, such that $p(\alpha,\beta,\mu)=q_{\beta\mu}q^{-1}_{\alpha\mu}q_{\alpha\beta}$, for $\partial p(\alpha,\beta,\mu,\nu)=(q_{\mu\nu}q^{-1}_{\beta\nu}q_{\beta\mu})(q_{\mu\nu}q^{-1}_{\alpha\nu}q_{\alpha\mu})^{-1}(q_{\beta\nu}q^{-1}_{\alpha\nu}q_{\alpha\beta})(q_{\beta\mu}q^{-1}_{\alpha\mu}q_{\alpha\beta})^{-1}=e$, since $q_{\alpha\beta}$ takes values on the abelian group $D$. 

Now set maps
\begin{eqnarray}
 m_{\alpha\beta}&:&\left. U_\alpha\cap U_\beta\rightarrow \Gamma\right.\nonumber\\&&
 \left. x\mapsto m_{\alpha\beta}=\gamma_{\alpha\beta}(x)q_{\alpha\beta}^{-1},\right.  
\end{eqnarray} where $\gamma_{\alpha\beta}$ are defined as before. Remembering the definition of $p_{\alpha\beta\mu}$, it can be readily verified that 
\begin{equation}
m_{\alpha\beta}m_{\beta\mu}=\gamma_{\alpha\beta}q^{-1}_{\alpha\beta}p_{\alpha\beta\mu}\gamma_{\alpha\beta}^{-1}\gamma_{\alpha\mu} q_{\beta\mu}^{-1}
\end{equation} and, since $q_{\alpha\beta}\in D$, $p_{\alpha\beta\gamma}\in D$, and $D$ belongs to the center of $\Gamma$, terms can be interchanged such that  
\begin{equation}
m_{\alpha\beta}m_{\beta\mu}=q^{-1}_{\alpha\beta}p_{\alpha\beta\mu}\gamma_{\alpha\mu}q^{-1}_{\beta\mu}. 
\end{equation} By inserting the identity $q_{\alpha\mu}^{-1}q_{\alpha\mu}$ between $\gamma$ and $q$ in the above expression and recognizing $m_{\alpha\mu}=\gamma_{\alpha\mu} q_{\alpha\mu}^{-1}$ we are left with 
\begin{equation}\label{la}
m_{\alpha\beta}m_{\beta\mu}=m_{\alpha\mu}q^{-1}_{\alpha\beta}p_{\alpha\beta\mu}q_{\alpha\mu}q_{\beta\mu}^{-1}.
\end{equation} Finally, from $p(\alpha,\beta,\mu)=q_{\beta\mu}q^{-1}_{\alpha\mu}q_{\alpha\beta}$, the expression (\ref{la}) reduces to the $m_{\alpha\beta}m_{\beta\mu}=m_{\alpha\mu}$. Besides, $\rho\circ m_{\alpha\beta}=\rho(\gamma_{\alpha\beta}q^{-1}_{\alpha\beta})=\rho(\gamma_{\alpha\beta}q_{\beta\alpha})$ and, since $\rho$ is a homomorphism $\rho\cdot m_{\alpha\beta}=\rho(\gamma_{\alpha\beta})\rho(q_{\beta\alpha})=g_{\alpha\beta}$ because $q_{\beta\alpha}\in D=\ker(\rho)$.   
\end{proof}

From a mathematical point of view, it is important to study the relationship between fibre bundles and their impact on the obstruction class. The fact is that a given map $\phi:\bar{P}_G\rightarrow P_G$ engendering fibre homeomorphisms leads to a relation between the obstruction class in both bundles usually called {\it naturality}. For our purposes, we shall study this aspect not only by completeness but also because it introduces relevant procedures to study the behavior of spinors under diffeomorphisms. 

Consider two fibre bundles $(\bar{P}_G,\bar{\pi},\bar{\mathcal{M}},G)$ and $(P_G,\pi,\mathcal{M},G)$, and call $\beta:\bar{\mathcal{M}}\rightarrow \mathcal{M}$ establishing the base spaces homeomorphism. The next lemma establishes the result. 
\begin{lemma}
	Denote by $\beta^{\ddag}:\check{H}(\mathcal{M},D)\rightarrow \check{H}(\bar{\mathcal{M}},D)$ the homomorphism induced by $\beta$. Then $[p](\bar{P}_G,\rho)=\beta^\ddag\cdot [p](P_G,\rho)$.
\end{lemma}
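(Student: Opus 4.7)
The plan is to exploit the bundle map $\phi:\bar{P}_G\rightarrow P_G$ introduced in the paragraph preceding the lemma, which by hypothesis covers $\beta$ and restricts to $G$-equivariant fibre homeomorphisms, in order to transport the entire \v{C}ech datum from $\mathcal{M}$ to $\bar{\mathcal{M}}$. I would begin with a simple open covering $\{U_\alpha\}$ of $\mathcal{M}$, local sections $\sigma_\alpha^G$, and lifting functions $\gamma_{\alpha\beta}$ as employed in Lemma \ref{adm}. Because $\beta$ is a homeomorphism, the family $\{\bar U_\alpha:=\beta^{-1}(U_\alpha)\}$ is a simple covering of $\bar{\mathcal{M}}$ with the same index set, and the pulled-back local sections $\bar\sigma_\alpha^G$ on $\bar P_G$, uniquely characterized by $\phi\circ\bar\sigma_\alpha^G=\sigma_\alpha^G\circ\beta$, inherit transition functions $\bar g_{\alpha\beta}=g_{\alpha\beta}\circ\beta$ as a direct consequence of $G$-equivariance (writing out $\phi\circ\bar\sigma_\beta^G=\sigma_\beta^G\circ\beta=(\sigma_\alpha^G g_{\alpha\beta})\circ\beta$ and using uniqueness of the fibrewise lift).

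Taking $\bar\gamma_{\alpha\beta}:=\gamma_{\alpha\beta}\circ\beta$ as the obvious candidate lifts on $\bar P_G$, condition $ii)$ of Lemma \ref{adm} is immediate from $\rho\circ\gamma_{\alpha\beta}=g_{\alpha\beta}$. The key calculation is then
\begin{equation*}
\bar p(\alpha,\beta,\mu)=\bar\gamma_{\beta\mu}\,\bar\gamma_{\alpha\mu}^{-1}\,\bar\gamma_{\alpha\beta}=\bigl(\gamma_{\beta\mu}\,\gamma_{\alpha\mu}^{-1}\,\gamma_{\alpha\beta}\bigr)\circ\beta=p(\alpha,\beta,\mu)\circ\beta,
\end{equation*}
which is by construction the cochain-level pullback of $p$ by $\beta$. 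Passing to \v{C}ech cohomology yields $[\bar p]=\beta^\ddag[p]$, and the lift-independence of the obstruction class established in the remark preceding the lemma guarantees this identification is intrinsic to the two bundles and does not depend on the particular $\gamma_{\alpha\beta}$ chosen on $P_G$.

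The main obstacle I anticipate is conceptual rather than computational: naturality is not a statement about an arbitrary pair of bundles over $\mathcal{M}$ and $\bar{\mathcal{M}}$ related by a base homeomorphism, but presupposes the $G$-bundle morphism $\phi$ covering $\beta$, without which no canonical comparison of the two \v{C}ech cocycles is available. Once that data is fixed, everything else reduces to the $G$-equivariance of $\phi$ and the functoriality of precomposition with $\beta$, both of which are essentially pointwise checks.
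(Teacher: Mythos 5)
Your proof is correct and follows essentially the same route as the paper: pull back local sections and lifting functions through $\phi$ and $\beta$, verify by $G$-equivariance that the transition data and lifts are preserved up to precomposition with $\beta$, and then observe the obstruction cocycle transforms as a pullback. The only (cosmetic) difference is bookkeeping: the paper works with two independently chosen simple coverings of $\bar{\mathcal{M}}$ and $\mathcal{M}$ related by a refinement map $\tau:\aleph\to I$ with $\beta(\bar{U}_\varpi)\subset U_{\tau(\varpi)}$, whereas you exploit that $\beta$ is a homeomorphism to take $\bar{U}_\alpha=\beta^{-1}(U_\alpha)$ and so keep the same index set (effectively $\tau=\mathrm{id}$), a shortcut that is valid here since homeomorphisms carry simple coverings to simple coverings.
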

\begin{proof}
	Generally, we deal exclusively with spaces where every open covering admits a simple refinement. Consider, then, two simple coverings $\{\bar{U}_\varpi\}_{\varpi\in \aleph}$ of $\bar{\mathcal{M}}$ and $\{\bar{U}_\alpha\}_{\alpha\in I}$ of $\mathcal{M}$ such that for some $\alpha$ we have $\beta (\bar{U}_\varpi)\subset U_\alpha$. Then it is possible to consider a function $\tau:\aleph\rightarrow I$ such that $\beta(\bar{U}_\varpi)\subset U_{\tau(\varpi)}$.
		
	From a system of local sections $\sigma_\alpha:U_\alpha\rightarrow P$, establish a system of local sections for $\bar{P}$ by
	\begin{equation}\label{sac}
	\bar{\sigma}_\alpha(\bar{x})=\phi^{-1}|_{\bar{x}}\cdot \sigma_{\tau(\alpha)}(\beta|_{\bar{x}}),
	\end{equation} where $\phi|_{\bar{x}}$ denotes the restriction of $\phi$ to the fibre on $\bar{x}\in \bar{U}_{\alpha}$, with a analog meaning to $\beta|_{\bar{x}}$. Notice that the transition functions may be related by (\ref{sac}). In fact, $\bar{g}_{\alpha\beta}(\bar{x})=\bar{\sigma}^{-1}_\alpha\cdot\bar{\sigma}_\beta$ leads to (using (\ref{sac}))
	\begin{equation}
	\bar{g}_{\alpha\beta}(\bar{x})=(\phi^{-1}|_{\bar{x}}\cdot \sigma_{\tau(\alpha)}(\beta|_{\bar{x}}))^{-1}\cdot(\phi^{-1}|_{\bar{x}}\cdot \sigma_{\tau(\beta)}(\beta|_{\bar{x}})),
	\end{equation} or
	\begin{equation}
	\bar{g}_{\alpha\beta}(\bar{x})=\sigma^ {-1}_{\tau(\alpha)}(\beta|_{\bar{x}})\cdot\underbrace{\phi|_{\bar{x}}\cdot \phi^{-1}|_{\bar{x}}}_{Id_{P}}\cdot \sigma_{\tau(\beta)}(\beta|_{\bar{x}}).
	\end{equation} Therefore $\bar{g}_{\alpha\beta}(\bar{x})=\sigma^{-1}_{\tau(\alpha)}\cdot\sigma_{\tau(\beta)}(\beta|_{\bar{x}})=g_{\tau(\alpha)\tau(\beta)}(\beta|_{\bar{x}})$, with $\bar{x}\in \bar{U}_\alpha\cap\bar{U}_\beta$. Now, recall that $\gamma_{\alpha\beta}:U_\alpha\cap U_\beta\rightarrow \Gamma$ are the lifts of $g_{\alpha\beta}\in G$ and define $\bar{\gamma}_{\alpha\beta}(\bar{x})=\gamma_{\tau(\alpha)\tau(\beta)}(\beta|_{\bar{x}})$. Hence
	\begin{equation}
	\rho\cdot\bar{\gamma}_{\alpha\beta}(\bar{x})=\rho\cdot\gamma_{\tau(\alpha)\tau(\beta)}(\beta|_{\bar{x}})=g_{\tau(\alpha)\tau(\beta)}(\beta|_{\bar{x}})=\bar{g}_{\alpha\beta}(\bar{x})
	\end{equation} and, in an analog fashion to the $\gamma_{\alpha\beta}$ case, $\bar{\gamma}_{\alpha\beta}$ are the lifts of $\bar{g}_{\alpha\beta}$. This last result opens the possibility for considering the cocycle obstruction for $\bar{P}_G$ so that 
	\begin{equation}
	\bar{p}(\alpha,\beta,\mu)=\bar{\gamma}_{\beta\mu}(\bar{x})\bar{\gamma}^{-1}_{\alpha\mu}(\bar{x})\bar{\gamma}_{\alpha\beta}(\bar{x})=\gamma_{\tau(\beta)\tau(\mu)}(\beta|_{\bar{x}})\gamma^{-1}_{\tau(\alpha)\tau(\mu)}(\beta|_{\bar{x}})\gamma_{\tau(\alpha)\tau(\beta)}(\beta|_{\bar{x}}),
	\end{equation} which means $\bar{p}(\alpha,\beta,\mu)=p(\tau(\alpha),\tau(\beta),\tau(\mu))(\beta|_{\bar{x}})$ and, therefore, $[p]_{(\bar{P}_G,\rho)}=\beta^\ddag\cdot[p]_{(P_G,\rho)}$. 
\end{proof}

It is straightforward to generalize the last result to $n-1$ ($n\in \mathbb{N}^*$) sequential homeomorphisms between $(\bar{P_i},\bar{\pi_i},\bar{\mathcal{M}_i},G)$ for $i=1,\cdots ,n$. 
 \begin{equation}
\xymatrixcolsep{4pc}\xymatrix{
		\cdots \bar{P_{2G}} \ar[d]_-{\bar{\pi}_2} 
	\ar[r]^-{\phi_2} &\bar{P_{1G}} \ar[d]_-{\bar{\pi}_1} \ar[r]^-{\phi_1} & \bar{P_{G}} \ar[d]_-{\pi} \\
	\cdots \bar{\mathcal{M}}_2\ar[r]^-{\beta_2}& \bar{\mathcal{M}}_1 \ar[r]^-{\beta_1}& \bar{\mathcal{M}}
}
\end{equation} 

In fact, setting $\phi_i:\bar{P_i}_G\rightarrow \bar{P_{i-1}}_G$ ($\bar{P_0}_G\equiv\bar{P}_G$), $\beta_i:\bar{\mathcal{M}_i}\rightarrow \bar{\mathcal{M}_{i-1}}$ ($\bar{\mathcal{M}_0}\equiv\bar{\mathcal{M}}$), $\beta_i^\ddag:\check{H}^2(\bar{\mathcal{M}_{i-1}},D)\rightarrow \check{H}^2(\bar{\mathcal{M}_{i}},D)$ for $i=1,\cdots ,n$ (see the above diagram for sequential homeomorphisms), the relation between transition functions in the $n-$th bundle and $(\bar{P},\bar{\pi},\bar{\mathcal{M}},G)$ is given by
\begin{equation}
\bar{g_n}_{\alpha\beta}(\bar{x}_n)=g_{\tau_1\cdots \tau_{n-1}(\tau_n(\alpha))\tau_1\cdots \tau_{n-1}(\tau_n(\beta))}(\beta_1\circ\cdots\circ\beta_n|_{\bar{x}_n}),
\end{equation} where $\tau_i:\aleph_i\rightarrow\aleph_{i-1}$ ($\aleph_0\equiv I$) and $\bar{x}_i\in \bar{U_i}_\alpha\cap\bar{U_i}_\beta$ $\forall i$. Lifting functions can be concatenated such that $\rho\cdot\bar{\gamma_n}_{\alpha\beta}=\bar{g_n}_{\alpha\beta}(\bar{x}_n)$, resulting in the following obstruction classes relation 
\begin{equation}
[p]_{(\bar{P_n},\rho)}=\beta_n^\ddag\cdot\,\cdots\,\beta_1^\ddag [p]_{(P,\rho)}.
\end{equation} Anyhow, it is possible to observe the pertinence of naturality: being two fibre bundles congruent, their topology is equivalent (in the homeomorphic mapping sense). Therefore, their obstruction classes are also congruent.  

From a physical point of view, it is relevant to say some words about the obstruction class that was just analyzed and its connection with spacetime. Geroch constructed this bridge in Refs. \cite{G1} and \cite{G2}. In a framework involving time and space orientability, the work presented in Ref. \cite{G1} shows that a given spacetime admits spinor structure if, and only if, it admits a global system of orthonormal tetrads. In practical terms, the link between the very existence of spinor structures and orthonormal tetrads is necessary since describing spinor dynamics in generic spacetimes requires globally defined tetrads to compose the spin connection. Nevertheless, the results of \cite{G1} make the status of this relation much more deep. In \cite{G2}, the existence of spinor fields is connected to a spacetime coming from a well-posed Cauchy problem.  

An explicit connection between topological and physical considerations for the spinors was performed in Ref. \cite{lee}. The general argumentation line can be drawn as follows: we have said that $[p]\in \check{H}(\mathcal{M},D\simeq \mathbb{Z}_2)$, is nontrivial, is an obstruction to the admittance of $\Gamma-$structures by principal bundles. This obstruction class is nothing but the second Stiefel-Whitney class, $w_2$, of $\mathcal{M}$ \cite{nak,bono}. The $n$-th Stiefel-Whitney class ($w_n$) of a base manifold obstructs the existence of a vector bundle over it with coefficients in $\mathbb{Z}_2$. In four dimensions, there is a very important result \cite{st} connecting the triviality of $w_n=0$ ($n=1,2,3,4$) and the existence of a continuous field of orthogonal $(4-(n-1))$-frames over the manifold $n$-dimensional skeleton $sk_n(\mathcal{M})$. The manifold is assumed to be orientable (as it shall play the role of a genuine spacetime) and therefore $w_1=0$. Hence, continuous tetrads can be placed at $sk_1(\mathcal{M})$, or each line (one-surface) point of $\mathcal{M}$, i.e., without discrete reflections. As discussed, $w_2=0$ and orthogonal triads can be continuously placed at the plane (two-surface) points of $\mathcal{M}$ by admitting spinor structures. When $w_1$ and $w_2$ are both trivial, then $w_3$ must also be trivial\footnote{Using the so-called Steenrod square, $Sq^r:\check{H}^m(\mathcal{M},\mathbb{Z}_2)\rightarrow \check{H}^{r+m}(\mathcal{M},\mathbb{Z}_2)$ defined in a cohomology ring, it is possible to envisage this result, following the theorem \cite{mili,milum} (whose simplified statement here will be done without proof):
	\begin{theorem}
		The cohomology class $Sq^r(w_p)$ can be expressed as 
		\begin{eqnarray}
		Sq^r(w_p)=w_rw_p+\binom{p-r}{1}w_{r-1}w_{p+1}+\binom{p-r+1}{2}w_{r-2}w_{p+2}+\cdots+\binom{p-1}{p}w_0w_{p+r}. 
		\end{eqnarray}
	\end{theorem}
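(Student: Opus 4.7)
The plan is to prove the Wu formula by combining three structural ingredients: the naturality of Steenrod squares, the Cartan product rule, and the splitting principle for real vector bundles. The strategy reduces the cohomological statement to a polynomial identity in Chern-root variables, where only a combinatorial matching remains.

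\textbf{Reduction via the splitting principle.} Given $\mathcal{M}$ equipped with the real vector bundle whose Stiefel--Whitney classes enter the formula, one passes to its associated flag bundle $f:\widetilde{\mathcal{M}}\to\mathcal{M}$, for which $f^{*}$ is injective on $\check{H}^{*}(\,\cdot\,,\mathbb{Z}_{2})$ and the pulled-back bundle splits as a Whitney sum of line bundles $L_{1}\oplus\cdots\oplus L_{n}$. Because both $w_{*}$ and $Sq^{*}$ commute with pullbacks, it suffices to establish the identity upstairs. Setting $x_{i}:=w_{1}(L_{i})$, the Whitney product formula identifies
\begin{equation}
w=\prod_{i=1}^{n}(1+x_{i}),\qquad w_{p}=e_{p}(x_{1},\ldots,x_{n}),
\end{equation}
with $e_{p}$ the $p$-th elementary symmetric polynomial.

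\textbf{Applying the Cartan formula.} Since $|x_{i}|=1$, the unstable axioms $Sq^{0}(x_{i})=x_{i}$, $Sq^{1}(x_{i})=x_{i}^{2}$, and $Sq^{k}(x_{i})=0$ for $k\geq 2$ give $Sq(1+x_{i})=1+x_{i}+x_{i}^{2}$. Iterating the Cartan formula $Sq(ab)=Sq(a)Sq(b)$ on every monomial of $e_{p}$ and extracting the degree-$(p+r)$ component yields
\begin{equation}
Sq^{r}(w_{p})=\sum_{\substack{J\cap K=\emptyset\\ |J|=r,\,|K|=p-r}}\prod_{j\in J}x_{j}^{2}\prod_{k\in K}x_{k},
\end{equation}
an explicit polynomial in the Chern roots.

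\textbf{Matching the right-hand side.} On the other side of the claimed identity, the product $w_{r-t}w_{p+t}$ expands as a sum over subset pairs $(A,B)$ with $|A|=r-t$ and $|B|=p+t$, organized by the cardinality $|A\cap B|$. A monomial $\prod_{j\in J}x_{j}^{2}\prod_{k\in K}x_{k}$ with $|J|=j$ (and necessarily $|K|=r+p-2j$) arises in $w_{r-t}w_{p+t}$ with multiplicity $\binom{r+p-2j}{r-t-j}$, counted by which elements of $K$ land in $A\setminus B$ versus $B\setminus A$. The Wu formula therefore reduces to the mod-$2$ arithmetic identity
\begin{equation}
\sum_{t=0}^{r}\binom{p-r+t-1}{t}\binom{r+p-2j}{r-t-j}\equiv \delta_{j,r}\pmod{2},
\end{equation}
which yields both the correct coefficient unity at $j=r$ and the complete cancellation for $j<r$. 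Verifying this binomial identity modulo $2$ (via Lucas' theorem on the $2$-adic expansions, for instance) is the main obstacle of the argument; once it is in hand, naturality of $f^{*}$ transports the polynomial identity back down from $\widetilde{\mathcal{M}}$ to $\mathcal{M}$, proving the theorem.
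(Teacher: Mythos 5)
The paper does not actually prove this theorem: the footnote in which it appears explicitly states that it ``will be done without proof,'' citing Milnor--Stasheff and Wilkins. There is therefore no proof in the paper to compare against, and your proposal must be judged on its own terms.

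Your splitting-principle argument is the standard route to the Wu formula, and every reduction you perform is sound: naturality of both $Sq$ and $w_{*}$ justifies the passage to the flag bundle, the Cartan formula for degree-one classes gives $Sq(x_i)=x_i+x_i^2$, the expansion of $Sq^r(e_p)$ into disjoint pairs $(J,K)$ is correct, and the multiplicity $\binom{r+p-2j}{r-t-j}$ of a fixed monomial inside $w_{r-t}w_{p+t}$ (counting how $K$ splits between $A\setminus B$ and $B\setminus A$) checks out. The one genuine gap is the step you flag yourself: the mod-$2$ binomial identity is asserted but not proved, yet it carries the entire combinatorial content of the theorem, so the argument is not complete without it. For the record, it does hold and closes quickly. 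Set $m=p-r$ and $s=r-j\ge 0$; then your sum is
\begin{equation}
\sum_{t\ge 0}\binom{m+t-1}{t}\binom{m+2s}{s-t},
\end{equation}
which is the coefficient of $z^{s}$ in $(1-z)^{-m}(1+z)^{m+2s}$. Modulo $2$ the formal power series $(1-z)^{-m}$ and $(1+z)^{-m}$ have identical coefficients, so this is the coefficient of $z^{s}$ in $(1+z)^{2s}\equiv(1+z^{2})^{s}\pmod 2$. That coefficient is $0$ when $s$ is odd, is $1$ when $s=0$, and is $\binom{s}{s/2}$ when $s>0$ is even, which is even by Kummer's theorem (adding $s/2$ to itself in base $2$ produces a carry whenever $s/2>0$). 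This is exactly $\delta_{j,r}$, so once this verification is inserted your proof is correct and complete.
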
 Thus, we are left with $Sq^1(w_2)=w_1w_2+w_0w_3$. Nevertheless, $w_0=1$ by definition, $\mathcal{M}$ is orientable ($w_1$ is vanishing) and admits spinor structures ($w_2$ is also null), hence we arrive at $w_3=0$. We assume $\mathcal{M}$ paracompact for this simplified theorem version. Therefore, a coincidence exists between singular and \v{C}ech cohomologies, and the cup product redounds in a usual product.} \cite{bre}. Moreover (see \cite{st}), for noncompact $\mathcal{M}$ all \v{C}ech cohomology group $\check{H}^4(\mathcal{M},D)$ is trivial and therefore so it is $w_4\in \check{H}^4(\mathcal{M},D)$. For compact base manifolds, the triviality of $w_4$ is obtained by vanishing the Euler characteristic class. The vanishing of $w_3$ and $w_4$ are related to chirality \cite{o1} and causality \cite {o2}, respectively.

It is instructive to call attention to Refs. \cite{r1,r2} where modified Stiefel-Whitney classes are proposed to give an account of spin structures in arbitrary dimensions (signatures, and possibly reflections) obstructions.  

\subsection{(Non)uniqueness of spinor structures}

So far, we have been concerned about the existence of spinor structures and, therefore, spinors in a given space. Now, we shall focus on the uniqueness (or lack thereof) aspects. Consider again, for such, a principal fibre bundle $P_G$ with base manifold $\mathcal{M}$ and an homomorphism $\lambda:G\rightarrow G_\lambda$.   

\begin{prop}
	The homomorphism $\lambda:G\rightarrow G_\lambda$ determines a principal fibre bundle $P_{G_\lambda}$ with $\mathcal{M}$ as the base manifold.   
\end{prop}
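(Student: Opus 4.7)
The plan is to carry over the bundle structure of $P_G$ to $G_\lambda$ by pushing its transition functions forward along $\lambda$, and then to reconstruct $P_{G_\lambda}$ from these new transition data by the standard fibre bundle clutching construction. Concretely, I would take a simple open covering $\{U_\alpha\}$ of $\mathcal{M}$ together with a local section system $\sigma^G_\alpha:U_\alpha\to P_G$ and its associated transition functions $g_{\alpha\beta}:U_\alpha\cap U_\beta\to G$, exactly as set up before Lemma \ref{adm}.

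The first step is to define, for every pair of overlapping charts,
\begin{equation}
g^\lambda_{\alpha\beta}\,=\,\lambda\circ g_{\alpha\beta}:U_\alpha\cap U_\beta\to G_\lambda,
\end{equation}
and to check that these are continuous (since $\lambda$ and $g_{\alpha\beta}$ are) and that they satisfy the cocycle/consistency condition
\begin{equation}
g^\lambda_{\alpha\beta}(x)\,g^\lambda_{\beta\mu}(x)=\lambda(g_{\alpha\beta}(x))\,\lambda(g_{\beta\mu}(x))=\lambda(g_{\alpha\beta}(x)g_{\beta\mu}(x))=\lambda(g_{\alpha\mu}(x))=g^\lambda_{\alpha\mu}(x),
\end{equation}
where I used that $\lambda$ is a homomorphism and that the $g_{\alpha\beta}$ already satisfy the cocycle condition in $P_G$. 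In particular $g^\lambda_{\alpha\alpha}=e_{G_\lambda}$, so the $\{g^\lambda_{\alpha\beta}\}$ form a $G_\lambda$-valued \v{C}ech 1-cocycle subordinate to $\{U_\alpha\}$.

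The second step is the reconstruction. I would form the disjoint union $\bigsqcup_\alpha (U_\alpha\times G_\lambda)$ and quotient by the equivalence relation
\begin{equation}
(x,h)_\alpha\sim(x,g^\lambda_{\beta\alpha}(x)\cdot h)_\beta,\qquad x\in U_\alpha\cap U_\beta,\;h\in G_\lambda,
\end{equation}
declaring $P_{G_\lambda}$ to be the resulting quotient space. The cocycle identity verified above is precisely what makes $\sim$ transitive, hence a genuine equivalence relation, so the quotient is well defined. The projection $\pi_\lambda:P_{G_\lambda}\to \mathcal{M}$ induced by $(x,h)\mapsto x$ is well defined on classes, and the right action of $G_\lambda$ on itself by multiplication descends to a free fibre-preserving right action on $P_{G_\lambda}$; local trivializations are provided by the obvious inclusions $U_\alpha\times G_\lambda\hookrightarrow P_{G_\lambda}$, whose transition functions are the $g^\lambda_{\alpha\beta}$ by construction. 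Equivalently, one could invoke the associated bundle $P_G\times_G G_\lambda$, where $G$ acts on $G_\lambda$ via $(g,h)\mapsto \lambda(g)h$, and verify that its transition functions coincide with the $g^\lambda_{\alpha\beta}$ above.

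The argument is essentially routine and the only genuinely delicate point is the bookkeeping of the equivalence relation, in particular checking transitivity on triple overlaps $U_\alpha\cap U_\beta\cap U_\mu$; all other verifications (continuity, freeness of the action, triviality over each $U_\alpha$) are immediate from the way the quotient has been glued. I would expect to spend essentially no effort on existence once the cocycle condition has been written down, so the main (still very minor) obstacle is merely phrasing the quotient construction cleanly enough that $\pi_\lambda$, the $G_\lambda$-action, and the local trivializations are all manifestly well defined on equivalence classes.
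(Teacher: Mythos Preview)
Your proposal is correct and follows essentially the same route as the paper: push the transition functions of $P_G$ forward along $\lambda$, use the homomorphism property to verify the cocycle condition, and then invoke the standard reconstruction of a principal bundle from a cocycle. The paper's version is terser, stopping at the cocycle identity and asserting the existence of $P_{G_\lambda}$; your explicit clutching (and the alternative associated-bundle description $P_G\times_G G_\lambda$) just spells out what the paper leaves implicit.
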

\begin{proof}
	Start choosing a covering $\{U_\alpha\}$ of $\mathcal{M}$ and define $\sigma_\alpha$ and $g_{\alpha\beta}\in G$ as the local sections and transitions functions, respectively. The action of $\lambda$ is naturally given by $g^{(\lambda)}_{\alpha\beta}=\lambda g_{\alpha\beta}$. Therefore, for $x\in U_\alpha\cap U_\beta \cap U_\kappa$ and bearing in mind that $\lambda$ is a homomorphism\footnote{In general, for an group isomorphism $\lambda$, being $*$ the product in the domain group and $\cdot$ the product in the arriving one, we have $\lambda(u*v)=\lambda(u)\cdot\lambda(v)$. Hence, making $v=u^{-1}$ and recalling that $\lambda(u^{-1})=\lambda^{-1}(u)$, one directly arrives at $e_*\in \ker(\lambda)$.}, we have
	\begin{equation}
	g^{(\lambda)}_{\alpha\beta}(x)g^{(\lambda)}_{\beta\kappa}(x)g^{(\lambda)}_{\kappa\alpha}(x)=\lambda g_{\alpha\beta}\lambda g_{\beta\kappa}\lambda g_{\kappa\alpha}=\lambda (g_{\alpha\beta}g_{\beta\kappa}g_{\kappa\alpha})=\lambda(e_G)=e_{G_{\lambda}}.
	\end{equation} Therefore, there is a principal fibre bundle $P_{G_\lambda}$ with a local section system such that $g^{(\lambda)}_{\alpha\beta}$ are the corresponding transition functions.  
 \end{proof} 

This previous result establishes an extension that can be generalized employing central homomorphisms. Take, as usual, $\rho:\Gamma\rightarrow G$, with $\ker(\rho)=D$ and define $\tilde{\rho}:\Gamma_\lambda\rightarrow G_\lambda$ with $\ker(\tilde{\rho})=D_\lambda$. 
\begin{definition}
	The central homomorphisms $\rho$ and $\tilde{\rho}$ are said to be related if there exists a continuous map $\tilde{\lambda}$ such that
	
	$i)$ The diagram below commutes   
 \begin{equation}
\xymatrixcolsep{4pc}\xymatrix{
	\Gamma \ar[d]_-{\rho} 
	\ar[r]^-{\tilde{\lambda}} &\Gamma_{\lambda} \ar[d]_-{\tilde{\rho}} \\
	G\ar[r]^-{\lambda}& G_\lambda
}
\end{equation}

	$ii)$ For every $d\in D$ and $g\in \Gamma$, $\tilde{\lambda}(d\cdot g)=\tilde{\lambda}(d)\tilde{\lambda}(g)$.
\end{definition} Observe, from the diagram, that $\tilde{\lambda}=\tilde{\rho}^{-1}\cdot\lambda\cdot\rho$ and, hence, for $d\in D$ we have $\tilde{\rho}\tilde{\lambda}(d)=\lambda\cdot \rho(d)=\lambda(e_G)=e_{G_\lambda}$, since $\lambda$ is a homomorphism. Therefore, $\tilde{\lambda}(d)\in D_\lambda$ restricts to a homomorphism from $D$ to $D_{\lambda}$. The above definition and comment are sufficient for establishing an important partial result. 

\begin{prop} 	   	
	Let $P_{G_\lambda}$ be an extension of $P_G$ and consider related central homomorphisms $\rho$ and $\tilde{\rho}$. If $\gamma_{\alpha\beta}$ are the lifts of $g_{\alpha\beta}$, then $\gamma_{\alpha\beta}^{\lambda}=\tilde{\lambda}\cdot \gamma_{\alpha\beta}$ lifts $g_{\alpha\beta}^\lambda$. Besides, for $x\in U_\alpha\cap U_\beta\cap U_\kappa$ 
	\begin{equation} \Theta_{\alpha\beta\kappa}(x)=\tilde{\lambda}(\gamma_{\alpha\beta}(x))\tilde{\lambda}(\gamma_{\beta\kappa}(x))\tilde{\lambda}(\gamma_{\alpha\beta}(x)\gamma_{\beta\kappa}(x))^{-1}\in D_\lambda.
	\end{equation}  
\end{prop}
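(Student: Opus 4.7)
The proof splits naturally into two short verifications, and both reduce to applying $\tilde{\rho}$ and chasing around the commutative square in the definition above.

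\emph{Step 1: the lifting property.} First I would show that $\gamma_{\alpha\beta}^\lambda=\tilde{\lambda}\circ\gamma_{\alpha\beta}$ lifts $g_{\alpha\beta}^\lambda$, i.e.\ that $\tilde{\rho}\circ\gamma_{\alpha\beta}^\lambda=g_{\alpha\beta}^\lambda$. This is purely diagrammatic: by commutativity, $\tilde{\rho}\circ\tilde{\lambda}=\lambda\circ\rho$, hence
\begin{equation*}
\tilde{\rho}(\gamma^\lambda_{\alpha\beta}(x))=\tilde{\rho}(\tilde{\lambda}(\gamma_{\alpha\beta}(x)))=\lambda(\rho(\gamma_{\alpha\beta}(x)))=\lambda(g_{\alpha\beta}(x))=g^\lambda_{\alpha\beta}(x),
\end{equation*}
where in the penultimate equality I used that $\gamma_{\alpha\beta}$ is itself a lift of $g_{\alpha\beta}$ (i.e.\ $\rho\circ\gamma_{\alpha\beta}=g_{\alpha\beta}$).

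\emph{Step 2: membership in $D_\lambda$.} To prove $\Theta_{\alpha\beta\kappa}(x)\in D_\lambda=\ker(\tilde{\rho})$, I would apply $\tilde{\rho}$ to each of the three factors of $\Theta_{\alpha\beta\kappa}$. Since $\tilde{\rho}$ is a (group) homomorphism, it distributes over the product and commutes with inversion; using $\tilde{\rho}\circ\tilde{\lambda}=\lambda\circ\rho$ on each factor, I obtain
\begin{equation*}
\tilde{\rho}(\Theta_{\alpha\beta\kappa}(x))=\lambda(\rho(\gamma_{\alpha\beta}(x)))\,\lambda(\rho(\gamma_{\beta\kappa}(x)))\,\bigl[\lambda(\rho(\gamma_{\alpha\beta}(x)\gamma_{\beta\kappa}(x)))\bigr]^{-1}.
\end{equation*}
Recognizing $\rho\gamma_{\alpha\beta}=g_{\alpha\beta}$ and using that $\rho$ and $\lambda$ are both homomorphisms collapses this to $\lambda(g_{\alpha\beta}(x)\,g_{\beta\kappa}(x)\,(g_{\alpha\beta}(x)g_{\beta\kappa}(x))^{-1})=\lambda(e_G)=e_{G_\lambda}$, which is the desired claim.

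The only conceptual subtlety worth flagging is that we cannot simplify $\Theta_{\alpha\beta\kappa}$ itself to the identity in $\Gamma_\lambda$, because $\tilde{\lambda}$ is \emph{not} assumed to be a homomorphism on all of $\Gamma$: condition (ii) of the definition only guarantees $\tilde{\lambda}(d\cdot g)=\tilde{\lambda}(d)\tilde{\lambda}(g)$ when $d\in D$. Thus $\Theta_{\alpha\beta\kappa}$ genuinely measures the failure of $\tilde{\lambda}$ to distribute over the product $\gamma_{\alpha\beta}\gamma_{\beta\kappa}$, and the content of the proposition is precisely that this failure is controlled: it always lies in the central discrete kernel $D_\lambda$. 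There is no real obstacle in the argument; the whole statement is a one-step diagram chase, and the only care required is not to conflate $\tilde{\lambda}$ with a full homomorphism when manipulating the last factor of $\Theta_{\alpha\beta\kappa}$.
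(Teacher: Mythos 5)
Your proof is correct and follows essentially the same route as the paper: for the lifting claim you chase the commutative square $\tilde{\rho}\circ\tilde{\lambda}=\lambda\circ\rho$, and for the second claim you apply $\tilde{\rho}$ (a genuine homomorphism) factor by factor, reduce each $\tilde{\rho}\tilde{\lambda}$ to $\lambda\rho$, and collapse using the cocycle relation for $g_{\alpha\beta}$. If anything your bookkeeping in Step 2 is tidier than the paper's, which informally ``moves $\tilde{\lambda}$ inside'' the product before applying $\tilde{\rho}$; your closing remark on why $\Theta_{\alpha\beta\kappa}$ need not equal the identity correctly pinpoints the role of clause (ii) of the definition.
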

\begin{proof}
	Note that $\tilde{\rho}(\gamma^\lambda_{\alpha\beta})=\tilde{\rho}(\tilde{\lambda}\gamma_{\alpha\beta})=\lambda\rho\gamma_{\alpha\beta}$ (see the previous definition diagram). Thus $\tilde{\rho}\gamma^\lambda_{\alpha\beta}=g^\lambda_{\alpha\beta}$ and $\gamma_{\alpha\beta}^{\lambda}$ lifts $g^\lambda_{\alpha\beta}$.
	
	In the sequel, observe that (omitting the point) $e_\Gamma=\gamma_{\alpha\beta}\gamma_{\beta\kappa}(\gamma_{\alpha\beta}\gamma_{\beta\kappa})^{-1}\in D$ and, therefore, $\tilde{\lambda}$ acts as an isomorphism over it. In particular, it is possible to write $\Theta_{\alpha\beta\kappa}=\tilde{\lambda}[\gamma_{\alpha\beta}\gamma_{\beta\kappa}(\gamma_{\alpha\beta}\gamma_{\beta\kappa})^{-1}]$, so that $\tilde{\rho}\Theta_{\alpha\beta\kappa}=\lambda\rho[\gamma_{\alpha\beta}\gamma_{\beta\kappa}(\gamma_{\alpha\beta}\gamma_{\beta\kappa})^{-1}]$. Since $\rho$ is an isomorphism, we are left with 
	\begin{eqnarray}
	\tilde{\rho}\Theta_{\alpha\beta\kappa}=\lambda[\rho(\gamma_{\alpha\beta})\rho(\gamma_{\beta\kappa})\rho(\gamma_{\alpha\beta}\gamma_{\beta\kappa})^{-1}]=\lambda[g_{\alpha\beta}g_{\beta\kappa}(g_{\alpha\beta}g_{\beta\kappa})^{-1}]=\lambda(e_G)=e_{G_{\lambda}}. 
	\end{eqnarray} Hence $\Theta_{\alpha\beta\kappa}\in D_\lambda$. 
\end{proof}	
     
In view of the above result, $\Theta_{\alpha\beta\kappa}$ define a 2-cochain $\Theta$ with values in $D_\lambda$. Since, at this stage, we are concerned with the fibre bundle extension via $\lambda$ and generalizations, it is relevant to study how (and if) the obstructions of a given bundle are related to the other. The next proposition cares about this point. 
\begin{prop}
         Let $p$ and $p^\lambda$ be 2-cocycles to $P_G$ and $P_{G_\lambda}$, respectively. Then $p^\lambda(\alpha,\beta,\kappa)=\Theta(\alpha,\beta,\kappa)\tilde{\lambda}p(\alpha,\beta,\kappa)$.
\end{prop}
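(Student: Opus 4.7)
The plan is to unfold both sides of the claimed identity directly from the definitions, and then exploit the two structural ingredients already at hand: property $ii)$ of the definition of related central homomorphisms (which forces $\tilde{\lambda}$ to behave multiplicatively whenever one factor lies in $D$) and the centrality of $D_\lambda$ inside $\Gamma_\lambda$ (which permits cyclic rearrangement of any product whose total lies in $D_\lambda$).

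First, I would substitute $\gamma^\lambda_{\alpha\beta}=\tilde{\lambda}\circ\gamma_{\alpha\beta}$ into the defining expression of the cocycle to obtain
\begin{equation*}
p^\lambda(\alpha,\beta,\kappa)=\tilde{\lambda}(\gamma_{\beta\kappa})\,\tilde{\lambda}(\gamma_{\alpha\kappa})^{-1}\,\tilde{\lambda}(\gamma_{\alpha\beta}).
\end{equation*}
The next step is to rewrite the middle factor. From $p(\alpha,\beta,\kappa)=\gamma_{\beta\kappa}\gamma_{\alpha\kappa}^{-1}\gamma_{\alpha\beta}$ together with the centrality of $p$ in $\Gamma$, one derives $\gamma_{\alpha\kappa}=p(\alpha,\beta,\kappa)^{-1}\gamma_{\alpha\beta}\gamma_{\beta\kappa}$. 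Since $p^{-1}\in D$, property $ii)$ then yields $\tilde{\lambda}(\gamma_{\alpha\kappa})=\tilde{\lambda}(p)^{-1}\tilde{\lambda}(\gamma_{\alpha\beta}\gamma_{\beta\kappa})$, and hence $\tilde{\lambda}(\gamma_{\alpha\kappa})^{-1}=\tilde{\lambda}(\gamma_{\alpha\beta}\gamma_{\beta\kappa})^{-1}\,\tilde{\lambda}(p)$. Plugging this back in and sliding the central element $\tilde{\lambda}(p)\in D_\lambda$ to the right one is left with
\begin{equation*}
p^\lambda(\alpha,\beta,\kappa)=\bigl[\tilde{\lambda}(\gamma_{\beta\kappa})\,\tilde{\lambda}(\gamma_{\alpha\beta}\gamma_{\beta\kappa})^{-1}\,\tilde{\lambda}(\gamma_{\alpha\beta})\bigr]\,\tilde{\lambda}(p).
\end{equation*}

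To finish, I would note that the bracketed three-factor product is a cyclic permutation of the three factors defining $\Theta(\alpha,\beta,\kappa)=\tilde{\lambda}(\gamma_{\alpha\beta})\tilde{\lambda}(\gamma_{\beta\kappa})\tilde{\lambda}(\gamma_{\alpha\beta}\gamma_{\beta\kappa})^{-1}$. Because the previous proposition guarantees $\Theta\in D_\lambda$ and therefore central in $\Gamma_\lambda$, any product whose total equals a central element is invariant under cyclic permutation (if $\Theta=ABC$ is central, then $A^{-1}\Theta A=BCA$ equals $\Theta$). The bracket is thus identified with $\Theta(\alpha,\beta,\kappa)$, giving $p^\lambda=\Theta\,\tilde{\lambda}(p)$. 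The only subtle point to watch throughout is that $\tilde{\lambda}$ is \emph{not} a homomorphism in general, its multiplicative failure being precisely encoded by $\Theta$; consequently factors may be commuted past one another only when one of them sits in the central subgroup $D_\lambda$, which is exactly the situation that arises for $\Theta$ and for $\tilde{\lambda}(p)$.
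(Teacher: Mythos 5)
Your proof is correct, and the central move — isolating $\gamma_{\alpha\kappa}=p^{-1}\gamma_{\alpha\beta}\gamma_{\beta\kappa}$ so that $\tilde{\lambda}$ can be split via property $ii)$ into $\tilde{\lambda}(p)^{-1}\tilde{\lambda}(\gamma_{\alpha\beta}\gamma_{\beta\kappa})$ — is exactly the one in the paper. The bookkeeping differs slightly: the paper rewrites $\gamma^{\lambda}_{\alpha\kappa}=(p^\lambda)^{-1}\gamma^\lambda_{\alpha\beta}\gamma^\lambda_{\beta\kappa}$ as well and equates the two expressions for $\tilde{\lambda}(\gamma_{\alpha\kappa})=\gamma^\lambda_{\alpha\kappa}$, so that upon solving for $p^\lambda$ the three $\tilde{\lambda}(\gamma)$ factors come out already in the order defining $\Theta$. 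You instead unfold $p^\lambda$ directly from its cocycle definition and pick up the cyclic permutation $\tilde{\lambda}(\gamma_{\beta\kappa})\,\tilde{\lambda}(\gamma_{\alpha\beta}\gamma_{\beta\kappa})^{-1}\,\tilde{\lambda}(\gamma_{\alpha\beta})$, which you then identify with $\Theta$ by the clean (and correct) observation that a product whose value lies in the center is invariant under cyclic rotation. Your version is a touch more self-contained in that it never reuses the $\gamma^\lambda_{\alpha\kappa}$ identity, at the cost of the one extra cyclic-invariance remark; otherwise the two are the same argument. One minor point worth making explicit in a write-up: the step $\tilde{\lambda}(p^{-1})=\tilde{\lambda}(p)^{-1}$ rests on the fact (noted in the text following the definition of related central homomorphisms) that $\tilde{\lambda}$ restricts to a genuine group homomorphism $D\to D_\lambda$, which in turn is a consequence of property $ii)$ applied with both arguments in $D$.
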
         
\begin{proof}
We start noticing that $p^{-1}_{\alpha\beta\gamma}=\gamma_{\alpha\beta}^{-1}\gamma_{\alpha\kappa}\gamma_{\beta\kappa}^{-1}$ and, hence, $\gamma_{\alpha\kappa}=p^{-1}_{\alpha\beta\kappa}\gamma_{\alpha\beta}\gamma_{\beta\kappa}$, since $p_{\alpha\beta\kappa}\in C(\Gamma)$. Acting with $\tilde{\lambda}$  from the left and recalling that $p_{\alpha\beta\kappa}\in D$ and $\gamma_{\alpha\beta}\gamma_{\beta\kappa}\in \Gamma$, we have 
\begin{equation}
\tilde{\lambda}(\gamma_{\alpha\kappa})=\tilde{\lambda}(p^{-1}_{\alpha\beta\kappa})\tilde{\lambda}(\gamma_{\alpha\beta}\gamma_{\beta\kappa}). \label{aci}
\end{equation} On the other hand, $\gamma^{\lambda}_{\alpha\kappa}=(p^\lambda)^{-1}_{\alpha\beta\kappa}\gamma^\lambda_{\alpha\beta}\gamma^\lambda_{\beta\kappa}=(p^\lambda)^{-1}_{\alpha\beta\kappa}\tilde{\lambda}(\gamma_{\alpha\beta})\tilde{\lambda}(\gamma_{\beta\kappa})$. With the aid of Eq. (\ref{aci}) this last equation reads
\begin{equation}
(p^\lambda)^{-1}_{\alpha\beta\kappa}\tilde{\lambda}(\gamma_{\alpha\beta})\tilde{\lambda}(\gamma_{\beta\kappa})=\tilde{\lambda}(p^{-1}_{\alpha\beta\kappa})\tilde{\lambda}(\gamma_{\alpha\beta}\gamma_{\beta\kappa}),\label{acb}
\end{equation} from which 
\begin{equation}
p^\lambda_{\alpha\beta\kappa}=\tilde{\lambda}(\gamma_{\alpha\beta})\tilde{\lambda}(\gamma_{\beta\kappa})\tilde{\lambda}(\gamma_{\alpha\beta}\gamma_{\beta\kappa})^{-1}\tilde{\lambda}^{-1}(p^{-1}_{\alpha\beta\kappa}).
\end{equation} Finally, recalling the $\Theta_{\alpha\beta\kappa}$ definition (and observing that the last term is nothing but $\tilde{\lambda}(p_{\alpha\beta\kappa})$), the proposition is proved. \end{proof}

Before moving on, let us evince a direct consequence of the proposition just proved. Calling $\psi^\ddag$ the induced homomorphism from $\check{H}^2(\mathcal{M},D)$ to $\check{H}^2(\mathcal{M},D_\lambda)$, it is readily obtained that the obstruction classes of $P_G$ and $P_{G_\lambda}$ are related by $[p^\lambda]_{(P_{G_\lambda},\tilde{\rho})}=\Theta\cdot\psi^\ddag[p]_{(P_G,\rho)}$. Notice that in the presentation, we have said that $\tilde{\lambda}$ is a homomorphism when restricted to the center of $\Gamma$ and $\Gamma_\lambda$. Had we defined $\tilde{\lambda}$ as a homomorphism for the entire $\Gamma$, then $\Theta$ would collapse to the identity and the obstruction classes relation would simplify to $[p^\lambda]_{(P_{G_\lambda},\tilde{\rho})}=\psi^\ddag[p]_{(P_G,\rho)}$.  

We can now study the conditions under which the $\Gamma-$structure is unique. Start admitting a principal fibre bundle $P_G$ with two $\Gamma-$structures $(P_1,\eta_1)$ and $(P_2,\eta_2)$. If $\sigma^G_\alpha$ performs local sections for $P_G$ (defined over a simple cover), then there exist local sections $\sigma^1_\alpha$ ($\sigma^2_\alpha$) for $P_1$ ($P_2$) such that $\eta_1\cdot \sigma^1_\alpha=\sigma_\alpha$ ($\eta_2\cdot \sigma^2_\alpha=\sigma_\alpha$). Call $\gamma^i_{\alpha\beta}:U_\alpha\cap U_\beta\rightarrow\Gamma^i$ ($i=1,2$) the transition functions and define special maps $\delta_{\alpha\beta}:U_\alpha\cap U_\beta\rightarrow \Gamma^1$ by 
\begin{equation}
\delta_{\alpha\beta}(x)=\gamma^1_{\alpha\beta}(x)(\gamma^2_{\alpha\beta})^{-1}(x), \hspace{.2cm} x\in U_\alpha\cap U_\beta.\label{spc}
\end{equation}It is important to say some words about the map above. Notice that it engenders a correlation between the $\Gamma-$structures in a rather special way: it passes through a point in the base manifold. This is not new in some sense: recall the discussion around Theorem 1. The encoded results depend on the realization (or not) of the consistency conditions by the lifts. Base manifold obstructions to this realization forbid the very existence of spinor structures. Here, we shall face a similar aspect: the base manifold shall dictate the ensuing results. Some base manifolds will be such that $\delta_{\alpha\beta}$ are trivial (equal to de identity) and hence, from (\ref{spc}), the pretense different transition functions $\gamma^1$ and $\gamma^2$ are related.

On the other hand, obstructions to the $\delta_{\alpha\beta}$ triviality lead to genuinely different spin structures. These aspects show the umbilical relation between spinors and the spacetime upon which they are defined. In the sequel, we shall formalize the last part of these comments with several propositions.   
\begin{prop}
  The map $\delta_{\alpha\beta}:U_\alpha\cap U_\beta\rightarrow \Gamma^1$ given in (\ref{spc}) is a cochain. 
\end{prop}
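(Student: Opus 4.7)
My plan is to show that $\delta_{\alpha\beta}$ is actually a 1-cochain with values in the discrete subgroup $D = \ker(\rho) \subset C(\Gamma)$, which is the natural coefficient group for the \v{C}ech cochain complex used throughout this section. The strategy breaks into a short algebraic verification followed by a topological argument mirroring the one used earlier to show that $p_{\alpha\beta\mu}$ is constant.

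First I would establish that both collections of lifts cover the same transition data of $P_G$. Since $(P_1,\eta_1)$ and $(P_2,\eta_2)$ are both $\Gamma-$structures over the same $P_G$ with local sections $\sigma^G_\alpha$ inducing the single transition system $g_{\alpha\beta}$, Lemma \ref{adm} applied to each structure gives $\rho\circ\gamma^1_{\alpha\beta} = g_{\alpha\beta} = \rho\circ\gamma^2_{\alpha\beta}$. Applying $\rho$ to (\ref{spc}) and using that $\rho$ is a homomorphism, I obtain
\begin{equation}
\rho(\delta_{\alpha\beta}(x)) = \rho(\gamma^1_{\alpha\beta}(x))\,\rho(\gamma^2_{\alpha\beta}(x))^{-1} = g_{\alpha\beta}(x)\,g_{\alpha\beta}(x)^{-1} = e_G,
\end{equation}
so that $\delta_{\alpha\beta}(x) \in \ker(\rho) = D$ for every $x \in U_\alpha \cap U_\beta$.

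Next I would invoke continuity and the topology of $D$: the map $\delta_{\alpha\beta}$ is continuous as a product of continuous maps, its image lies in the discrete group $D$, and $U_\alpha \cap U_\beta$ is connected (by the assumption that the covering $\{U_\alpha\}$ is simple, so pairwise intersections are either empty or contractible). Hence $\delta_{\alpha\beta}$ is constant on each nonempty $U_\alpha \cap U_\beta$, defining an assignment $(\alpha,\beta) \mapsto \delta(\alpha,\beta) \in D$. Because $D \subset C(\Gamma)$ is abelian, this is exactly the data of a 1-cochain in the \v{C}ech complex $\check{C}^1(\mathcal{M},D)$, as claimed.

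I do not expect a genuine obstacle here; the only subtlety worth flagging is the tacit identification of $\Gamma^1$ and $\Gamma^2$ as the same abstract group $\Gamma$, so that the product $\gamma^1_{\alpha\beta}(\gamma^2_{\alpha\beta})^{-1}$ makes sense — this is implicit in the definition of a $\Gamma-$structure, where the structure group is fixed at $\Gamma$. The argument is entirely parallel to the constancy argument used for $p_{\alpha\beta\mu}$ in the previous subsection, and sets the stage for studying the cocycle condition and cohomology class of $\delta$ in the subsequent propositions.
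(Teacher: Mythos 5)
Your proof is correct and, if anything, cleaner than the paper's. The paper's proof introduces an auxiliary map $\bar\lambda:\Gamma^2\to\Gamma^1$ (with $\gamma^1_{\alpha\beta}=\bar\lambda(\gamma^2_{\alpha\beta})$, described as a homomorphism restricted to the kernel) and routes the computation of $\rho(\delta_{\alpha\beta})$ through it, which makes the argument heavier and the intermediate steps harder to parse. You instead invoke Lemma~\ref{adm} for each $\Gamma$-structure to get $\rho\circ\gamma^1_{\alpha\beta}=g_{\alpha\beta}=\rho\circ\gamma^2_{\alpha\beta}$, and then apply the homomorphism $\rho$ directly to $\delta_{\alpha\beta}=\gamma^1_{\alpha\beta}(\gamma^2_{\alpha\beta})^{-1}$ to obtain $\rho(\delta_{\alpha\beta})=e_G$, hence $\delta_{\alpha\beta}\in\ker\rho=D$. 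The substance (landing in $D$) is the same, but your version makes explicit where the two families of lifts share the same projected transition data. You also add the constancy argument, using discreteness of $D$ together with connectedness of the nonempty pairwise intersections of a simple cover; the paper proves this point earlier for $p_{\alpha\beta\mu}$ and silently reuses it here, but it is genuinely needed to read $\delta$ as a \v{C}ech $1$-cochain, so it is a legitimate and welcome addition. No gaps.
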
  
\begin{proof}\label{vis}
	Let $\bar{\lambda}:\Gamma^2\rightarrow \Gamma^1$ ($\gamma^1_{\alpha\beta}=\bar{\lambda}\gamma^2_{\alpha\beta}$) a homomorphism restricted to the kernel. Applying the central homomorphism, we have $\rho[\delta_{\alpha\beta}]=\rho[\gamma^1_{\alpha\beta}(\gamma^2_{\alpha\beta})^{-1}]=\rho[\bar{\lambda}\underbrace{\gamma^2_{\alpha\beta}(\gamma^2_{\alpha\beta})^{-1}}_{e_{\Gamma^2}}]$, so that $\rho[\delta_{\alpha\beta}]=\rho[e_{\Gamma^1}]=e_{G}$, and therefore $\delta_{\alpha\beta}(x)\in D$. 
\end{proof}	
\begin{prop}
	The cochain $\delta_{\alpha\beta}$ is a cocycle. 
\end{prop}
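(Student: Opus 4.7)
The plan is to verify the cocycle equation $\delta_{\alpha\beta}(x)\delta_{\beta\mu}(x) = \delta_{\alpha\mu}(x)$ for $x \in U_\alpha \cap U_\beta \cap U_\mu$, which is equivalent to $(\partial\delta)(\alpha,\beta,\mu) = e$ in the abelian group $D$. The two inputs I will lean on are both already available: (a) by the previous proposition $\delta_{\alpha\beta}$ takes values in $D \subset C(\Gamma)$, so it commutes with every element of $\Gamma$; and (b) since both $(P_1,\eta_1)$ and $(P_2,\eta_2)$ are $\Gamma$-structures of the same $P_G$, Lemma \ref{adm} guarantees that the lifts satisfy the consistency relations $\gamma^i_{\alpha\beta}\gamma^i_{\beta\mu} = \gamma^i_{\alpha\mu}$ for $i=1,2$.

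The computation then proceeds in a single direct chain. Starting from the definition,
\begin{equation}
\delta_{\alpha\beta}\delta_{\beta\mu} = \gamma^1_{\alpha\beta}(\gamma^2_{\alpha\beta})^{-1}\,\gamma^1_{\beta\mu}(\gamma^2_{\beta\mu})^{-1} = \gamma^1_{\alpha\beta}(\gamma^2_{\alpha\beta})^{-1}\,\delta_{\beta\mu},\nonumber
\end{equation}
I move $\delta_{\beta\mu}$ past $(\gamma^2_{\alpha\beta})^{-1}$ using centrality, obtaining $\gamma^1_{\alpha\beta}\,\delta_{\beta\mu}\,(\gamma^2_{\alpha\beta})^{-1} = \gamma^1_{\alpha\beta}\gamma^1_{\beta\mu}(\gamma^2_{\beta\mu})^{-1}(\gamma^2_{\alpha\beta})^{-1}$. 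Applying the consistency relation on each factor pair in $\Gamma^i$, this collapses to $\gamma^1_{\alpha\mu}(\gamma^2_{\alpha\mu})^{-1} = \delta_{\alpha\mu}$, which is precisely the cocycle condition.

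There is no conceptual obstacle here; the only subtlety is keeping track of which factors lie in the center and which do not, since $\Gamma$ is not assumed abelian. The non-central pieces $\gamma^1_{\alpha\beta}$, $\gamma^2_{\alpha\beta}$, $\gamma^1_{\beta\mu}$, $\gamma^2_{\beta\mu}$ must be paired via the cocycle identity in the correct order, and all the rearrangements have to be justified by passing a central $\delta$ across a non-central $\gamma^i$. Once this bookkeeping is respected, the identity $\delta_{\alpha\beta}\delta_{\beta\mu} = \delta_{\alpha\mu}$ is immediate, completing the proof that $\delta$ is a $1$-cocycle in $\check{C}^1(\mathcal{M},D)$.
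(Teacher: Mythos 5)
Your proof is correct and rests on the same two facts the paper leans on: the centrality of $D$ (so the $\delta$'s can be slid past the $\gamma^i$'s) and the cocycle consistency $\gamma^i_{\alpha\beta}\gamma^i_{\beta\mu}=\gamma^i_{\alpha\mu}$ for each $\Gamma$-structure. The paper expresses the coboundary $\partial\delta(\alpha,\beta,\kappa)=\delta_{\beta\kappa}\delta^{-1}_{\alpha\kappa}\delta_{\alpha\beta}$ in terms of the $\gamma^i$'s and then defers to "the same argument based on $\bar{\lambda}$," whereas you carry out the equivalent multiplicative check $\delta_{\alpha\beta}\delta_{\beta\mu}=\delta_{\alpha\mu}$ explicitly — a more self-contained rendering of what the paper merely gestures at.
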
	
\begin{proof}
	Take $x\in U_\alpha\cap U_\beta\cap U_\kappa$ and compute the cobordism of $\delta$ by $\partial\delta(\alpha,\beta,\kappa)=\delta_{\beta\kappa}\delta^{-1}_{\alpha\kappa}\delta_{\alpha\beta}$; it yields
	\begin{equation}
	\partial\delta(\alpha,\beta,\kappa)=[\gamma^1_{\beta\kappa}(\gamma^2_{\beta\kappa})^{-1}][\gamma^1_{\alpha\kappa}(\gamma^2_{\alpha\kappa})^{-1}]^{-1}[\gamma^1_{\alpha\beta}(\gamma^2_{\alpha\beta})^{-1}]
	\end{equation} and, by using the same argument of the last proposition proof (based on $\bar{\lambda}$), it is readily verified that $\partial \delta(\alpha,\beta,\kappa)=e_{\Gamma^1}$. 
\end{proof}	In this vein, the map expressed in Eq. (\ref{spc}) determines an element of $\check{H}^1(\mathcal{M},D)$. It is frequently called the {\it difference class} for the two $\Gamma-$structures and we shall denote it by $[\delta]_{(P_1,P_2)}$. The next proposition helps us understand some important properties of it. 
\begin{prop} 
Let $P_{\Gamma^i}$ $(i=1,2,3)$ be $\Gamma-$structures for $P_G$ ($P_i$ for short\footnote{Observe that, despite the notations, for all interesting cases, the Gamma's provide the same structure group.}). The difference class is such that $1)$ $[\delta]_{(P_1,P_2)}=[\delta]^{-1}_{(P_2,P_1)}$ and $2)$ $[\delta]_{(P_1,P_2)}\cdot [\delta]_{(P_2,P_3)}=[\delta]_{(P_1,P_3)}$.
\end{prop}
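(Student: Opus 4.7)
The idea is to verify both identities directly at the level of cocycle representatives and then descend to cohomology. Since $P_1, P_2, P_3$ are all $\Gamma$-structures for the same $P_G$, on a common simple covering $\{U_\alpha\}$ I can choose local sections $\sigma^i_\alpha$ for $P_i$ that project to a single chosen system $\sigma^G_\alpha = \eta_i \circ \sigma^i_\alpha$. The resulting transition functions $\gamma^i_{\alpha\beta}$ then all lift the same $g_{\alpha\beta}$ of $P_G$, which is exactly what is needed to make the expressions $\delta^{(i,j)}_{\alpha\beta} := \gamma^i_{\alpha\beta}(\gamma^j_{\alpha\beta})^{-1}$ land in $\ker\rho = D$, as Proposition 4 has already established.

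For part $1)$, I would simply compute the cochain product
\begin{equation}
\delta^{(1,2)}_{\alpha\beta}\,\delta^{(2,1)}_{\alpha\beta} \;=\; \gamma^1_{\alpha\beta}(\gamma^2_{\alpha\beta})^{-1}\,\gamma^2_{\alpha\beta}(\gamma^1_{\alpha\beta})^{-1} \;=\; e_{\Gamma},\nonumber
\end{equation}
so that already at the level of cochains $\delta^{(2,1)}_{\alpha\beta} = (\delta^{(1,2)}_{\alpha\beta})^{-1}$. Because the inverse operation on cochains with coefficients in the abelian group $D$ descends to the inverse on $\check H^1(\mathcal{M},D)$, the identity $[\delta]_{(P_1,P_2)} = [\delta]^{-1}_{(P_2,P_1)}$ follows at once.

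For part $2)$, the same kind of direct cancellation gives
\begin{equation}
\delta^{(1,2)}_{\alpha\beta}\,\delta^{(2,3)}_{\alpha\beta} \;=\; \gamma^1_{\alpha\beta}(\gamma^2_{\alpha\beta})^{-1}\,\gamma^2_{\alpha\beta}(\gamma^3_{\alpha\beta})^{-1} \;=\; \gamma^1_{\alpha\beta}(\gamma^3_{\alpha\beta})^{-1} \;=\; \delta^{(1,3)}_{\alpha\beta},\nonumber
\end{equation}
which is an equality of cochains with values in $D$ and thus in particular of cohomology classes. No centrality is needed for the pair of calculations themselves, because the middle factors cancel algebraically in $\Gamma$; centrality of $D$ in $\Gamma$ is only invoked implicitly to ensure that the $\delta$'s behave additively in the abelian coefficient group when passing to $\check H^1$.

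The only genuine subtlety, and thus the ``main obstacle'' to keep honest, is that each $\gamma^i_{\alpha\beta}$ is defined only once local sections for $P_i$ are chosen; a different choice shifts $\gamma^i_{\alpha\beta}$ by a $D$-valued $0$-coboundary, hence $\delta^{(i,j)}_{\alpha\beta}$ by the product of two such coboundaries, leaving $[\delta]_{(P_i,P_j)}$ invariant. Once I check this well-definedness (which mirrors the independence of $[p]$ under the choice of lift discussed after Lemma 2), the two identities above are manifestly independent of the chosen representatives, and the proposition is proved.
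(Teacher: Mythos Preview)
Your proposal is correct and follows essentially the same approach as the paper: both arguments work directly with the cocycle representatives $\delta^{(i,j)}_{\alpha\beta}=\gamma^i_{\alpha\beta}(\gamma^j_{\alpha\beta})^{-1}$ and obtain the two identities by the obvious cancellation of the middle factors. Your extra remark on well-definedness under change of local sections is a welcome addition that the paper leaves implicit.
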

\begin{proof}
	From the very definition of the difference class, we have 
	\begin{enumerate}
\item $[\delta]_{(P_1,P_2)}=\gamma^1_{\alpha\beta}(\gamma^2_{\alpha\beta})^{-1}=(\gamma^2_{\alpha\beta}(\gamma^1_{\alpha\beta})^{-1})^{-1}=[\delta]^{-1}_{(P_2,P_1)}$;  
	
\item $[\delta]_{(P_1,P_2)}\cdot [\delta]_{(P_2,P_3)}=(\gamma^1_{\alpha\beta}(\gamma^2_{\alpha\beta})^{-1})\cdot(\gamma^2_{\alpha\beta}(\gamma^3_{\alpha\beta})^{-1})=\gamma^1_{\alpha\beta}(\gamma^3_{\alpha\beta})^{-1}=[\delta]_{(P_1,P_3)}$.
\end{enumerate}
\end{proof}
There is another relevant proposition before we state a powerful theorem. 
\begin{prop}
	Let $P_{i}$ $(i=1,2,3)$ be $\Gamma-$structures for $P_G$. There is an isomorphism $P_{\Gamma^2}\simeq P_{\Gamma^3}$ if, and only if, $[\delta]_{(P_2,P_3)}=e$.
\end{prop}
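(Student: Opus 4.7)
The plan is to treat the two directions separately, with the forward direction being essentially a bookkeeping exercise and the reverse direction requiring the extra observation that the cochain $h$ witnessing triviality of $\delta$ takes values in $D=\ker(\rho)$, which is exactly what is needed so that re-gauging sections does not spoil compatibility with the projections $\eta_{2},\eta_{3}$.

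For the direction $(\Rightarrow)$, I would assume that $\phi\colon P_{\Gamma^2}\to P_{\Gamma^3}$ is an isomorphism of $\Gamma$-structures, i.e.\ a $\Gamma$-equivariant fibre map satisfying $\eta_{3}\circ\phi=\eta_{2}$. Starting from a given local section system $\sigma^{2}_{\alpha}$ of $P_{2}$ (chosen so that $\eta_{2}\circ\sigma^{2}_{\alpha}=\sigma^{G}_{\alpha}$), I would transport it via $\phi$, defining $\sigma^{3}_{\alpha}:=\phi\circ\sigma^{2}_{\alpha}$. The compatibility $\eta_{3}\circ\sigma^{3}_{\alpha}=\sigma^{G}_{\alpha}$ is then automatic, and the $\Gamma$-equivariance of $\phi$ gives $\sigma^{3}_{\beta}=\phi(\sigma^{2}_{\alpha}\gamma^{2}_{\alpha\beta})=\sigma^{3}_{\alpha}\gamma^{2}_{\alpha\beta}$. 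Hence $\gamma^{3}_{\alpha\beta}=\gamma^{2}_{\alpha\beta}$, so the representative cocycle $\delta_{\alpha\beta}=\gamma^{2}_{\alpha\beta}(\gamma^{3}_{\alpha\beta})^{-1}$ is the identity, whence $[\delta]_{(P_{2},P_{3})}=e$.

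For the converse $(\Leftarrow)$, from $[\delta]_{(P_{2},P_{3})}=e$ I would take a $0$-cochain $h_{\alpha}\colon U_{\alpha}\to D$ realising the coboundary, with a convention such as $\delta_{\alpha\beta}=h_{\alpha}h_{\beta}^{-1}$, which, together with the fact that $D\subset C(\Gamma)$, yields the crucial identity $\gamma^{3}_{\alpha\beta}=h_{\alpha}^{-1}\gamma^{2}_{\alpha\beta}h_{\beta}$. I would then re-gauge the sections on $P_{2}$ by $\tilde{\sigma}^{2}_{\alpha}:=\sigma^{2}_{\alpha}\cdot h_{\alpha}$ and verify that (a) these still lift $\sigma^{G}_{\alpha}$, since $\eta_{2}(\tilde{\sigma}^{2}_{\alpha})=\sigma^{G}_{\alpha}\rho(h_{\alpha})=\sigma^{G}_{\alpha}$ because $h_{\alpha}\in\ker\rho$, and (b) the resulting transition functions $\tilde{\gamma}^{2}_{\alpha\beta}$ coincide with $\gamma^{3}_{\alpha\beta}$. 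Having identical transition data, one can define the desired map $\phi$ on the relevant local trivialisations by $\phi(\tilde{\sigma}^{2}_{\alpha}(x)\cdot\gamma)=\sigma^{3}_{\alpha}(x)\cdot\gamma$ for all $\gamma\in\Gamma$. The equality $\tilde{\gamma}^{2}_{\alpha\beta}=\gamma^{3}_{\alpha\beta}$ is precisely what makes this definition chart-independent on overlaps $U_{\alpha}\cap U_{\beta}$, hence it glues to a globally defined $\Gamma$-equivariant fibre bijection.

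The final step is to check that $\phi$ intertwines the $\Gamma$-structure maps, i.e.\ $\eta_{3}\circ\phi=\eta_{2}$. This is immediate: $\eta_{3}(\sigma^{3}_{\alpha}\gamma)=\sigma^{G}_{\alpha}\rho(\gamma)=\eta_{2}(\tilde{\sigma}^{2}_{\alpha}\gamma)$. The step I anticipate as the most delicate is ensuring that the re-gauging does not disrupt the compatibility with $\eta$, and this hinges on the fact, already used in the proof of Theorem~1, that the coboundary cochain $h$ necessarily lands in $D$. In other words, the centrality of $D$ (used to rearrange $\gamma^{2}_{\alpha\beta}$ and $h$) together with $D=\ker\rho$ (used to preserve the lift of $\sigma^{G}_{\alpha}$) are precisely the two algebraic features that make triviality of $[\delta]_{(P_{2},P_{3})}$ equivalent to existence of the $\Gamma$-structure isomorphism.
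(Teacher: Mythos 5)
Your proof is correct, and in fact it is more careful than the paper's own argument for this proposition. The paper's proof of the forward direction compresses everything into the assertion that an isomorphism gives a map $k$ with $\gamma^2_{\alpha\beta}=k\gamma^3_{\alpha\beta}$ and then immediately writes $[\delta]=e$; and its proof of the reverse direction reads $[\delta]_{(P_2,P_3)}=e$ as if it forced the pointwise equality $\gamma^2_{\alpha\beta}=\gamma^3_{\alpha\beta}$, which is a conflation of a cohomology class with a chosen representative. Your version handles both points cleanly: for $(\Rightarrow)$ you transport sections along $\phi$ (exactly the argument the paper itself uses for the $(\Rightarrow)$ direction of Theorem~2, so this is the intended route); and for $(\Leftarrow)$ you observe that triviality of $[\delta]$ only gives a splitting $0$-cochain $h_\alpha$ with values in $D$, re-gauge $\sigma^2_\alpha\mapsto\sigma^2_\alpha h_\alpha$, verify that centrality of $D$ turns the new transition data into $\gamma^3_{\alpha\beta}$, and glue the resulting chart-by-chart identification into a global $\Gamma$-equivariant map compatible with $\eta_2,\eta_3$. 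This construction is essentially what the paper does (with a $1$-cochain $q$ and the map defined in Eq.~(\ref{mai})) in the $(\Leftarrow)$ direction of Theorem~2; you have simply given the self-contained, correct version directly at the level of this proposition. The two algebraic ingredients you isolate at the end --- $D\subset C(\Gamma)$ to rearrange the re-gauging, and $D=\ker\rho$ so the re-gauging preserves the lift of $\sigma^G_\alpha$ --- are exactly the right ones and are the hypotheses the paper relies on implicitly.
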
	
\begin{proof}
	($\Rightarrow$) Consider $P_{\Gamma^2}\simeq P_{\Gamma^3}$. Then there exists $k:\Gamma_3\rightarrow \Gamma_2$ such that $\gamma^3_{\alpha\beta}\mapsto \gamma^2_{\alpha\beta}=k\gamma^3_{\alpha\beta}$. Therefore $[\delta]_{(P_2,P_3)}=\gamma^2_{\alpha\beta}(\gamma^3_{\alpha\beta})^{-1}=k\gamma^3_{\alpha\beta}(\gamma^3_{\alpha\beta})^{-1}=e$.
	
	($\Leftarrow$) If $[\delta]_{(P_2,P_3)}=e$, then the previous proposition property $2)$ collapses to $[\delta]_{(P_1,P_2)}=[\delta]_{(P_1,P_3)}$, that is, $\gamma^1_{\alpha\beta}(\gamma^2_{\alpha\beta})^{-1}=\gamma^1_{\alpha\beta}(\gamma^3_{\alpha\beta})^{-1}$. Entering with $(\gamma^1_{\alpha\beta})^{-1}$ from the left one arrives at $\gamma^2_{\alpha\beta}(x)=\gamma^3_{\alpha\beta}(x)$ for $x\in U_\alpha\cap U_\beta$.  
\end{proof}
\begin{theorem}
	Let $P$ and $P_{i}$ $(i=1,2)$ be $\Gamma-$structures for $P_G$. Then $P_1\simeq P_2$ if, and only if, $[\delta]_{(P,P_1)}=[\delta]_{(P,P_2)}$. 
\end{theorem}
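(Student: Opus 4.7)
The proof should fall out almost immediately from the two propositions preceding the theorem, so my plan is to chain them together. The key algebraic fact is the ``cocycle-type'' identity $[\delta]_{(P_a,P_b)}\cdot[\delta]_{(P_b,P_c)}=[\delta]_{(P_a,P_c)}$ together with the inversion property $[\delta]_{(P_a,P_b)}=[\delta]^{-1}_{(P_b,P_a)}$, and the characterization of isomorphism as triviality of the difference class.

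For the ($\Rightarrow$) direction, I would assume $P_1\simeq P_2$. By the immediately preceding proposition this forces $[\delta]_{(P_1,P_2)}=e$. Then I would apply property $2)$ of the difference class to the triple $(P,P_1,P_2)$, giving
\begin{equation}
[\delta]_{(P,P_1)}\cdot[\delta]_{(P_1,P_2)}=[\delta]_{(P,P_2)},
\end{equation}
and substituting $[\delta]_{(P_1,P_2)}=e$ yields $[\delta]_{(P,P_1)}=[\delta]_{(P,P_2)}$.

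For the ($\Leftarrow$) direction, I would start from the hypothesis $[\delta]_{(P,P_1)}=[\delta]_{(P,P_2)}$ and multiply on the left by $[\delta]^{-1}_{(P,P_1)}$, which by property $1)$ equals $[\delta]_{(P_1,P)}$. Using property $2)$ again on the right-hand side gives
\begin{equation}
e=[\delta]_{(P_1,P)}\cdot[\delta]_{(P,P_2)}=[\delta]_{(P_1,P_2)},
\end{equation}
and then the preceding proposition directly delivers $P_1\simeq P_2$.

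There is essentially no obstacle: the whole statement is a formal consequence of the groupoid-like structure of the difference classes in $\check{H}^1(\mathcal{M},D)$ combined with the isomorphism criterion. The only subtle point to mention explicitly is that both directions rely crucially on the fact that the preceding proposition was an ``iff'', so that triviality of $[\delta]_{(P_1,P_2)}$ is equivalent to $P_1\simeq P_2$ and can be used in either direction.
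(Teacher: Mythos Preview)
Your argument is correct. Both you and the paper begin with the same reduction: by the cocycle identities for the difference class, $[\delta]_{(P,P_1)}=[\delta]_{(P,P_2)}$ is equivalent to $[\delta]_{(P_1,P_2)}=e$, so the theorem amounts to the statement ``$P_1\simeq P_2$ iff $[\delta]_{(P_1,P_2)}=e$''. At this point you simply invoke the preceding proposition and stop, whereas the paper instead \emph{re-proves} that equivalence in full detail: it writes the trivial class as a coboundary $\delta_{\alpha\beta}=q(\alpha)q^{-1}(\beta)$, defines a map $\phi:P_1\to P_2$ locally by $\phi_\alpha(z)=\sigma^2_\alpha(\pi_1(z))\,q(\alpha)\gamma_\alpha(z)$, checks on overlaps that the local definitions patch to a global map, and verifies $\Gamma$-equivariance and compatibility with the $\eta$-maps. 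Your route is more economical and entirely legitimate given what has already been stated; the paper's route buys an explicit, hands-on construction of the bundle isomorphism --- and arguably supplies the rigor that the terse proof of the preceding proposition glosses over, since that earlier argument tacitly identifies triviality of the cohomology class with the cocycle being identically $e$.
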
	
\begin{proof}
	Bearing in mind the last proposition, it suffices to demonstrate that $P_1\simeq P_2$ if, and only if, $[\delta]_{(P_1,P_2)}=e$. 

($\Rightarrow$) Starting with $\phi:P_1\rightarrow P_2$ being a $\Gamma-$structures isomorphism and setting by $\sigma^1_\alpha:U_\alpha\rightarrow P_1$ a system of local sections for $P_1$, then $\sigma^2_\alpha=\phi\sigma^1_\alpha$ gives a system of local sections for $P_2$ (see the diagram below). 
\begin{equation}
\xymatrix{
	\mathcal{M}\supset U_\alpha \ar[rdd]_{\sigma^2_\alpha} \ar[rr]^{\sigma^1_\alpha} & & P_1 \ar[ldd]^{\phi}\nonumber \\
	& & \\
	& P_2
}
\end{equation} Thus $\sigma^i_\beta=\sigma^i_\alpha\gamma^i_{\alpha\beta}$ $(i=1,2)$. Now, particularize $i=1$ and act with $\phi$ from the left to get $\sigma^2_\beta=\sigma^2_\alpha\gamma^1_{\alpha\beta}$. Therefore $\gamma^2_{\alpha\beta}=\gamma^1_{\alpha\beta}$, leading to $[\delta]_{P_1,P_2}=\gamma^1_{\alpha\beta}(\gamma^2_{\alpha\beta})^{-1}=e$. 

($\Leftarrow$) This step of the proof is quite subtle. Assuming $[\delta]_{(P_1,P_2)}=e$, there must exist a $1-$cochain $q$ such that $\delta_{\alpha\beta}=q(\alpha)q^{-1}(\beta)$, for $\partial \delta(\alpha,\beta,\kappa)=\delta_{\beta\kappa}$. Hence, for $z\in W_\alpha\cap W_\beta$ we have $\delta^{-1}_{\alpha\kappa}\delta_{\alpha\beta}=[q(\beta)q^{-1}(\kappa)][q(\alpha)q^{-1}(\kappa)]^{-1}[q(\alpha)q^{-1}(\beta)]=e$. Therefore $\delta_{\alpha\beta}=\gamma^1_{\alpha\beta}\gamma^2_{\alpha\beta}=q(\alpha)q(\beta)$ and a straightforward manipulation leads to $q^{-1}(\alpha)\gamma^1_{\alpha\beta}=q^{-1}(\beta)\gamma^2_{\alpha\beta}$. Recall that $\gamma^i_{\alpha\beta}: U_\alpha\cap U_\beta\rightarrow \Gamma^i$, $q:U_\alpha\rightarrow D$ and $D\in C(\Gamma^i)$. Thus, $q$ and $\gamma_{\alpha\beta}$ commute yielding 
\begin{equation}
\gamma^2_{\alpha\beta}=q^{-1}(\alpha)\gamma^1_{\alpha\beta}q(\beta). \label{pdp}
\end{equation} Now, recall that the fibre bundle structure allows for the existence of invertible projections such that $\pi^{-1}:U_\alpha\subset \mathcal{M}\rightarrow W_\alpha\subset T\mathcal{M}$. Besides, take $z\in W_\alpha$ and consider $\gamma_\alpha(z)\in \Gamma^1$ as the (unique\footnote{In fact, assume the existence of $\gamma'_\alpha$ such that $z=\sigma^1_\alpha(\pi_1(z))\gamma'_\alpha(z)$. Equaling this last expression to (\ref{visv}) and acting with $(\sigma_\alpha^1)^{-1}$ from the left one arrives at the uniqueness of $\gamma_\alpha$.}) element satisfying    
\begin{equation}\label{visv}
z=\sigma^1_\alpha(\pi_1(z))\gamma_\alpha(z),
\end{equation} where, obviously, $\pi_1^{-1}$ makes reference to $P_1$. Notice that in (\ref{visv}), we have a different order to the cross-section and the projection, so this composition in this order may not be the identity, and therefore, an adjustment coming from $\Gamma^1$ may be necessary. In the sequel, define a mapping $\phi:W_\alpha\rightarrow P_2$ by 
\begin{equation}
\phi_\alpha(z)=\sigma_\alpha^2(\pi_1(z))\cdot q(\alpha)\gamma_\alpha(z).\label{mai}
\end{equation} The idea is to prove that this local mapping defines a global one: the same map in every open set covering $\mathcal{M}$. Indeed, notice that $\phi_\beta(z)=\sigma_\beta^2(\pi_1(z))\cdot q(\beta)\gamma_\beta(z)$ and, in view of (\ref{pdp}), the usual cross sections relation $\sigma^2_\beta=\sigma^2_\alpha\gamma^2_{\alpha\beta}$ reads $\sigma^2_\beta=\sigma^2_\alpha q^{-1}(\alpha)\gamma^{1}_{\alpha\beta}q(\beta)$, by means of which we have 
\begin{equation}
\phi_\beta(z)=\sigma_\alpha^2(\pi_1(z)) q^{-1}(\alpha)\gamma^{1}_{\alpha\beta}q(\beta)\cdot q(\beta)\gamma_\beta(z).\label{com}
\end{equation} Nevertheless $q(\alpha)$ is a cochain, i. e., $q(\alpha)\in D\simeq \mathbb{Z}_2$ and thus $q^2=1$ $(q=q^{-1})$, simplifying (\ref{com}) to 
\begin{equation}
\phi_\beta(z)=\sigma_\alpha^2(\pi_1(z)) q(\alpha)\gamma^{1}_{\alpha\beta}\gamma_\beta(z).\label{bina}
\end{equation}

Consider $z\in W_\alpha\cap W_\beta$ and from (\ref{visv}) for $\alpha\mapsto\beta$ we have $[\sigma_\beta^1(\pi_1(z))]^{-1}z=\gamma_\beta(z)$ or, entering with the transition function from the left, $\gamma^1_{\alpha\beta}[\sigma_\beta^1(\pi_1(z))]^{-1}z=\gamma^1_{\alpha\beta}\gamma_\beta(z)$. The left-hand side of this last expression reads 
\begin{eqnarray}
[\sigma_\beta^1(\pi(z))(\gamma^1_{\alpha\beta})^{-1}]^{-1}z=[\sigma_\beta^1(\pi(z))\gamma^1_{\beta\alpha}]^{-1}z=[\sigma_\alpha^1(\pi(z))]^{-1}z 
\end{eqnarray} and using (\ref{visv}) we have $[\sigma_\alpha^1(\pi(z))]^{-1}z=\gamma_\alpha(z)$. Therefore $\gamma^1_{\alpha\beta}\gamma_\beta(z)=\gamma_\alpha(z)$ and back to (\ref{bina}) we are left with 
\begin{eqnarray}
\phi_\beta(z)=\sigma_\alpha^2(\pi_1(z)) q(\alpha)\gamma_\alpha(z)=\phi_\alpha(z), \hspace{.3cm} z\in W_\alpha\cap W_\beta. \label{bina2}
\end{eqnarray} Thus we have a global map $\phi:P_1\rightarrow P_2$. 

At the local sections level, this means the existence of $\kappa$ such that $\kappa(\sigma^1_\alpha)=\kappa(\sigma^1_\beta \gamma^1_{\beta\alpha})$ or, bearing in mind the discussion around Eq. (\ref{pdp}), $q(\alpha)\kappa\sigma^1_\alpha=q(\beta)\kappa\sigma^1_\beta\gamma^2_{\beta\alpha}$. In this vein, we can write 
\begin{eqnarray}
\sigma^2_\alpha=q(\alpha)\kappa\sigma_\alpha^1, \hspace{.3cm} \forall \alpha.\label{34}
\end{eqnarray} Now consider the expression (\ref{visv}). Taking it for $z\mapsto z\cdot \gamma$, where $\gamma\in \Gamma$, we have $z\cdot\gamma=\sigma^1_\alpha(\pi_1(z\cdot \gamma))\gamma_\alpha(z\cdot \gamma)$. On the other hand, by simple contraction $z\cdot\gamma=\sigma^1_\alpha(\pi_1(z))\gamma_\alpha(z)\cdot\gamma$ and hence 
\begin{eqnarray}
\sigma^1_\alpha(\pi_1(z\cdot \gamma))\gamma_\alpha(z\cdot \gamma)=\sigma^1_\alpha(\pi_1(z))\gamma_\alpha(z)\cdot\gamma.
\end{eqnarray} Inserting $q(\alpha)\kappa$ from the left and using Eq. (\ref{34}) we have 
\begin{equation}
\sigma^2_\alpha(\pi_1(z\cdot\gamma))\gamma_\alpha(z\cdot\gamma)=\sigma^2_\alpha(\pi_1(z))\gamma_\alpha(z)\cdot\gamma, \label{36}
\end{equation} from which the insertion of $q(\alpha)\in D\simeq \mathbb{Z}_2$ shows that 
\begin{equation}
\phi(z\cdot\gamma)=\phi(z)\cdot\gamma
\end{equation} and the isomorphism is complete. Besides, from the diagram below, we see that $\eta_1=\phi\eta_2$. 
\begin{equation}
\xymatrix{
	P_1 \ar[rdd]_{\phi} \ar[rr]^{\eta_1} & & P \ar[ldd]^{\eta_2^{-1}}\nonumber \\
	& & \\
	& P_2
}
\end{equation}
\end{proof}	

The demonstrated result shall be complemented by another, making correspondence between elements of the first \v{C}ech cohomology group and nonequivalent $\Gamma-$structures. Let us delve into that before ending this section.  
\begin{theorem}
	Let $(P_1,\eta_1)$ be a $\Gamma-$structure for $P$ and consider $[\delta]\in\check{H}^1(\mathcal{M},D)$. Then there exists one $\Gamma-$structure $(P_2,\eta_2)$ such that the difference class is $[\delta]$. 
\end{theorem}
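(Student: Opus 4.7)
The strategy is to twist the lifting functions of $P_1$ by a chosen cocycle representative of $[\delta]$, producing new lifting data that, via Lemma \ref{adm}, assembles into a second $\Gamma$-structure $(P_2,\eta_2)$ whose difference class with $P_1$ is, by construction, the prescribed $[\delta]$.

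First I would fix a cocycle $\delta_{\alpha\beta}:U_\alpha\cap U_\beta\to D$ representing $[\delta]\in\check{H}^1(\mathcal{M},D)$, and denote by $\gamma^1_{\alpha\beta}$ the lifting functions associated with $(P_1,\eta_1)$. Then I would introduce candidate data
\begin{equation*}
\gamma^2_{\alpha\beta}(x) := \delta_{\alpha\beta}^{-1}(x)\,\gamma^1_{\alpha\beta}(x)\in\Gamma ,
\end{equation*}
and verify the two hypotheses of Lemma \ref{adm}. Using $D\subset C(\Gamma)$ to commute the $\delta^{-1}$ factors past the $\gamma^1$'s, the cocycle property of $\delta$ together with that of $\gamma^1$ yields
\begin{equation*}
\gamma^2_{\alpha\beta}\,\gamma^2_{\beta\kappa} = (\delta_{\alpha\beta}\delta_{\beta\kappa})^{-1}\gamma^1_{\alpha\beta}\gamma^1_{\beta\kappa} = \delta_{\alpha\kappa}^{-1}\gamma^1_{\alpha\kappa} = \gamma^2_{\alpha\kappa},
\end{equation*}
while $\delta_{\alpha\beta}\in\ker(\rho)$ combined with $\rho$ being a homomorphism gives $\rho(\gamma^2_{\alpha\beta})=\rho(\delta_{\alpha\beta}^{-1})\rho(\gamma^1_{\alpha\beta})=e_G\,g_{\alpha\beta}=g_{\alpha\beta}$.

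Next, the standard reconstruction theorem applied to the cocycle $\{\gamma^2_{\alpha\beta}\}$ and the cover $\{U_\alpha\}$ produces a principal $\Gamma$-bundle $P_2\to\mathcal{M}$ having these transition functions, and the map $\eta_2:P_2\to P$ is then defined locally by $\eta_2\circ\sigma^2_\alpha=\sigma_\alpha$, its global consistency being secured precisely by $\rho\circ\gamma^2_{\alpha\beta}=g_{\alpha\beta}$. Reading off the difference class from Eq. (\ref{spc}),
\begin{equation*}
\gamma^1_{\alpha\beta}(\gamma^2_{\alpha\beta})^{-1} = \gamma^1_{\alpha\beta}(\gamma^1_{\alpha\beta})^{-1}\delta_{\alpha\beta} = \delta_{\alpha\beta},
\end{equation*}
so $[\delta]_{(P_1,P_2)}=[\delta]$, as required.

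The step I expect to require the most care is the reconstruction itself: confirming that $P_2$ is a bona fide principal $\Gamma$-bundle and that $\eta_2$ is continuous and equivariant under the right $\Gamma$-action, so that the pair $(P_2,\eta_2)$ meets the full definition of a $\Gamma$-structure. A secondary subtlety is representative-independence — replacing $\delta$ by a cohomologous cocycle $\delta\cdot\partial q$ changes $\gamma^2_{\alpha\beta}$ by a coboundary, and the isomorphism criterion of the preceding theorem (compare the transformation in Eq. (\ref{pdp})) then ensures the resulting $\Gamma$-structure is isomorphic to $P_2$, so the construction is well-defined up to $\Gamma$-structure isomorphism.
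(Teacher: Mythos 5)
Your proposal is correct and follows essentially the same route as the paper's own proof: both twist $\gamma^1_{\alpha\beta}$ by $\delta^{-1}_{\alpha\beta}$, use centrality of $D$ plus the two cocycle conditions to verify the consistency relation, check $\rho\circ\gamma^2_{\alpha\beta}=g_{\alpha\beta}$ using $\delta_{\alpha\beta}\in\ker\rho$, and then invoke the reconstruction theorem together with Lemma \ref{adm}. Your closing remark on representative-independence is a small useful addition the paper leaves implicit (folding it into ``uniqueness follows from Theorem 2''), but it does not change the substance of the argument.
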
	 
\begin{proof}
Let $d$ be a cocycle with values in $D$ representing $[\delta]$. Call $\gamma^1_{\alpha\beta}(x)$ the transition functions for $P_1$ and define for $x\in U_\alpha\cap U_\beta$:
\begin{equation}\label{dd}
\gamma^2_{\alpha\beta}(x)=d^{-1}_{\alpha\beta}(x)\gamma^1_{\alpha\beta}(x). 
\end{equation} As a parenthetical remark, notice the appearance $d^{-1}$ in (\ref{dd}). In fact $d^{-1}_{\alpha\beta}:D\subset \Gamma\rightarrow  U_\alpha\cap U_\beta$, and (\ref{dd}) is well settled. Back to the main argumentation, the consistency condition is indeed satisfied by $\gamma^2_{\alpha\beta}$, for
\begin{equation}
\gamma^2_{\beta\kappa}(\gamma^2_{\alpha\kappa})^{-1}\gamma^2_{\alpha\beta}=d^{-1}_{\beta\kappa}\gamma^1_{\beta\kappa}(\gamma^1_{\alpha\kappa})^{-1}d_{\alpha\kappa}d^{-1}_{\alpha\beta}\gamma^1_{\alpha\beta}.  
\end{equation} Since $d\in D\subset C(\Gamma)$ the above expression may be recast 
\begin{equation}
  \gamma^2_{\beta\kappa}(\gamma^2_{\alpha\kappa})^{-1}\gamma^2_{\alpha\beta}=[d^{-1}_{\beta\kappa}d_{\alpha\kappa}d^{-1}_{\alpha\beta}]\gamma^1_{\beta\kappa}(\gamma^1_{\alpha\kappa})^{-1}\gamma^1_{\alpha\beta}.     	
\end{equation} The $\gamma^1$ part of the above expression amounts out to $e$ since $\gamma^1$ are transition functions. Moreover, $d$ is a cocicle and then $\partial d(\alpha,\beta,\kappa)=d_{\beta\kappa}d^{-1}_{\alpha\kappa}d_{\alpha\beta}=[d^{-1}_{\beta\kappa}d_{\alpha\kappa}d^{-1}_{\alpha\beta}]^{-1}=e$. Thus $\gamma^2_{\beta\kappa}(\gamma^2_{\alpha\kappa})^{-1}\gamma^2_{\alpha\beta}=e$. Therefore, there exists a principal bundle $(P_2,\pi,\mathcal{M},\Gamma)$ with $\gamma^2_{\alpha\beta}$ as transition functions. This establishes the $P_2$ fibre bundle existence. In order to show that $P_2$ is a $\Gamma-$structure for $P$, observe that 
\begin{equation}
\rho(\gamma^2_{\alpha\beta})=\rho(d^{-1}_{\alpha\beta}\gamma_{\alpha\beta})=\rho(d^{-1}_{\alpha\beta})\rho(\gamma^1_{\alpha\beta})=\rho(\gamma^1_{\alpha\beta})=g_{\alpha\beta}, 
\end{equation} where use was made of the fact that $\rho$ is a homomorphism and $d$ belongs to its kernel. The link with a $\Gamma-$structure is now guaranteed by Lemma II.1. Note that from the very existence of $[\delta]\in\check{H}^1(\mathcal{M},D)$, uniqueness follows from Theorem 2.\end{proof}	

The results of this section may be summed up by the following assertions: $\Gamma-$structures are allowed over a given manifold if, and only if, the second Stiefel-Whitney class $w_2$ is trivial. Moreover, in being $w_2$ trivial, there are nonequivalent $\Gamma-$structures if, and only if, $\check{H}^1(\mathcal{M},D)$ is nontrivial. Besides, there will be as many nonequivalent $\Gamma-$structures as there are elements in $\check{H}^1(\mathcal{M},D)$. Spinors coming from nonequivalent $\Gamma-$structures are called exotic. 

In order to have a flavor of the relation between $\check{H}^1$ nontriviality and nontrivial topology, suppose $S^1\subset \mathcal{M}$ as the only aspect of nontriviality for the base manifold topology\footnote{This case was applied, for instance, in using exotic spinors in modeling certain aspects of superconductivity \cite{petry}.}. Also, take (as usual to spinor bundles) $D\simeq\mathbb{Z}_2$. Hence the fundamental homotopy group of $\mathcal{M}$ will be \cite{nak} $\pi_1(\mathcal{M})=\pi_1(S^1)=\mathbb{Z}\supset n$. Any element $n$ of $\pi_1(S^1)$ may be classified according to its parity, producing a homomorphism between this fundamental group and $\mathbb{Z}_2$. On the other hand, elements from $\mathcal{M}$ taking values in $\mathbb{Z}_2$ are precisely the ones of $\check{H}^1(\mathcal{M},\mathbb{Z}_2)$. For this review, it analysis suffices to visualize the interplay between the first \v{C}ech cohomology group and the nontrivial topology of the base manifold\footnote{The formal relation may be evinced under the auspices of covering spaces: $\check{H}^1(\mathcal{M},\mathbb{Z}_2)$ classify the base manifold double covers. In contrast, homomorphisms from $\pi_1(\mathcal{M})$ to $\mathbb{Z}_2$ correspond to double covers of $\mathcal{M}$ \cite{mil}.}. Anyway, it can be asserted that $\check{H}^1(\mathcal{M},D)\simeq \hom(\pi_1(\mathcal{M})\to D)$ revealing that, as far as the base manifold is not simply connected, exotic spinors are in order. 

\section{Foundations - Part II: Diffeomorphisms}

Diffeomorphisms have profoundly impacted contemporary physics, from Lagrangian formalism to curved spacetimes. Understanding its impact on different representation fields is of primary importance. Concerning this presentation, it is illustrative to know how diffeomorphisms impact spinors. More than an exercise of completeness, the analysis has an impact, for instance, on the vacuum-generating functional construction for spinor field quantum theory \cite{AI}. We shall construct this section following \cite{dape}. 

\subsection{Spinors as `scalars' under diffeomorphisms}

Let $\mathcal{Diff}(\mathcal{M})$ denotes the base manifold group of diffeomorphisms\footnote{All along this section we are assuming that transformations of $\mathcal{Diff}(\mathcal{M})$ preserve orientability of the $n$ dimensional base manifold. Besides, for simplicity, we shall deal with a Riemannian base manifold. Parallelization issues apart, generalizations to pseudo-Riemannian manifolds are somewhat direct.}. Suppose the existence of a principal bundle $P_{GL^+(n)}$ (with $GL^+(n)$ as structure group) over $\mathcal{M}$ endowed with a principal subbundle $P_{SO(n)}\subset P_{GL^+(n)}$. Similarly, take $P_{Spin(n)}\subset P_{\widetilde{GL}^+(n)}$ where $\widetilde{GL}^+(n)$ is the double cover of $GL^+(n)$. Hence $P_{Spin(n)}$ ($P_{\widetilde{GL}^+(n)}$) shall serve as $\Gamma-$structure to $P_{SO(n)}$ ($P_{GL^+(n)}$). More precisely, there exists $\rho:\widetilde{GL}^+(n)\rightarrow GL^+(n)$ such that $\ker(\rho)=\mathbb{Z}_2\subset C(\widetilde{GL}^+(n))$ and a map $\eta:P_{\widetilde{GL}^+(n)}\rightarrow P_{GL^+(n)}$ allowing, somewhat naturally, for $\eta|_{P_{Spin(n)}}:P_{Spin(n)}\rightarrow P_{SO(n)}$ and leading $(P_{Spin(n)},\eta|_{P_{Spin(n)}})$ to serve as a spin structure. The situation is, then, as follows: for a given spacetime metric, suppose the existence of a spinor structure. Thus, it is possible to construct an associated fibre bundle
\begin{equation}
\tilde{\Omega}\equiv P_{Spin(n)}\times_{[Spin(n)]}\widetilde{GL}^+(n).
\end{equation} Now, define
\begin{eqnarray}
\eta&:&\left.\tilde{\Omega}\rightarrow \Omega \right.\nonumber\\&&\left. (u,a)\mapsto \eta(u,a)=[\eta|_{P_{Spin(n)}}(u),\rho(a)],\right.
\end{eqnarray} so that the identification $\Omega=P_{SO(n)}\times_{[SO(n)]}GL^+(n)$ is only natural. The existence condition of the structure $(\tilde{\Omega},\eta)$ rests upon the usual topological obstruction, i.e., the vanishing of the second Stiefel-Whitney characteristic class. 

To properly investigate the impact of diffeomorphisms on spinor fields, take $f \in \mathcal{Diff}(\mathcal{M})$. Taking the pullback into account, $g'=f^*g$, $f$ is naturally lifted to $\wp \in Aut(GL^+(n))$. Restricting the action of $\wp$ to $SO(n)$, it maps orthonormal frames of $g'$ to orthonormal frames of $g$:
\begin{equation}
\wp: P_{SO(n)}|_{g'}\rightarrow P_{SO(n)}|_{g}. 
\end{equation} As the previous construction asserts (denoting the metric explicitly)  $\eta|_{P_{Spin(n)}|_{g'}}:P_{Spin(n)}|_{g'}\rightarrow P_{SO(n)}|_{g'}$ and, of course, $\eta|_{P_{Spin(n)}|_{g}}:P_{Spin(n)}|_{g}\rightarrow P_{SO(n)}|_{g}$. Presumably, it may be possible to construct, out from 
\begin{equation}
\{\wp, \eta|_{P_{Spin(n)}|_{g'}}, \eta|_{P_{Spin(n)}|_{g}}\},
\end{equation} a unique (up to trivial isomorphisms) induced isomorphism $\tilde{\wp}:P_{Spin(n)}|_{g'}\rightarrow P_{Spin(n)}|_g$ by 
\begin{eqnarray}
\tilde{\wp}=\eta^{-1}|_g\cdot \wp \cdot \eta|_{g'}, \label{ind}
\end{eqnarray} where $\eta|_g$ stands for $\eta|_{P_{Spin(n)}|_{g}}$ and similarly for $\eta|_{g'}$. If Eq. (\ref{ind}) can be induced, the following diagram can also be drawn:
\begin{equation}\label{aha}
 \xymatrixcolsep{4pc}\xymatrix{
 	P_{Spin(n)}|_{g'} \ar[d]_-{\eta|_{g'}} 
 	\ar[r]^-{\tilde{\wp}} & P_{Spin(n)}|_g\\
 	P_{SO(n)}|_{g'}\ar[r]^-{\wp}& P_{SO(n)}|_{g} \ar[u]_-{\eta^{-1}|_g}
}
\end{equation} Notice that $\tilde{\wp}\in Aut(Spin(n))$ is still in the spinor structure $P_{Spin(n)}|_g$, despite the action of the diffeomorphism. 

To appreciate the consequence of $\tilde{\wp}$ induced by Eq. (\ref{ind}), take $\sigma=\{(1/2,0),(0,1/2),(1/2,0)\oplus(0,1/2)\}$ denoting spin $1/2$ representations of $Spin(n)$. Observe that for a metric $g$ a spinor is $P_{Spin(n)}|_g\times_\sigma \mathbb{C}^4$. In the sequel, consider an open set $U\subset \mathcal{M}$ and\footnote{Naturally, for $g'=f^* g$, $U'=f^{-1}(U)$.} $\{r\}$ a set (a field) of orthogonal frames for $g$ in $U$ such that $\{r\}\supset r: U\rightarrow P_{SO(n)}|_g$. In an analog fashion, consider also $\{\tilde{r}\}\supset \tilde{r}:U\rightarrow P_{Spin(n)}|_g$, such that $\eta|_g\cdot \tilde{r}=r$. With those ingredients, we can set the relation between orthogonal frames for $g$ with their counterparts for $g'$ by 
\begin{eqnarray}
r'=\wp^{-1}\cdot r\cdot f, \;\;\;\;\; \text{and} \;\;\;\;\;
\tilde{r}'=\tilde{\wp}^{-1}\cdot \tilde{r} \cdot f.
\end{eqnarray} A local spinor field needs only one more piece to be constructed: a suitable map $\xi:P_{Spin(n)}|_g\rightarrow \mathbb{C}^4$ allowing for the definition of a local section of $P_{Spin(n)}|_g\times_\sigma \mathbb{C}^4$ by
\begin{eqnarray}\label{spi}
\psi&:&\left.U\subset \mathcal{M}\rightarrow \mathbb{C}^4 \right.\nonumber\\&&\left. x\mapsto \psi(x)=\xi(\tilde{r}(x)).\right.
\end{eqnarray} The induction of $\tilde{\wp}$ as in Eq. (\ref{ind}) allows for an equivariance in defining local spinor field transformation: $\xi'=\xi\circ \tilde{\wp}$. Therefore, $\psi'(x):=\xi'(\tilde{r}'(x))$ has such a behavior that, although the frame is well defined for $g'$, $\psi'(x)$ is still a section of $P_{Spin(n)}|_g\times_\sigma \mathbb{C}^4$. Explicitly,
\begin{eqnarray}
\psi'=\xi\cdot\underbrace{\tilde{\wp}\cdot\tilde{\wp}^{-1}}_{Id_{P_{Spin(n)}|_g}}\cdot\,\tilde{r}\cdot f, 
\end{eqnarray} or, making explicit the argument
\begin{eqnarray}
\psi'(x)=\xi(\tilde{r}(f(x))),
\end{eqnarray} which, by means of (\ref{spi}), reads simply $\psi'(x)=\psi(f(x))$, confirming the idea of spinor as ``scalar'' objects under diffeomorphisms. 

It is important to remark the induced fibre bundle map $\tilde{\wp}$ centrality to the above argument. If an obstruction to (\ref{ind}) is on order, then a given diffeomorphism transformation could affect spinors non-trivially. This central aspect is investigated in what follows. 

\subsection{Obstructions to the trivial behavior of spinors under diffeomorphisms}       
	
We shall define additional tools and twist the notation a little, by convenience, to properly approach the alluded possible obstructions. Let be $\{U_\alpha\}$ a covering of $\mathcal{M}$ and construct bundle charts 
\begin{eqnarray}\label{car}
\tilde{\varphi}_\alpha: U_\alpha\times\widetilde{GL}^+(n)\rightarrow \tilde{\Omega}|_{U_\alpha}
\end{eqnarray} and 
\begin{eqnarray}\label{ta} 
\varphi_\alpha: U_\alpha\times GL^+(n)\rightarrow \Omega|_{U_\alpha}.
\end{eqnarray} Besides, denote the corresponding transition functions by $\tilde{\gamma}_{\alpha\beta}:U_\alpha\cap U_\beta\rightarrow \widetilde{GL}^+(n)$ and $\gamma_{\alpha\beta}:U_\alpha\cap U_\beta\rightarrow GL^+(n)$, respectively. Since $\rho:\widetilde{GL}^+(n)\rightarrow GL^+(n)$ ($\gamma_{\alpha\beta}=\rho\cdot\tilde{\gamma}_{\alpha\beta}$), there exists $\eta$ such that
\begin{equation}
\eta:\tilde{\varphi}_\alpha(x,a)\mapsto \varphi_\alpha(x,\rho(a)),\label{delta}
\end{equation} for $x\in U_\alpha\subset \mathcal{M}$ and $a\in\widetilde{GL}^+(n)$. 

Now, in the context of $f\in \mathcal{Diff}(\mathcal{M})$, consider a covering $\{U'_\alpha\}$ with $U'_\alpha=f^{-1}(U_\alpha)$. It is, then, possible to write $\varphi':U'_\alpha\times GL^+(n)\rightarrow \Omega|_{U_\alpha}$ or 
\begin{equation}
\varphi':f^{-1}(U_\alpha)\times GL^+(n)\rightarrow \Omega|_{U_\alpha},
\end{equation} from which the following recognition is due 
\begin{equation}
\varphi'_\alpha(f^{-1}(x),a)=\wp^{-1}\varphi_\alpha(x,a) \label{2stars}
\end{equation} and $\gamma'_{\alpha\beta}(f^{-1}(x))=\gamma_{\alpha\beta}(x)$, as well. Analogously to the construction leading to (\ref{delta}), it is quite conceivable to have  ($\forall\, x\in U_\alpha$) 
\begin{equation}
\eta':\tilde{\varphi}'_\alpha(f^{-1}(x),a)\mapsto \varphi'_\alpha(f^{-1}(x),\rho(a)) \label{1star}
\end{equation} and $\gamma_{\alpha\beta}'=\rho\circ\tilde{\gamma}'_{\alpha\beta}$. 

As a last remark before to stay an important result, consider a suitable version of (\ref{spc}) given by
\begin{eqnarray}
\delta_{\alpha\beta}=\tilde{\gamma}'_{\alpha\beta}(f^{-1}(x))\tilde{\gamma}_{\alpha\beta}^{-1}(x), \hspace{.2cm} x\in U_\alpha\cap U_\beta.\label{nd}
\end{eqnarray} It can be readily verified that $\delta_{\alpha\beta}\in \mathbb{Z}_2 \subset \widetilde{GL}^+(n)$, for 
$\rho\cdot \delta_{\alpha\beta}=\rho\cdot\tilde{\gamma}'_{\alpha\beta}(f^{-1}(x))\cdot\tilde{\gamma}_{\alpha\beta}^{-1}(x)=\gamma'_{\alpha\beta}(f^{-1}(x))\cdot\tilde{\gamma}_{\alpha\beta}^{-1}(x)$ and using $\gamma'_{\alpha\beta}(f^{-1}(x))=\gamma_{\alpha\beta}(x)=\rho\cdot \tilde{\gamma}_{\alpha\beta}$, we are left with 
\begin{eqnarray}
\rho\cdot \delta_{\alpha\beta}=\rho\cdot \tilde{\gamma}_{\alpha\beta}(x)\cdot\tilde{\gamma}_{\alpha\beta}^{-1}(x)=\rho\cdot Id_{\widetilde{GL}^+(n)}=Id_{GL^+(n)}.
\end{eqnarray} Therefore $\delta_{\alpha\beta}\in \ker(\rho)\simeq \mathbb{Z}_2$. 
\begin{prop}
	$\delta_{\alpha\beta}$ is a cocycle. 
\end{prop}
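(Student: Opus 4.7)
My approach is to verify directly that $\partial\delta(\alpha,\beta,\kappa) = \delta_{\beta\kappa}\delta_{\alpha\kappa}^{-1}\delta_{\alpha\beta} = e_{\widetilde{GL}^+(n)}$, leveraging the key fact (already established in the paragraph preceding the proposition) that $\delta_{\alpha\beta}\in\ker(\rho)\simeq\mathbb{Z}_2\subset C(\widetilde{GL}^+(n))$. Since each $\delta$-factor is central, the whole product can be freely reordered, and, moreover, each $\delta$ can be pulled through any individual $\tilde{\gamma}$ or $\tilde{\gamma}'$. This is what makes the argument structurally identical to the proof of Proposition II.5, which is the template I intend to follow.

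The core step is the intermediate identity $\delta_{\alpha\beta}\delta_{\beta\kappa} = \delta_{\alpha\kappa}$. Substituting the definition (\ref{nd}) yields $\delta_{\alpha\beta}\delta_{\beta\kappa} = \tilde{\gamma}'_{\alpha\beta}\tilde{\gamma}_{\alpha\beta}^{-1}\tilde{\gamma}'_{\beta\kappa}\tilde{\gamma}_{\beta\kappa}^{-1}$. Since $\delta_{\beta\kappa}=\tilde{\gamma}'_{\beta\kappa}\tilde{\gamma}_{\beta\kappa}^{-1}$ is central, it commutes past $\tilde{\gamma}_{\alpha\beta}^{-1}$, allowing the regrouping $\tilde{\gamma}'_{\alpha\beta}\tilde{\gamma}'_{\beta\kappa}(\tilde{\gamma}_{\alpha\beta}\tilde{\gamma}_{\beta\kappa})^{-1}$. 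The cocycle property of the genuine bundle transition functions $\tilde{\gamma}$ and $\tilde{\gamma}'$ — each a bona fide system of transition functions for a principal $\widetilde{GL}^+(n)$-bundle (the original and the pulled-back one, respectively) — then collapses this to $\tilde{\gamma}'_{\alpha\kappa}\tilde{\gamma}_{\alpha\kappa}^{-1}=\delta_{\alpha\kappa}$.

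Once this multiplicativity is in hand, centrality of every $\delta$ gives $\delta_{\beta\kappa}\delta_{\alpha\kappa}^{-1}\delta_{\alpha\beta} = \delta_{\alpha\beta}\delta_{\beta\kappa}\delta_{\alpha\kappa}^{-1} = \delta_{\alpha\kappa}\delta_{\alpha\kappa}^{-1}=e_{\widetilde{GL}^+(n)}$, establishing the cocycle condition. The statement then certifies that $\delta$ defines an element of $\check{H}^1(\mathcal{M},\mathbb{Z}_2)$, which is what drives the subsequent obstruction analysis for spinorial behaviour under diffeomorphisms.

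The only potential subtlety is bookkeeping of the two different evaluation points, since $\tilde{\gamma}'$ is applied at $f^{-1}(x)$ while $\tilde{\gamma}$ is applied at $x$. However, because the pulled-back bundle inherits the cocycle identity pointwise — $\tilde{\gamma}'_{\alpha\beta}\tilde{\gamma}'_{\beta\kappa}$ evaluated at $f^{-1}(x)\in U'_\alpha\cap U'_\beta\cap U'_\kappa$ equals $\tilde{\gamma}'_{\alpha\kappa}$ at the same point — this tracking plays no algebraic role and does not constitute a real obstacle. I would present the proof succinctly, highlighting the centrality step, in direct analogy with Propositions II.4 and II.5 to emphasize the parallel with the uniqueness analysis of spinor structures.
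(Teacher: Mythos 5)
Your proof is correct, and it is tighter and more self-contained than the paper's. The paper computes $\partial\delta(\alpha,\beta,\kappa)=\delta_{\beta\kappa}\delta^{-1}_{\alpha\kappa}\delta_{\alpha\beta}$, expands each $\delta$ in terms of $\tilde{\gamma}'$ and $\tilde{\gamma}$, and then simply refers back to the ``$\bar{\lambda}$'' device (a map with $\tilde{\gamma}'_{\alpha\beta}=\bar{\lambda}\,\tilde{\gamma}_{\alpha\beta}$, a homomorphism only on the kernel) used in Propositions II.4--II.5, declaring the cancellation ``readily obtained.'' You instead isolate the cleaner intermediate identity $\delta_{\alpha\beta}\delta_{\beta\kappa}=\delta_{\alpha\kappa}$, obtained by commuting the central factor $\delta_{\beta\kappa}\in\ker\rho\simeq\mathbb{Z}_2$ past $\tilde{\gamma}^{-1}_{\alpha\beta}$ and then invoking the bona fide cocycle conditions that $\{\tilde{\gamma}_{\alpha\beta}\}$ and $\{\tilde{\gamma}'_{\alpha\beta}\}$ each satisfy as transition functions for an actual bundle; centrality then trivializes the alternating product $\partial\delta$. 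The ingredients are the same (centrality of $\delta$ plus the underlying cocycle structure of the two transition-function systems), but your route avoids relying on the somewhat opaque $\bar{\lambda}$ shorthand and makes the source of the cancellation explicit, which is pedagogically preferable. Your remark about the two evaluation points ($f^{-1}(x)$ versus $x$) is also well taken and is indeed harmless, for exactly the reason you give.
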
	  
\begin{proof}
	Take $x\in U_\alpha\cap U_\beta\cap U_\kappa$. The cobordism $\partial \delta(\alpha,\beta,\kappa)=\delta_{\beta\kappa}\delta^{-1}_{\alpha\kappa}\delta_{\alpha\beta}$ reads, form Eq. (\ref{nd}),
\begin{eqnarray}
\partial\delta(\alpha,\beta,\kappa)=[\tilde{\gamma}'_{\beta\kappa}\tilde{\gamma}_{\beta\kappa}][\tilde{\gamma}'_{\alpha\kappa}\tilde{\gamma}_{\alpha\kappa}]^{-1}[\tilde{\gamma}'_{\alpha\beta}\tilde{\gamma}_{\alpha\beta}].
\end{eqnarray} Define, as before, an homomorphism restricted to the kernel $\bar{\lambda}:\widetilde{GL}^+(n)\rightarrow\widetilde{GL}'^+(n)$, such that $\tilde{\gamma}_{\alpha\beta}'=\bar{\lambda}\tilde{\gamma}_{\alpha\beta}$. Thus, it can be readily obtained $\partial\delta(\alpha,\beta,\kappa)=e$.   
\end{proof}	We see, therefore, that $\delta_{\alpha\beta}$ as defined in Eq. (\ref{nd}) gives rise to an element $[\delta]\in \check{H}^1(\mathcal{M},\mathbb{Z}_2)$. Now, we can evince a central theorem on the trivial behavior obstruction of spinor fields under diffeomorphisms.  
\begin{theorem} 
 Let $f\in \mathcal{Diff}(\mathcal{M})$ and consider two principal fibre bundles, $P$ and $P'$, with their double covering morphisms given by $\eta$ and $\eta'$, respectively. The automorphism $\wp$ of $f$ lifts to an isomorphism $\tilde{\wp}:P'\rightarrow P$ with $\eta\cdot\tilde{\wp}=\wp\cdot\eta'$ if, and only if, $\check{H}^1(\mathcal{M},\mathbb{Z}_2)$ is trivial. 
\end{theorem}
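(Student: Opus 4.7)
The plan is to identify the class $[\delta]\in\check{H}^1(\mathcal{M},\mathbb{Z}_2)$ defined in Eq.\ (\ref{nd}) as precisely the obstruction to the existence of the lift $\tilde{\wp}$. Triviality of $\check{H}^1(\mathcal{M},\mathbb{Z}_2)$ then forces $[\delta]=e$ and yields a well-defined global isomorphism; conversely, the existence of $\tilde{\wp}$ will be shown to render $\delta_{\alpha\beta}$ a coboundary. The strategy mimics the one used in the proof of Theorem 2, but with the cocycle (\ref{nd}) taking the place of the difference class (\ref{spc}).

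For the necessary direction, assume the global isomorphism $\tilde{\wp}:P'\to P$ with $\eta\cdot\tilde{\wp}=\wp\cdot\eta'$ exists. Read in the local charts (\ref{car}) and (\ref{ta}), using (\ref{2stars}) to express $\wp$ and (\ref{delta}), (\ref{1star}) for $\eta$, $\eta'$, the commutativity of the square forces $\tilde{\wp}$ to project under $\rho$ onto $\wp$ in each trivialization. Comparing transition functions on $U_\alpha\cap U_\beta$ then gives $\tilde{\gamma}'_{\alpha\beta}(f^{-1}(x))=q_\alpha(x)\,\tilde{\gamma}_{\alpha\beta}(x)\,q_\beta^{-1}(x)$ for a $0$-cochain $q_\alpha:U_\alpha\to\ker(\rho)=\mathbb{Z}_2$ (the residual $\pm$-ambiguity in the lift), because any two choices of $\tilde{\wp}$ above the same local $\wp$ differ by an element of $\mathbb{Z}_2$. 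Using centrality of $\mathbb{Z}_2\subset\widetilde{GL}^+(n)$, this yields $\delta_{\alpha\beta}=q_\alpha q_\beta^{-1}$, hence $[\delta]=e$. Since triviality of $\check{H}^1(\mathcal{M},\mathbb{Z}_2)$ is the only universal guarantee that this obstruction vanishes for every choice of local trivialization, the condition is necessary.

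For the sufficient direction, assume $\check{H}^1(\mathcal{M},\mathbb{Z}_2)$ trivial, so $[\delta]=e$ and there exists $q_\alpha:U_\alpha\to\mathbb{Z}_2$ with $\delta_{\alpha\beta}=q_\alpha q_\beta^{-1}$. Define, locally on each chart,
\begin{equation}
\tilde{\wp}\bigl(\tilde{\varphi}'_\alpha(f^{-1}(x),a)\bigr):=\tilde{\varphi}_\alpha\bigl(x,\,q_\alpha(x)\,a\bigr).\nonumber
\end{equation}
The well-posedness on $U_\alpha\cap U_\beta$ reduces, via the cocycle relation for $\tilde{\gamma}'_{\alpha\beta}$ and the factorization $\tilde{\gamma}'_{\alpha\beta}(f^{-1}(x))=\delta_{\alpha\beta}\,\tilde{\gamma}_{\alpha\beta}(x)$, to the identity $q_\alpha\,\tilde{\gamma}_{\alpha\beta}=\tilde{\gamma}'_{\alpha\beta}(f^{-1}(x))\,q_\beta$, which holds because $q_\alpha,q_\beta$ are central. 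The relation $\eta\cdot\tilde{\wp}=\wp\cdot\eta'$ then follows by applying $\rho$ to the fibre argument and using $q_\alpha\in\ker(\rho)$, so that after $\eta$ is applied the central factor drops out and one recovers $\wp\cdot\eta'$ via (\ref{delta}), (\ref{1star}), and (\ref{2stars}).

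The main obstacle is the chart-wise verification that the proposed local formula glues to a globally defined bundle isomorphism. Both the centrality of $\mathbb{Z}_2$ inside $\widetilde{GL}^+(n)$ and the inclusion $q_\alpha\in\ker(\rho)$ are indispensable: the first ensures compatibility on overlaps, while the second secures the commutativity with the double covering morphisms. The intrinsic $\pm$-ambiguity in the choice of $q_\alpha$ is the familiar freedom in spin lifts and does not affect existence.
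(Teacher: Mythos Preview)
Your proposal is correct and follows essentially the same approach as the paper: both directions hinge on writing the candidate lift locally as $\tilde{\varphi}'_\alpha(f^{-1}(x),a)\mapsto\tilde{\varphi}_\alpha(x,\lambda_\alpha(x)\,a)$ with a $\mathbb{Z}_2$-valued $0$-cochain $\lambda_\alpha$ (your $q_\alpha$) splitting the cocycle $\delta_{\alpha\beta}$, and both verify $\eta\cdot\tilde{\wp}=\wp\cdot\eta'$ by invoking $\lambda_\alpha\in\ker(\rho)$ together with Eqs.~(\ref{delta}), (\ref{1star}), (\ref{2stars}). You are a bit more explicit than the paper about the chart-gluing on overlaps and about why the local ambiguity lands in $\mathbb{Z}_2$, but the argument and its key ingredients are the same.
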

\begin{proof}
($\Rightarrow$) If $[\delta]$ is trivial, then there exists a $0-$cochain, say $\lambda_\alpha:U_\alpha\rightarrow\mathbb{Z}_2$, such that\footnote{Notice that $\partial\delta(\alpha,\beta,\kappa)=\partial^2\lambda(\alpha,\beta)=\lambda_\beta\lambda_\kappa^{-1}[\lambda_\alpha\lambda_\kappa^{-1}]^{-1}\lambda_\alpha\lambda_\beta^{-1}=e$, with effect.} $\delta_{\alpha\beta}=\partial\lambda(\alpha,\beta)=\lambda_\beta\lambda_\alpha^{-1}$. Define the bundle chart mapping
\begin{eqnarray}
\tilde{\wp}&:&\left.\tilde{P}'\rightarrow \tilde{P}\right.\nonumber\\
&&\tilde{\varphi}'_{\alpha}(f^{-1}(x),a)\mapsto \tilde{\varphi}_{\alpha}(x,\lambda_\alpha(x)a).\label{3e}
\end{eqnarray} Applying $\wp\cdot\eta'$ upon $\tilde{\varphi}'_{\alpha}(f^{-1}(x),a)$ one get, using Eq. (\ref{1star}) and (\ref{2stars}), 
\begin{eqnarray}
(\wp\cdot\eta')\tilde{\varphi}'_\alpha(f^{-1}(x),a)=\wp\varphi'_\alpha(f^{-1}(x),\rho(a))=\varphi_\alpha(x,a).\label{ala}
\end{eqnarray} On the other hand, the application of $(\eta\cdot\tilde{\wp})$ on $\tilde{\varphi}'_{\alpha}(f^{-1}(x),a)$, using (\ref{3e}) and (\ref{delta}), yields 
\begin{eqnarray}
(\eta\cdot\tilde{\wp})\tilde{\varphi}'_{\alpha}(f^{-1}(x),a)=\eta\tilde{\varphi}_\alpha(x,\lambda_\alpha(x)a)=\varphi_\alpha(x,\rho(\lambda_\alpha(x)a)). \label{aala}
\end{eqnarray} Nevertheless, $\lambda_\alpha\in\mathbb{Z}_2\simeq \ker(\rho)$ and hence the central homomorphism in (\ref{aala}) reads $\rho(\lambda_\alpha(x)a)=\rho(\lambda_\alpha(x))\rho(a)=\rho(a)$. Then, (\ref{aala}) reads $(\eta\cdot\tilde{\wp})\tilde{\varphi}'_{\alpha}(f^{-1}(x),a)=\varphi_\alpha(x,a)$ which, when comparing with (\ref{ala}), gives 
\begin{eqnarray}
\eta\cdot\tilde{\wp}=\wp\cdot\eta', \hspace{.1cm} \forall \hspace{.1cm} \tilde{\varphi}'_{\alpha}.\label{ago}
\end{eqnarray}   

($\Leftarrow$) From  (\ref{3e}) it can be immediately seen that 
\begin{equation}
(\eta\cdot\tilde{\wp})\tilde{\varphi}'_\alpha(f^{-1}(x),a)=\eta\tilde{\varphi}_\alpha(x,\lambda_\alpha(x)a). \label{ou}
\end{equation} Using (\ref{ago}) in the left-hand side of (\ref{ou}), along with (\ref{delta}) and (\ref{1star}), we are left with 
\begin{equation}
\wp\varphi'_\alpha(f^{-1}(x),\rho(a))=\varphi_\alpha(x,\rho(a\lambda_\alpha(x))). \label{tra}
\end{equation} Equation (\ref{2stars}) enables us to reduce (\ref{tra}) to $\varphi_\alpha(x,\rho(a))=\varphi_\alpha(x,\rho(\lambda_\alpha(x)a))$ and, thus, one concludes that $\rho(\lambda_\alpha(x))=e$, that is $\lambda_\alpha\in\mathbb{Z}_2\simeq\ker(\rho)$. This fact, along with  $\gamma'_{\alpha\beta}(f^{-1}(x))=\rho\cdot \tilde{\gamma}'_{\alpha\beta}(f^{-1}(x))$, allows for 
\begin{eqnarray}
\gamma'_{\alpha\beta}(f^{-1}(x))\lambda_\alpha(x)=\rho\cdot [\tilde{\gamma}'_{\alpha\beta}(f^{-1}(x))\lambda_\alpha]=\rho\cdot\tilde{\gamma}'_{\alpha\beta}(f^{-1}(x)) 
\end{eqnarray} ans similarly
\begin{eqnarray}
\gamma'_{\alpha\beta}(f^{-1}(x))\lambda_\beta(x)=\rho\cdot [\tilde{\gamma}'_{\alpha\beta}(f^{-1}(x))\lambda_\beta]=\rho\cdot\tilde{\gamma}'_{\alpha\beta}(f^{-1}(x)), 
\end{eqnarray} from which we read $\gamma'_{\alpha\beta}(f^{-1}(x))\lambda_\beta(x)=\gamma'_{\alpha\beta}(f^{-1}(x))\lambda_\alpha(x)$. Using $\gamma'_{\alpha\beta}(f^{-1}(x))=\gamma_{\alpha\beta}(x)$ on the left-hand side of this previous equation and equating it, we have 
\begin{equation}
\lambda_\beta \lambda_\alpha^{-1}=\gamma'_{\alpha\beta}(f^{-1}(x))\gamma^{-1}_{\alpha\beta}(x). \label{lala}
\end{equation} Calling $\delta_{\alpha\beta}=\gamma'_{\alpha\beta}(f^{-1}(x))\gamma^{-1}_{\alpha\beta}(x)$ (as before) it follows straightforwardly that $\mathbb{Z}_2 \ni \delta_{\alpha\beta}=\partial\delta(\alpha,\beta)$ and, thus,  $\check{H}^1(\mathcal{M},\mathbb{Z}_2)$ is trivial. \end{proof}

The result encoded in this last theorem is relevant. It imposes a strong condition on the lifting presented in Eq. (\ref{ind}). As discussed before, this isomorphism is necessary (and sufficient) to warrant the spinor field as a scalar under diffeomorphisms. In the light of the proved theorem, we see this is indeed the case of diffeomorphisms homotopic to the identity. When the $\mathcal{M}$ topology is not trivial, however, (in the sense that $\pi_1(\mathcal{M})\neq 0$), this result cannot be applied to the entire base manifold. Recall that when $\check{H}^1(\mathcal{M},\mathbb{Z}_2)$ is not trivial, there are more than one nonequivalent spinor structures coexisting. Moreover, there are as many nonequivalent spinor fields as elements in $\check{H}^1(\mathcal{M},\mathbb{Z}_2)$. Since Eq. (\ref{ind}) is not achieved for nontrivial $\check{H}^1(\mathcal{M},\mathbb{Z}_2)$, there is no guarantee that the behavior of a spinor field under a given diffeomorphism preserves the spinor class (see discussion around (\ref{aha})). That is, diffeomorphisms that are not connected to the identity permutes among spin structures. 

The behavior of spinor fields under isometries was investigated in Refs. \cite{um,dois}, but the general diffeomorphism analysis allows understanding important physical consequences. In a rough description, consider a manifold endowed with a global system of tetrad fields $e_{a\mu}$ such that $g^{\mu\nu}e_{a\mu}e_{b\nu}=\eta_{ab}$, where $\eta_{ab}$ is the Minkowski metric, and the base manifold (Greek) and frame (Latin) indexes run from $0,1,2,3$. A transformation impacting the base manifold metric, say $\xi$, is usually absorbed into a spinor reparametrization, such that the generating functional is invariant (for details, see \cite{AI}). Nevertheless, when $\check{H}^1(\mathcal{M},\mathbb{Z}_2)$ is not trivial, and a permutation of spin structures is in order, some sophistication shall be incorporated into the analysis to ensure invariance. 
In this vein, an example of gravity and spin-1/2 field combined functional, which is entirely invariant under $\xi$, is given by 
\begin{eqnarray}
Z\supset \int[De]\Bigg\{\sum_{j\in \check{H}^1(\mathcal{M},\mathbb{Z}_2)}\int [D_j\bar{\psi}][D_j\psi]e^{i\int_{\mathcal{M}}\mathcal{L}(e\xi_j,\psi)}\Bigg\} e^{iG[e]},  
\end{eqnarray} where $G[e]$ stands for the gravitational action, and $\mathcal{L}$ the spin-1/2 field Lagrangian density in which the covariant derivative term encompasses nontrivial topology effects into a corrected spin connection (see next section). Thus, regardless of the spinor structure permutation effect due to the diffeomorphism, the sum over all $\check{H}^1(\mathcal{M},\mathbb{Z}_2)$ elements renders the term between brackets invariant.           
  
\section{Exotic spinors}

Suppose from now on a manifold $\mathcal{M}$ endowed with nontrivial topology, in the sense of our previous discussion, i.e., nontrivial $\check{H}^1(\mathcal{M},\mathbb{Z}_2)$, but with vanishing Stiefel-Whitney classes $w_k$ ($k=1,2,3,4$). We shall investigate here how the exotic spinor equation of motion is impacted. For the initial setup just described, it is licit to say that at least two nonequivalent spinorial structures exist. Besides, since we are interested in spinors themselves, it is useful to recall the following fact: from a topological spaces point of view, there is a homeomorphism between the fibre of the principal frame bundle $P_{Spin(1,3)}\times_\sigma \mathbb{C}^4$ and the structure group $L_+^\uparrow$, where $\sigma$ stands for $(0,1/2),(1/2,0)$, or $(0,1/2)\oplus(1/2,0)$ representations and $L_+^\uparrow$ denotes the orthochronous proper Lorentz subgroup. Formally, there is then a bundle map, say $\zeta$, such that the diagram bellow commutes:   
\begin{equation}
\xymatrix{
	(P,\pi,\mathcal{M},L_+^\uparrow) \ar[rdd]_{\pi} \ar[rr]^{\zeta} & & P_{Spin(1,3)}\times_\sigma \mathbb{C}^4 \ar[ldd]^{\pi'}\nonumber \\
	& & \\
	& \mathcal{M}
}
\end{equation} and with its own lifting structure such that, if $z\in P$ and $g\in Spin(1,3)$, then $\eta(zg)=\eta(z)\rho(g)$, as before. We shall refer to usual spinors as sections of $P_{Spin(1,3)}\times_\sigma \mathbb{C}^4$, i.e., $\psi\in \Omega(\mathcal{M}):= \sec P_{Spin(1,3)}\times_\sigma \mathbb{C}^4$. Quite analogously, the very same structure can be, {\it mutatis mutandis},  inputted to exotic spinors $\tilde{\psi}\in\tilde{\Omega}(\mathcal{M}):=\sec\tilde{P}_{Spin(1,3)}\times_\sigma \mathbb{C}^4$.  
 
Consider $x\in U_\alpha\cap U_\beta\subset \mathcal{M}$ and mappings $\gamma_{\alpha\beta}$ and $\tilde{\gamma}_{\alpha\beta}$ from this intersection to $Spin(1,3)$. According to our previous discussion (see Eq. (\ref{spc}) taking $\gamma^1_{\alpha\beta}=\tilde{\gamma}_{\alpha\beta}$ and $\gamma^2_{\alpha\beta}=\gamma_{\alpha\beta}$), these mappings are related by $\tilde{\gamma}_{\alpha\beta}=\gamma_{\alpha\beta}\delta_{\alpha\beta}$, with $\delta_{\alpha\beta}\in \mathbb{Z}_2$. Take a faithful, linear and invertible map $\Delta:Cl_{1,3}\rightarrow M(4,\mathbb{C})$, relevant here since $Spin(1,3)\simeq Cl^+_{1,3}\subset Cl_{1,3}$. The further analysis can be simplified by defining $U(1)$ functions $\xi_\alpha:U_\alpha\rightarrow \mathbb{C}$ in such a way that    
\begin{eqnarray}\label{aa}
\quad \qquad \qquad \qquad \qquad 
\xi_\alpha(x)=\Delta(\delta_{\alpha\beta}(x))\xi_\beta(x),\qquad \forall \; x\!\in U_\alpha\cap U_\beta. \label{s0}
\end{eqnarray} More comments about $\xi_\alpha$ functions are necessary and shall be made soon. By now, notice that $\delta_{\alpha\beta}\in \mathbb{Z}_2$ and, since $\Delta$ is faithful, $\Delta(\delta_{\alpha\beta})=\pm 1$. Therefore
\begin{eqnarray}\label{shall}
\frac{\xi_\alpha}{\xi_\beta}=\pm 1,
\end{eqnarray} a fact that make explicit the discontinuity of $\xi_\alpha(x)$ as some hypersurface $\Sigma\subset U_\alpha\cap U_\beta$ is crossed. Besides, notice that $\xi_\alpha$ encodes the topological nontriviality at the local sections level, so to speak. A given function exists if the group characterizing this topological nontriviality is such that any sequence of successive compositions does not trivialize the group element. That is to say, there is no $k\in\mathbb{N}^*$ such that $[\delta]^k=e$. This is guaranteed if $\check{H}^1(\mathcal{M},\mathbb{Z}_2)$ has no group torsion.

Back to the main point, by restricting sections to a given open set, i.e. $\Omega(\mathcal{M})|_{U_{\alpha}\cap U_\beta}$, and bearing in mind the role played by the transition functions, it is only natural to relate local spinors $\psi_\alpha\in \Omega(\mathcal{M})|_{U_{\alpha}\cap U_\beta}$ by $\psi_\alpha=\Delta(\gamma_{\alpha\beta})\psi_\beta$. In parallel, as mentioned, exotic spinors obey similar rules. In fact, for $\tilde{\psi}_\alpha\in\tilde{\Omega}(\mathcal{M})|_{U_\alpha\cap U_\beta}$ one has $\tilde{\psi}_\alpha=\Delta(\tilde{\gamma}_{\alpha\beta})\tilde{\psi}_\beta$ and, since $\Delta(\tilde{\gamma}_{\alpha\beta})=\Delta(\gamma_{\alpha\beta}\delta_{\alpha\beta})=\Delta(\gamma_{\alpha\beta})\Delta(\delta_{\alpha\beta})$, it follows that 
\begin{equation}
\tilde{\psi}_{\alpha}=\Delta(\gamma_{\alpha\beta})\Delta(\delta_{\alpha\beta})\tilde{\psi}_\beta.\label{s1}
\end{equation} The linearity of $\Delta$ can be further explored by acting upon Eq. (\ref{s0}) leading to $\Delta(\xi_\alpha)=\Delta(\delta_{\alpha\beta})\Delta(\xi_\beta)$. Therefore, from (\ref{s1}) 
\begin{equation}
\Delta(\xi_\alpha)\tilde{\psi}_\alpha=\Delta(\delta_{\alpha\beta})\Delta(\xi_\beta)\Delta(\gamma_{\alpha\beta})\Delta(\delta_{\alpha\beta})\tilde{\psi}_\beta.\label{sss}
\end{equation} Of course, $\Delta^2(\delta_{\alpha\beta})=1$. Hence, provided that the commutator $[\Delta(\gamma_{\alpha\beta}),\Delta(\xi_\beta)]$ vanishes, we are left with $\Delta(\xi_\alpha)\tilde{\psi_\alpha}=\Delta(\gamma_{\alpha\beta})\Delta(\xi_\beta)\tilde{\psi}_\beta$. Let us compare this expression with $\psi_\alpha=\Delta(\gamma_{\alpha\beta})\psi_\beta$. In general, from the very form of these relations, one can see that $\Delta(\xi_\alpha)$ entails a {\it local} map between the usual and exotic sections in the sense that $\Delta(\xi_\alpha)\tilde{\psi}_\alpha$ transforms as a local component of $\Omega(\mathcal{M})|_{U_\alpha}$. This is sufficient to directly relate $\Delta(\xi_\alpha)\tilde{\psi}_\alpha$ and $\psi_\alpha$, but here we shall argue on the validity of this relation via a heuristic argument. Consider a spinorial (Latin) index running from all the spinor entries for convenience. Thus, in components we have $(\psi_\alpha)_a=\Delta(\gamma_{\alpha\beta})_{aa'}(\psi_\beta)_{a'}$. Admitting $\Delta(\gamma_{\alpha\beta})$ diagonal and $(\psi_\beta)_{a'}$ non null\footnote{If this is not the case, we just have trivial null entries related by $(\psi_\alpha)_a=\Delta(\gamma_{\alpha\beta})_{aa'}(\psi_\beta)_{a'}$.}, it is possible to write $(\psi_\alpha)_{l}(\psi_\beta)_m^{-1}=\Delta(\gamma_{\alpha\beta})_{lm}$, from which we have
$\Delta(\xi_\alpha)_{ll'}(\tilde{\psi}_\alpha)_{l'}=(\psi_\alpha)_{l}(\psi_\beta)_m^{-1}\Delta(\xi_\beta)_{mn}(\tilde{\psi}_\beta)_n$. Multiplying this last expression by $(\psi_\alpha)_l^{-1}$ (admitting $(\psi_\alpha)_l$ non null)  
\begin{eqnarray}
(\psi_\alpha)_l^{-1}\Delta(\xi_\alpha)_{ll'}(\tilde{\psi}_\alpha)_{l'}=(\psi_\beta)_m^{-1}\Delta(\xi_\beta)_{mn}(\tilde{\psi}_\beta)_n.\label{h1}
\end{eqnarray} Since each side refers to a different open set, one must conclude that they must be equal to the same constant $C$. Hence $(\psi_\alpha)_l^{-1}\Delta(\xi_\alpha)_{ll'}(\tilde{\psi}_\alpha)_{l'}=C$. Besides, notice that $C$ is a vacuous constant (which could eventually be even absorbed in a spinor redefinition); it can be made equal to unity without losing generality and, therefore,  
\begin{equation}
\psi_\alpha=\Delta(\xi_\alpha)\tilde{\psi}_\alpha.\label{h2}
\end{equation}
 
Once Eq. (\ref{h2}) refers to a unique open set, giving up the open set index is common, assuming its validity to every local section. It is indeed true, but we must refrain from doing so at this point since, as mentioned, $\xi_\alpha$ has an intricate behavior. Let us, then, keep a reference to the open section for a while. Using Eq. (\ref{h2}), the bilinear covariants relation between usual and exotic spinors can be computed. The Dirac duals are related by $\bar{\psi}_\alpha=\bar{\tilde{\psi}}_\alpha\gamma^0\Delta^\dagger(\xi_\alpha)\gamma^0$. Let us denote by $\Gamma$ any matrix of the set composed by a base for $M(4,\mathbb{C})$ in terms of Dirac matrices. It is straightforward to see that 
\begin{equation}\label{h3}
\bar{\psi}_\alpha\Gamma \psi_\alpha=\bar{\tilde{\psi}}_\alpha\gamma^0\Delta^\dagger(\xi_\alpha)\gamma^0\Gamma\Delta(\xi_\alpha)\tilde{\psi}_\alpha,
\end{equation} and if $\Delta(\xi_\alpha)$ commutes with $\Gamma$, Eq. (\ref{h3}) reduces to $\bar{\psi}_\alpha\Gamma \psi_\alpha=\bar{\tilde{\psi}}_\alpha\Delta^\dagger(\xi_\alpha)\Delta(\xi_\alpha)\Gamma\tilde{\psi}_\alpha$. In this case, one sees that $\Delta(\xi_\alpha)$ unitarity enables bilinear covariants invariance. The condition $[\Delta(\xi_\alpha),\Gamma]=0$ is, however, quite stringent. In fact, there is no much freedom in the $M(4,\mathbb{C})$ part of $\Delta(\xi_\alpha)$ and it must be proportional to the identity $\mathbb{1}_{M(4,\mathbb{C})}$ \cite{tese}. Notice, in particular, that it satisfies the previous necessary constraint $[\Delta(\gamma_{\alpha\beta}),\Delta(\xi_\beta)]=0$. Since the matrix part of $\Delta(\xi_\alpha)$ is already settled as trivial, we shall pay attention to its unitarity and open set dependence in the sequel. Before, however, we note that the collected information is enough to ensure the following property: let $O\in M(4,\mathbb{C})$ be an operator acting in $\Omega(\mathcal{M})|_{U_\alpha}$ engendering an eigenspinor relation, that is $O\psi_\alpha=\varepsilon\psi_\alpha$. Assuming $\Delta(\xi_\alpha)$ linear, we have $O\Delta(\xi_\alpha)\tilde{\psi}_\alpha=\varepsilon\Delta(\xi_\alpha)\tilde{\psi}_\alpha$. Inserting $\Delta(\xi_\alpha)^{-1}$ from the left, since the commutation with $O$ is guaranteed, we arrive at $O\tilde{\psi}_\alpha=\varepsilon\tilde{\psi}_\alpha$. Hence, the action of $O$ in exotic spinors is inherited from the usual spinor with the same eigenvalue. As it is clear, the argument is also valid backward. This straightforward construction shows, in particular, that usual and exotic spinors have the same chirality.   

Going further, consider the Dirac operator\footnote{In our convention, the mass term does not make part of $\textfrak{D}_0$. Moreover, $\textfrak{D}_0$ may also be applied to $\tilde{\Omega}(\mathcal{M})|_{U_\alpha}$ sections.} $\textfrak{D}_0:\Omega(\mathcal{M})|_{U_\alpha}\rightarrow \Omega(\mathcal{M})|_{U_\alpha}$. As remarked, Eq. (\ref{h2}) shows that $\Delta(\xi_\alpha)\tilde{\psi}_\alpha$ transforms as a local component of $\Omega(\mathcal{M})|_{U_\alpha}$. Therefore
\begin{equation}
\textfrak{D}_0\psi_\alpha=\textfrak{D}_0\Delta(\xi_\alpha)\tilde{\psi}_\alpha+\Delta(\xi_\alpha)\textfrak{D}_0\tilde{\psi}_\alpha,\label{pp1}
\end{equation} where $[\Delta(\xi_\alpha),\Gamma]=0$ was used in the matrix part of the Dirac operator. Eq. (\ref{pp1}) brings an important consequence. The derivative Dirac operator gives another local spinor when acting upon a local spinor. Therefore, $\textfrak{D}_0\tilde{\psi}_\alpha$ should also transform as Eq. (\ref{h2}) dictates, but the first term in the right-hand side of (\ref{pp1}) prevents it to occur, except for the trivial cases where $\Delta(\xi_\alpha)$ are constant, i.e., the cases of trivial topology. To restore the proper transformation, we shall proceed similarly to what is done with gauge potentials, bearing in mind that in the case at hand, we are dealing with a topological connection that cannot be absorbed by a gauge transformation (as we shall see). Let $\textfrak{B}_{(\alpha)}:\tilde{\Omega}(\mathcal{M})|_{U_\alpha}\rightarrow \tilde{\Omega}(\mathcal{M})|_{U_\alpha}$ be a general form taking values on $M(4,\mathbb{C})$ and define
 $\textfrak{D}:\tilde{\Omega}(\mathcal{M})|_{U_\alpha}\rightarrow \tilde{\Omega}(\mathcal{M})|_{U_\alpha}$ by 
\begin{equation}
\textfrak{D}:=\textfrak{D}_0+\textfrak{B}_{(\alpha)}.\label{d1}
\end{equation} Subtracting Eq. (\ref{pp1}) from $\Delta(\xi_\alpha)\textfrak{D}\tilde{\psi}_\alpha$ yields
\begin{equation}
\Delta(\xi_\alpha)\textfrak{D}\tilde{\psi}_\alpha-\textfrak{D}_0\psi_\alpha=[\Delta(\xi_\alpha)\textfrak{B}_{(\alpha)}-\textfrak{D}_0\Delta(\xi_\alpha)]\tilde{\psi}_\alpha. \label{ddd}
\end{equation} Imposing the left-hand side of Eq. (\ref{ddd}) to vanishes, leading to the correct transformation rule for the derivative of $\tilde{\psi}_\alpha$, we are able to find out the $\textfrak{B}_{(\alpha)}$ topological connection as simply $\textfrak{B}_{(\alpha)}=\Delta(\xi_\alpha)^{-1}\textfrak{D}_0\Delta(\xi_\alpha)$ and, as a consequence, to determine the $\textfrak{D}$ operator. Let us focus on the $\xi_\alpha$ functions and, via a bottom-up approach, arrive at the exotic spinor kinematic operator. 

First, observe that in light of our discussion so far, $\Delta(\xi_\alpha)$ can be realized by $\Delta(\xi_\alpha)=e^{i\theta_\alpha}\mathbb{1}_{M(4,\mathbb{C})}$, where $\theta_\alpha\equiv \theta_\alpha(x)$ for $x\in U_{\alpha}$ denotes a real function. Hence, $\Delta(\xi_\alpha)$ is unitary rendering invariance of the bilinear covariants. Moreover, $\xi_\alpha=e^{i\theta_\alpha}$ and the `M\"obiusity' encoded in Eq. (\ref{h2}) is traduced by 
\begin{equation}
\theta_\alpha-\theta_\beta=n\pi, \hspace{.2cm} n\in\mathbb{Z},\label{ar}
\end{equation} for $x\in U_\alpha\cap U_\beta$. It is straightforward to see that for $n$ odd, $\theta_\alpha(x)$ is discontinuous as $x$ cross a hypersurface $\Sigma \subset U_\alpha\cap U_\beta$, see Fig. (\ref{fig:tik}).    
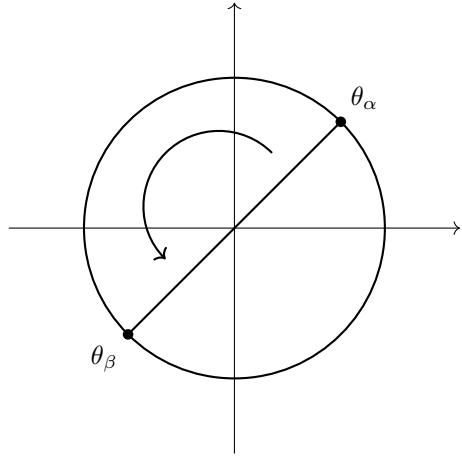
\begin{figure}[h]
	\centering
\begin{tikzpicture}
\draw[thick] (0,0) circle(2);

\draw[thick] (-1.414213562,-1.414213562) -- (1.414213562,1.414213562);

\draw[->] (-3,0) -- (3,0); 
\draw[->] (0,-3) -- (0,3); 

\draw[->, thick] (.5,1) arc[start angle=45, end angle=225, radius=1];

\node at (2,2,0.7) {$\theta_\alpha$};
\node at (-2,-2,-0.7) {$\theta_\beta$};

\fill[black] (1.414213562,1.414213562) circle (2pt);
\fill[black] (-1.414213562,-1.414213562) circle (2pt);
l
\end{tikzpicture}
\caption{Discontinuity of $\theta_\alpha(x)$ as $x$ cross the hypersurface $\Sigma\subset U_\alpha\cap U_\beta$.}
\label{fig:tik}
\end{figure}
Observe that $\xi_\alpha^2=\xi_\beta^2$ in $U_\alpha\cap U_\beta$ and the discontinuity is bypassed by the squares of $\xi_\alpha$ functions. We shall denote $\xi_\alpha^2:=\xi$ $\forall \alpha$ and, in the jargon, the `M\"obiusity' is said to be generated by the local square roots $\xi_\alpha$ of $\xi$. Of course, the impact of squaring $\xi_\alpha$ is traduced in Eq. (\ref{ar}) by $2\theta_\alpha\equiv 2\theta_\beta+2n\pi$. From now on, this exponent shall be expressed simply as $\theta(x)$ without irrelevant reference to the open set. So far, we have not specified the base manifold geometry. Now, to give up the open set dependence, we shall particularize the standard Dirac operator to the simplest case (flat spacetime case\footnote{As expected, there is no essential modification for the generalization to the curved base manifold formalism, except that the topological connection enters the spin connection.}) in which $\mathfrak{D}_0=i\gamma^{\mu}\partial_\mu$, where $\gamma^\mu$ stands for the Dirac matrices and $\mu=0,1,2,3$ in use to Einstein sum convention. Notice that this particularization should not be taken as a naive simplification. Instead, it is a straightforward way to ensure the curvature-free limit by construction. Within this case, the complete Dirac operator reads\footnote{Let emphasize again that $\mu$ index is summed up, while $\alpha$ stands for the open set $U_\alpha$ set.}
\begin{equation}
\textfrak{D}=i\gamma^\mu\partial_\mu+i\xi_\alpha^{-1}(x)\gamma^{\mu}\partial_\mu\xi_\alpha(x).\label{arr}
\end{equation} It can be readily verified that $\xi_\alpha^{-1}(x)\partial_\mu\xi_\alpha(x)=\xi^{-1}(x)\xi_\alpha(x)\partial_\mu[\xi(x)\xi^{-1}_\alpha(x)]$ yielding $\xi_\alpha^{-1}(x)\partial_\mu\xi_\alpha(x)=\frac{1}{2}\xi^{-1}(x)\partial_\mu\xi(x)$, and, thus, Eq. (\ref{arr}) reads 
\begin{eqnarray}
\textfrak{D}=i\gamma^\mu\partial_\mu-\frac{1}{2}\gamma^\mu\partial_\mu\theta(x)\label{cdo}
\end{eqnarray} The relative sign is not very relevant. It could be different had we chosen differently the relative sign in Eq. (\ref{d1}) or even absorbed it in the $\theta(x)$ definition.       

Let us further investigate the complete Dirac operator topological term, evincing several relevant aspects of it. It is convenient to evince an example of a topologically nontrivial base manifold. One could think of it as $\mathbb{R}^{1,3}\simeq\mathbb{R}\times(\mathbb{R}^2\times S^1)$. Consider $\{e_\mu\}$ the canonical base of $\mathbb{R}^{1,3}$ and $\{dx^\mu\}$ a base of $(\mathbb{R}^{1,3})^*$, such that $dx^\mu(e_\nu)=\delta_\nu^\mu$. 
\begin{enumerate}
\item {\it Winding number}: as well known, the linear Clifford map $\gamma:\mathbb{R}^{1,3}\rightarrow Cl_{1,3}$ furnish Dirac matrices when acting upon the canonical basis, i.e., $\gamma(e_\mu)=\gamma_\mu$. On the other hand, the Minkowski bilinear form is responsible for establishing a canonical isomorphism $\mathbb{R}^{1,3}\simeq (\mathbb{R}^{1,3})^*$. Therefore, $\gamma^\mu\partial_\mu \theta(x)=\eta^{\mu\nu}\gamma_\nu\partial_\mu\theta(x)$, or $\gamma(\eta^{\mu\nu}e_\nu\partial_\mu\theta(x))$, from which a composite mapping $\gamma\circ\eta:(\mathbb{R}^{1,3})^*\rightarrow Cl_{1,3}$ can be read. Thus, defining the $1-$form $\textfrak{b}:=d\theta=\partial_\mu\theta(x) dx^\mu$, the resulting $\textfrak{B}$ term can be recast as $\textfrak{B}=-\frac{1}{2}\gamma(\textfrak{b})$ and the Clifford map preimage defines the topological winding number
\begin{equation}
\frac{1}{2\pi}\oint_{S^1}\textfrak{b}=n \in \mathbb{Z}. 	
\end{equation} 
\item {\it Gauge transformation $\times$ topology}: the described procedure to find out the topological connection term may lead to the erroneous impression that it can be absorbed by a gauge transformation, given its resemblance with the standard compensating field. However, this elimination cannot be accomplished by any gauge transformation. In fact, coupling the exotic spinor to the electromagnetic potential, the complete Dirac operator gains, as well-known, an additional term $-e\slashed{A}$, where $e$ is the electrical charge absolute value and $\slashed{q}:=\gamma^\mu q_\mu$. On the other hand, the gauge transformation is $\slashed{A}\mapsto\slashed{A}+\frac{1}{e}\slashed{\partial}\Lambda(x)$, where $\Lambda:\mathbb{R}^{1,3}\rightarrow \mathbb{R}$ is an harmonic, smooth, and continuous function. In this vein, it is readily obtained that an attempt to gauge away the topological connection leads to
\begin{equation} 
\slashed{\partial}\Bigg(\Lambda(x)+\frac{e\theta(x)}{2}\Bigg)=0,     
\end{equation} implying $\Lambda(x)=cte-e(\theta(x)/2)$. However, remember that $\theta(x)/2$ is to be identified with $\theta_\alpha$ (for instance), entering the local square roots. Hence, $\Lambda(x)$  would have to be sensitive to the cocycle, or, putting it in dramatic words, it would have to be discontinuous (see Fig. (\ref{fig:tik})), a behavior not allowed for a gauge parameter. Ultra locally, i.e., in a neighborhood $\mathcal{V}\subset U_\alpha$ of $x$ not intercepted by $\Sigma$, the gauging away can be done, but apart from very restrictive conditions like this, the gauging away cannot be performed. 
\item {\it Nonequivalent solutions}: as remarked previously, exotic spinors are not equivalent to the usual ones. This fact can be trivially recovered from the Dirac operator's solutions point of view. The usual Dirac operator solutions belong to $\ker(\mathfrak{D_0})$, while the exotic counterpart belongs to $\ker(\mathfrak{D})$. Apart from the fact that these spaces are each subspace of a different spinor bundle, from (\ref{cdo}), they can only coincide if $\theta(x)$ is an everywhere constant function, i.e., there is no topological correction.  
\item {\it Lorentz symmetry breaking and physical consequences}: due to the most straightforward reasons, the complete Dirac operator (\ref{cdo}) does not lead to a covariant equation of motion. There is no aspect in the formalism dictating the $\theta(x)$ function as a scalar (in the Lorentz sense) field. Thus, $\partial_\mu\theta(x)$ does not transform as a vector component under Lorentz transformations. Let us explore this aspect via different physical perspectives. 

Starting from the equation of motion for the massive case, its `square' is given by $(\textfrak{D}+m)(\textfrak{D}-m)\tilde{\psi}=0$, leading to 
\begin{equation}
-\Box\tilde{\psi}-\frac{i}{2}\gamma^{\nu}\gamma^{\mu}\partial_\nu(\partial_\mu\tilde{\psi})-\frac{i}{2}\gamma^\nu\gamma^\mu\partial_\nu\theta(x)\partial_\mu\tilde{\psi}+\frac{1}{2}\partial^\mu\theta(x)\partial_\mu\theta(x)\tilde{\psi}-m^2\tilde{\psi}=0.\label{g1} 
\end{equation} By properly equating the second and third terms of Eq. (\ref{g1}), we arrive at 
\begin{equation}
(\Box+m^2)\tilde{\psi}+\textfrak{F}(x)\tilde{\psi}+i\partial^{\mu}\theta(x)\partial_\mu\tilde{\psi}=0,\label{g2}
\end{equation} where $\textfrak{F}(x)=\frac{1}{2}\Big\{i\Box\theta(x)-\frac{1}{2}\partial^\mu\theta(x)\partial_\mu\theta(x)\Big\}$. Recall that $\theta(x)$ is a real function. Therefore, there is no exact solution for $\theta(x)$ leading to $\textfrak{F}(x)=0$ other than $\theta(x)$ constant. This property profoundly impacts the relativistic dispersion relation since it directly introduces nonlinear dissipative terms in the frequency modes, a behavior usually associated with quasi-particles \cite{CE,CM}. It is reasonable, however, taking advantage of some freedom in the physical context to suppose that the effects of Lorentz violation terms are small enough so that $(\partial\theta)^2$ and $\partial^2\theta$ can be neglected in a good approximation. In fact, given the very stringent constraint coming from experiments (for a very comprehensive review, see Ref. \cite{vl}), disregarding $\textfrak{F}(x)$ is demanded. The Lorentz-violating term we are dealing with here comes from a different source of the current literature, but of course, it must have a magnitude compatible with experiments. Even disregarding the contributions of $\textfrak{F}(x)$, interesting effects appear. Calling $v^\mu=\partial^\mu\theta$ and taking the spinorial Fourier transform of $(\Box+m^2)\tilde{\psi}+iv^\mu\partial_\mu\tilde{\psi}\approx 0$, we have 
\begin{eqnarray} 
E^2+v^0E-f({\bf p})\approx 0,
\end{eqnarray} where $f({\bf p})=|{\bf p}|^2+m^2+{\bf v}\cdot{\bf p}$. Notice that since $|{\bf v}|\ll 1$, $f({\bf p})$ is always positive. Besides, for a reference frame at rest with respect to the particle (here understood in a semiclassical context), one has $E^2+v^0E\approx m^2$, indicating that setting $v^0=0$ is necessary to reproduce the usual relativistic rest energy. Therefore, the energy is given by 
\begin{equation}
E\approx \pm\sqrt{|{\bf p}|^2+m^2}\Big(1+\frac{{\bf v}\cdot{\bf p}}{|{\bf p}|^2+m^2}\Big)^{1/2} \label{ee0}
\end{equation} and retaining only the first order of the binomial expansion, we are left with 
\begin{equation}
E\approx E_0+\frac{{\bf v}\cdot{\bf p}}{2E_0},\label{eeq}
\end{equation} where $E_0=\pm\sqrt{|{\bf p}|^2+m^2}$. Eq. (\ref{eeq}) shows a small (but theoretically evident) difference in the dispersion relation of an exotic particle when contrasted to a usual one. As a complement, observe that the group velocity $v_g=\partial E/\partial |{\bf p}|$ obtained from (\ref{eeq}) is given by 
\begin{equation}
v_g=\frac{|{\bf p}|}{E_0}\Big(1-\frac{|{\bf v}||{\bf p}|\cos{\alpha}}{E_0^2}\Big)+\frac{|{\bf v}|\cos{\alpha}}{2E_0},
\end{equation} where $\alpha$ is the angle between ${\bf p}$ and ${\bf v}$. As it can be readily seen, in the used approximation scope, $v_g<1$, indicates the preservation of the causal structure.

Another consequence encoded into the topological extra term in the Dirac term may be brought to light by the standard Gordon decomposition \cite{Gor}. Hence, starting from $\mathfrak{D}\tilde{\psi}=0$ we have 
\begin{equation}
\gamma^\mu\partial_\mu\tilde{\psi}=\frac{-i}{2}\gamma^\mu\partial_\mu\theta(x)\tilde{\psi}-im\tilde{\psi} \label{ol1}
\end{equation} and, after taking the adjoint, a straightforward rearrangement leads to 
\begin{equation}
\partial_\mu\bar{\tilde{\psi}}\gamma^\mu=\frac{i}{2}\partial_\mu\theta(x)\bar{\tilde{\psi}}\gamma^\mu+im\bar{\tilde{\psi}}.\label{ol2}
\end{equation} Using Eqs. (\ref{ol1}) and (\ref{ol2}) the current conservation $\partial_\mu j^\mu=0$ is immediately obtained. Using Eqs. (\ref{ol1}) and (\ref{ol2}) along with the Clifford algebra relation $\{\gamma^\mu,\gamma^\nu\}=2\eta^{\mu\nu}\mathbb{1}$, where $\mathbb{1}$ stands for the $4\times 4$ identity matrix, it is possible to write 
\begin{eqnarray}
2m\bar{\tilde{\psi}}\gamma^\mu\tilde{\psi}=i[\bar{\tilde{\psi}}\gamma^\mu\gamma^\nu\partial_\nu\tilde{\psi}-\partial_\nu\bar{\tilde{\psi}}\gamma^\nu\gamma^\mu\tilde{\psi}]-\partial^\mu\theta(x)\bar{\tilde{\psi}}\tilde{\psi}\mathbb{1}.\label{oll}
\end{eqnarray} Calling, as usual, $\sigma^{\mu\nu}=\frac{i}{2}[\gamma^\mu,\gamma^\nu]$, Eq. (\ref{oll}) can be re-expressed in such a way that the conserved current reads
\begin{eqnarray}\label{off}
j^\mu=\frac{i}{2m}(\bar{\tilde{\psi}}\partial^\mu\tilde{\psi}-\partial^\mu\bar{\tilde{\psi}}\tilde{\psi})\mathbb{1}+\frac{1}{2m}\partial_\nu(\bar{\tilde{\psi}}\sigma^{\mu\nu}\tilde{\psi})-\frac{1}{2m}v^\mu\bar{\tilde{\psi}}\tilde{\psi}\mathbb{1}. 
\end{eqnarray} A coupling with the electromagnetic potential $-ej^\mu A_\mu$, where $e$ denotes the electric charge absolute value, is insightful. The interaction Hamiltonian contains a rather standard part encoded in the second term of Eq. (\ref{off}) leading to the magnetic moment factor after standard textbook quantization and non-relativistic limit \cite{PS,Sak}, but the last term of (\ref{off}) encodes an effective term violating Lorentz symmetry in the vertex function
\begin{equation}
\Gamma^\mu=-\frac{1}{e}\frac{\delta^3 S_{eff}}{\delta\bar{\tilde{\psi}}\delta\tilde{\psi}\delta A_\mu}. 
\end{equation} The classic limit of $\Gamma^\mu$ shall, then, go as $\Gamma^\mu|_{Classic}=\gamma^\mu+v^\mu\mathbb{1}$ leading to a slight deviation in the foton-exotic spinorial particle vertex\footnote{It is curious that the leading-order contribution to $\Gamma^\mu$ comes with a typical central charge-like term, and all the effects had been mathematically originated in an exponential ``phase'' relation between spinors (\ref{h2}), in a {\it pari passu} association to projective representations whose roots are in the algebraic aspects.}.   
   
\item {\it Exoticity as a spinorial characteristic}: as it is clear from the previous sections' constructions, the possibility of an exotic counterpart is exclusive to spinor fields. The comprehensive exposition adopted, however, may obscure this aspect. Recall that a central aspect of the existence of exotic spinors is the possibility of a different (nonequivalent) spinor structure. The $\delta_{\alpha\beta}$ cocycle plays, then, the role of connecting different usual and exotic spinors, as seen by Eqs. (\ref{aa}), (\ref{shall}), and (\ref{h2}). The net effect of $\Delta(\xi)$, inherited from $\xi_\alpha/\xi_\beta=\pm 1=\Delta(\delta_{\alpha\beta})$, is to relate local sections spinors in different spinor bundles locally. The negative sign is a peculiar characteristic of spinorial transformations. A tensorial field never reaches this peculiarity. In this regard, it is important to emphasize that usual spinorial objects entering in expansion coefficients of quantum fields, along with a different spinor dual, may lead to a quantum bosonic statistic \cite{DCY}, with the spin-statistic theorem premise being circumvented by a pseudo-hermitian formulation \cite{most,most2,lec}. That is to say, exoticity seems to be a characteristic of spinors, not necessarily fermions.    

\item {\it An attempt to reinterpret the topological Dirac operator correction}: from Eq. (\ref{cdo}), in natural units, we see that the additional topological term in the Dirac operator scales as $length^{-1}$. This behavior suggests that the topological term effect shall be appreciated at high energy scales (or equivalently short length scales) and, in this case, at higher energy enough, spacetime (or three-space) topology shall be such that exotic spinors arise. The scenario resembles some of the spacetime foam frameworks (see Fig. \ref{fifig} and Ref. \cite{stf} for a review).   

\begin{figure}[h]
	\begin{center}
		\includegraphics[scale=.3]{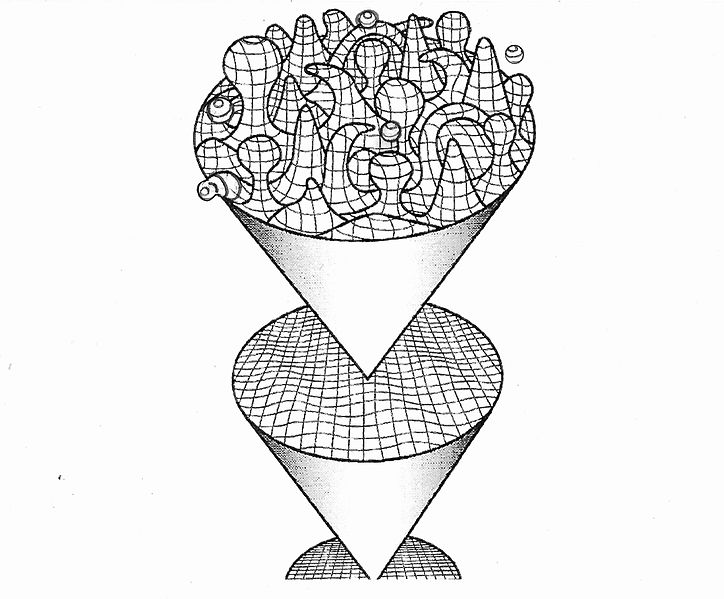}
		\caption{Spacetime fluctuation `{\it Temaki}' representation, including the possibility of topology change at high energies. Figure obtained from {\it Wikipedia} (https://en.wikipedia.org/wiki/Quantum\textunderscore foam).}
	\end{center}\label{fifig}
\end{figure}

We shall face this item as a deliberate attempt to speculate about an interpretation in this sense. We shall not exhaust the question here, nor even advocate well-posed partial results, but instead, explore some points in such a direction. Firstly, certain rules exist for a given manifold to properly accommodate different topologies. The general idea is to allow for the manifold to smoothly interpolate between low dimensional manifolds, say $\mathcal{M}_1$ and $\mathcal{M}_2$. The intuitive way for implementing it (also given the discussion around Eq. (\ref{ee0})) is to allow a topology change in the spatial section of $\mathcal{M}$ only. The basic idea is to allow for $\partial\Sigma=\mathcal{M}_1 \uplus \mathcal{M}_2$, where $\mathcal{M}\supset\Sigma$ and $\uplus$ stands for the disjoint union. There are several possibilities for implementing the cobordism \cite{TA}, many of them quite involved from the mathematical point of view. More often than never, in physics, it is discussed the possibility of a Lorentzian cobordism between closed (compact without boundary) spaces\footnote{Such an approach has topological advantages, as well-defined connected sums \cite{rob}.}. Further pursuing this possibility, $\Sigma$ is a four-dimensional submanifold endowed with a Lorentzian metric whose boundary is given by the disjoint union of the two three-dimensional closed spacelike hypersurfaces $\mathcal{M}_1$ and $\mathcal{M}_2$. There is a fundamental theorem \cite{novomilnor} shows that a cobordism exists if, and only if, the Stiefel-Whitney classes of $\mathcal{M}_1$ and $\mathcal{M}_2$ are equal. Contrasting this theorem to our case, since it is desirable to keep orientability and we must keep spinor structures allowed, we have $w_1=0=w_2$ and consequently $w_3=0$ for both $\mathcal{M}_1$ and $\mathcal{M}_2$. However, this approach has a caveat: it was shown in Ref. \cite{outger} that a Lorentzian cobordism leads to closed timelike curves. The central idea is that a Lorentzian cobordism interpolating between $\mathcal{M}_1$ and $\mathcal{M}_2$ allows causal connections between these hypersurfaces. Nevertheless, topology, by itself, bends light cones. Therefore, events of one hypersurface may influence events in another hypersurface and vice-versa\footnote{Think of the light cones bending along the cobordism.}. This problem can be avoided by introducing a degenerate metric at specific points of the interpolating manifold \cite{kin} to avoid the existence of causal trajectories.  

Within this context, since the topology at higher energy scales is supposed to be non-simply connected, the quantization of fields is also tricky. The adjacent spaces $\mathcal{M}_1$ and $\mathcal{M}_2$ generally define different vacuum related by a Bogoliubov transformation \cite{And,trore}. A given field experiencing these scales necessarily undergoes a compactification (or decompactification) in at least one dimension, and infinitely many particles may be created from this process. To avoid this undesired behavior, one must apply some renormalization technique (depending on the specific case and physical interest) \cite{pd,BD}. 

There is yet another layer to be considered in the context described here. The spin structures built upon $\mathcal{M}_i$ ($i=1,2$) must be extended to $\Sigma$. The vanishing of $w_2$ entails this property, but it is also necessary that a spinor structure cobordism also happen. That is to say, it must exist a spinor structure in $\Sigma$, say $P_\Sigma$, which is restricted to the spinor structure of $\mathcal{M}_i$ at $\mathcal{M}_i\cap\partial\Sigma$ \cite{cs1}. Besides, within the spinor exotic program, more than one compatible spinor structure cobordism may be in order, and further analysis about the possible physical implications is demanded \cite{cs2}. All these aspects are mentioned here to give at least a superficial account of the challenges in this line of reasoning. 

\end{enumerate}

\section{Overview} 

For organization purposes, the developments in this field may be separated into two branches: one dealing with advances reached in the seventies, eighties, and nineties, and more or less recent achievements. Let us report on both periods.  

\subsection{Early advances}

Perhaps the relation between non-simply connectivity and its impact on Dirac operators may be traced back to Ref. \cite{At1} in the formal context of index theorems \cite{At2}. In parallel, multiply-connected manifolds were used in studying Yang-Mills theory applications \cite{sch,sas}, and quantum fields at finite-temperature were also investigated in such backgrounds \cite{uno,duno}. In Ref. \cite{petry}, an exciting application of nonequivalent spin structures, applying exotic spinors to the Bardeen-Cooper pair formation in a space-section $S^1\times \mathbb{R}^2$ model for superconductivity. In the condensed matter framework, the electron-electron pair interaction in the fractional quantum Hall effect context was suggested to be interpreted using exotic fermions \cite{hesshall}.

In an incomplete account to arrive at exotic spinors being systematically investigated in the quantum realm, we emphasize that while different topologies became relevant in the quantum/gravitational context \cite{ag}, spinor structures also were investigated in this framework \cite{bb}. In Ref.\cite{HP}, it is pointed out that some spaces studied in a quantum gravity theory do not even allow spinor fields. However, generalized spin structures may be relevant in connecting topology to particle spectrum. This peculiar formulation was also further mathematically explored in Ref. \cite{dal}. An investigation of quantum fields whose dynamics take place in nontrivial spacetime topology backgrounds was performed in Ref. \cite{royal}, where it was suggested that topology could determine the size of a supermultiplet composed by a type of fields possible to be there defined, the twisted\footnote{Essentially, the twisted field is a usual field endowed with anti-periodic boundary conditions.} fields. A similar, though different, idea is explored in Ref. \cite{supsup} where supermultiplets (including fermions) are studied in a supergravity framework.  

Narrowing the analysis to exotic fermions, in Ref. \cite{sf}, the Dirac lagrangian is considered in detail. It is also in this paper that the term `M\"obiusity,' coined by Salam, is firstly used to refer to the number of elements of $\check{H}^1(\mathcal{M},\mathbb{Z}_2)$ which, by its turn, is shown to be the number of different lagrangians. The correspondence between different exotic spinor structures, Dirac operators, and, therefore, Dirac lagrangians may be straightforwardly traced in our preceding sections. As mentioned in the text, Ref. \cite{AI} formulates (and solves) the problem of Lorentz invariance breaking by the vacuum generating functional for spinor fields in multiple connected spaces. Twisted and untwisted spinor fields were extensively studied in Ref. \cite{ford} in the context of an $S^1\times \mathbb{R}^3$ spacetime. The selection of an ulterior spinor possibility leads to a well-defined vacuum polarization upon vacuum propagation, while untwisted spinor fields suffer from non-causal effects. Notice that in the cases here mentioned of twisted fields \cite{supsup,ford}, the background also admits exotic spinors. Contrasting the possibility of twisting with exotic spinor fields would be very important. In particular, for Ref. \cite{ford}, the union of exoticity with twisted fields shall be challenging since the twist possibility comes from a very restrictive (vanishing connection) constraint upon the Dirac operator. To our knowledge, analysis verifying whether an exotic twisted spinor exists (and its physical consequences) is an open issue. It is also relevant to stress that exotic Majorana spinors, real spinors from a nonequivalent spin structure, can also be consistently defined \cite{maje}. Besides, different spin structures' impact on string theory was also taken into account \cite{SW}.    

\subsection{More recent achievements}

There was a time gap between the early and more recent advances in the exotic spinor fields. This hiatus may be partially imputed to the hermetic character of mathematical necessary preliminaries, usually not directly reported in detail (something that this review intends to mitigate), and also to the Lorentz symmetry violation present in the complete Dirac operator topological term in multiply connected spacetimes. From the $2000$ year on, a renewed interest in the Lorentz symmetry violation arose, and new work has been done.     

The analysis of an exotic dark matter candidate (from the classical point of view) was done in detail in Ref. \cite{nn} with due care to the mathematical preliminaries. Continuing the program of Ref. \cite{nn}, taking advance of the spinor field neutrality, the underlining nontrivial topology was bounded to the dark spinor field behavior Ref. \cite{prob}. Besides, the dispersion relation to such a dark field was investigated in Ref. \cite{dis}. Although the geodesic completeness program can be applied to a certain class of black holes, in a given context, these singularities may be faced as generators of nontrivial topology, so to speak. This possibility was scrutinized in Refs. \cite{bh1} and \cite{bh2}, where an intuitive approach based on Cartan's spinorial point of view is pursued. Moreover, the impact of the Hawking radiation topological terms is studied. 

The investigation of a specific subclass of Inomata-McKinley spinors and their exotic counterpart's (im)possibility is scrutinized in Ref. \cite{dino}. Exotic spinors were also studied in a background (and corresponding algebra) endowed with a minimal length \cite{min}. In particular, the topological term in the Dirac operator prevents it from being injective (with minimal length correction taken into account), and solutions other than the trivial one are expected. Ref. \cite{hear} shows that the topological term entering the Dirac operator leads to the heat kernel coefficients corrections. Hence, the spectral properties of the complete Dirac operator and geometric invariants of the base manifold. 

A type of nontrivial topology geometrization effects, including finding a symmetric, bilinear, and (somewhere) non-degenerate metric, can be found in Ref. \cite{car}. The geometrization is obtained by combining exoticity and the classic Cartan's perspective of spinors as the square root of spacetime points \cite{livrocar,pen}. It was obtained so that the $k$ differential form coefficients are also deformed. This last aspect leads to deformations in electromagnetism \cite{mono}, while a deformed metric allows for new spinor classes \cite{ptep}. Finally, it must be emphasized that some of the previous accomplishments were reached in the scope of a spinor bundle deformation whose detailed account may be found in Ref. \cite{def}, where, after a complete mathematical characterization, it is shown that in a specific context, exotic spinors may be used to compute the neutron magnetic moment without worrier about its internal structure.    

Ending this section by pointing to some possible lines of further exploration is essential. In this regard, it is tempting to mention the results of Ref. \cite{muni} again. Even though the study relies on non-vanishing three-curvature, a polemical issue, it would be very relevant to see whether contributions to the spinorial energy-momentum tensor coming from exotic fermions play a role in the accelerated universe expansion. As previously mentioned, it would be important to contrast twisted with exotic fields, and combining these possibilities in cosmological scenarios would seem fruitful. Besides that, we have discussed in Section III.A. the action of diffeomorphisms on manifolds (allowing for several spin bundles) as a permutation between spin structures. We believe a deeper understanding of this behavior can be achieved using braid groups \cite{aldro}.    

From another perspective, the project started in Refs. \cite{petry,hesshall} continuation is also demanding. In particular, the impact of exotic spinors, if any, in many-body systems in manifolds endowed with nontrivial topology \cite{hub} and in odd topological phases in a superconductivity and noncollinear magnetism-bound setup \cite{z2}. When adapted to (and contrasted with) condensed matter systems, the formalism underlying exotic spinors allows for a better interpretation, a characteristic highly desired in any physical branch of research. As a last remark on possible investigation directions, we call attention to the fact that the studied formal aspects are almost directly transferable to spin $3/2$ fields. Nevertheless, there is a caveat in studying the physical effects of exotic Rarita-Schwinger fields' counterparts \cite{rar}.    

\section*{Acknowledgments}

The author thanks CNPq (grant No. 307641/2022-8) for financial support. It is a pleasure to thank Prof. Roldao da Rocha for carefully reading the manuscript and stimulating discussions.

\end{document}